\newcommand{\rc}{r}
\newcommand{\tauAIPW}[1]{\tau^{\,\mathtt{AIPW}}(X_{S_{#1}};P)}
\newcommand{\tauAIPWobs}[1]{\wh\tau^{\,\mathtt{AIPW}}(X_{S_{#1}})}
\newcommand{\nam}{\mathtt{SR}}
\newcommand{\namp}{\nam+}
\newcommand{\tauRp}{\overline\tau^{\mathtt{R}+}}
\newcommand{\htauRp}{\wh\tau^{\nam+}_{bc}}
\newenvironment{keywords}{\noindent\textit{\textbf{Keywords:}}}{}
\definecolor{myblue}{RGB}{0,80,255}
\def\sP{\mathcal{P}}
\begin{document}

\title{
Which Covariates to Adjust for? Specification-robust Causal Inference in Observational Studies 
}

\author{Aditya Ghosh\\
  \texttt{ghoshadi@stanford.edu}
  \and 
  Dominik Rothenhäusler\\
  \texttt{rdominik@stanford.edu}
}
\date{Department of Statistics, Stanford University\\[1em] \today}
\maketitle


\begin{abstract}
In observational causal inference, domain knowledge often leaves multiple covariate adjustments plausible, yet which sets satisfy ignorability is untestable. Different adjustment sets can yield conflicting estimates of the average treatment effect, and standard remedies (adjusting for their union or intersection, or reporting the union or convex hull of confidence intervals) can fail or produce intervals whose width does not vanish with sample size. We propose a specification-robust procedure that returns a single point estimate and a confidence interval that is valid as long as at least one candidate adjustment set is valid and has width shrinking at the parametric $n^{-1/2}$ rate. Our approach mirrors how trimming and overlap weighting handle overlap violations:~We shift the target to a reweighted population, closest in KL-divergence to the original population, for which credible, specification-robust inference is feasible. We also provide diagnostic plots to assess the population shift and an extension to protect any function of the covariates used for reweighting, similar to calipers in matching. Synthetic and real-data examples demonstrate that our procedure provides substantially tighter confidence intervals than the convex hull while maintaining nominal coverage. 
\end{abstract}

\medskip

\begin{keywords}
    Causal inference, covariate adjustment, ignorability, model misspecification, multiverse analysis, observational studies, reweighted population, robustness.
\end{keywords}

\doparttoc 
\faketableofcontents 

\allowdisplaybreaks

\section{Introduction}\label{sec:intro}

In observational studies, covariate adjustment is a standard approach for removing confounding when estimating the average treatment effect (ATE). Valid inference typically relies on two identification conditions: Ignorability (no unmeasured confounding given the chosen covariates) and overlap (positivity). However, domain knowledge often admits several plausible adjustment sets, with little to no empirical basis to choose among them since ignorability is untestable \citep{Holland1986,Pearl2009,ImbensRubin2015}. Different covariate adjustments can yield conflicting estimates---a recurring theme in multiverse and stability analyses \citep{Steegen2016,Yu2013stability,Yu2020veridical}.

A prominent example is the debate on digital media and adolescent well-being: In a large re-analysis of more than 250,000 adolescents, \citet{orben2019association} report estimates ranging from harmful to near-zero depending on analytic choices, such as covariate adjustment, exposure and outcome definitions, and missing-data handling. One source of disagreement, for instance, concerns the causal role of the covariates grades and enjoyment of school, whose temporal ordering relative to the exposure is unclear. Different assumptions about their causal role lead to different adjustment sets, which produce conflicting estimates.
When ancestral relationships are known, the disjunctive cause criterion of \citet{VanderWeeleShpitser2011} resolves this ambiguity without requiring the full causal graph. In many applications, however, these ancestral roles can be uncertain, and which adjustment sets satisfy ignorability remains an open question. 
Our work addresses this complementary setting of genuine specification uncertainty, where practitioners have identified multiple candidate adjustment sets but cannot determine which lead to valid inference.

\begin{table}[t]
\centering\renewcommand{\arraystretch}{1}
\begin{tabular}{@{}lcccc|cccc@{}}
\toprule
\multirow{2}{*}{}  & \multicolumn{2}{c}{Adjust for $\{X_1\}$} & \multicolumn{2}{c}{Adjust for $\{X_1,X_2\}$} & \multicolumn{2}{c}{{Convex hull}} & \multicolumn{2}{c}{{Proposed method}}\\ 
\cmidrule(lr){2-3} \cmidrule(lr){4-5} \cmidrule(lr){6-7} \cmidrule(lr){8-9}
\textbf{ } & {coverage} & {width} & {coverage} & {width} & {coverage} & {width}& {coverage} & {width} \\ \midrule
\cref{example-1} & $\bad{0.0\%}$  & $0.646$ & $\good{96.0\%}$ & $0.392$ & $\good{97.8\%}$ & $2.851$ & $\good{95.2\%}$ & $\better{1.247}$ \\
\cref{example-2} & $\good{94.6\%}$ & $1.121$  & $\bad{0.0\%}$ & $0.722$  & $\good{97.1\%}$ & $3.086$ & $\good{95.9\%}$ & $\better{1.290}$ \\
\bottomrule
\end{tabular}
\caption{
Empirical coverage and average width of 95\% CIs for the ATE 
in \cref{example-1,example-2}, using linear regression with the two adjustment sets  $\{X_1\}$ and $\{X_1, X_2\}$.
The first two columns show that the choice of adjustment set can critically affect coverage.
The last two columns compare the convex hull interval and the proposed specification-robust CI, both maintaining nominal coverage when at least one candidate adjustment is valid.
}
\label{tab:siml_results}
\end{table}

The existing remedies have limitations:~Adjusting for the union or intersection of candidate adjustment sets can produce biased estimates (\cref{tab:siml_results}). Averaging estimates, as in \citet{orben2019association}, conflates substantive disagreement with the absence of an effect.~Another standard practice is to only adjust for the pre-treatment covariates; but this can also be problematic in certain settings \citep{DingMiratrix2015}; see also \cref{example-2}  (\cref{tab:siml_results}). On the other hand, the convex hull (or the union) of the confidence intervals maintains nominal coverage when at least one of the adjustments is valid, but its width generally does not vanish with sample size---leaving the ATE partially identified up to a constant order.

When the challenge is overlap rather than ignorability, the field already accepts a pragmatic solution: Redefine the target population to where inference is credible and precise. 
For instance, trimming removes units with poor overlap \citep{Crump2009}; overlap weighting downweights such units, tilting the estimand toward regions of better comparability
\citep{Li2018}. This strategy also
 applies more broadly:~Matching focuses on comparable units \citep{Rosenbaum1983}; the local average treatment effect (LATE) restricts attention to compliers \citep{ImbensAngrist1994}. In all of these cases, researchers accept a modified estimand in exchange for identification and precision, while keeping interpretation transparent through an explicit description of the population shift.
 
We apply the same principle to handle  uncertainty in regression adjustment. Point identification on the original population requires resolving which adjustment set satisfies ignorability, which is often impossible. We shift the target to a reweighted population as close as possible to the original population (in KL-divergence) for which all candidate estimators agree; see \cref{fig:hist_rewt2} for an illustration. This reweighting, driven entirely by the measured covariates, makes the population shift transparent and enables valid, precise inference.

\paragraph{Our contributions.}

\begin{itemize}
    \item  We develop a specification-robust procedure that returns a single point estimate and a confidence interval that is asymptotically valid as long as at least one candidate adjustment set satisfies ignorability, without requiring the knowledge of which one(s).

    \item The width of our confidence interval shrinks at the parametric $n^{-1/2}$ rate, unlike the naive convex hull interval whose width remains of constant order.

    \item We propose an extension that protects pre-specified functions of the covariates---similar to using calipers in matching \citep{Rosenbaum-Rubin-matching}---ensuring that the reweighted population preserves key features of the original population deemed important by the researcher.
\end{itemize}

  Additionally, we recommend diagnostic plots, such as \cref{fig:hist_rewt2}, that compare the reweighted population with the original population in terms of the distribution of the measured covariates, to assess whether the new target population is of scientific interest. In an empirical application studying the effect of 401(k) eligibility on net financial assets (\cref{sec:401k}), our proposed method produces a confidence interval  $\sim 77\%$ narrower than the naive convex hull interval.

\begin{figure}[t]
    \centering
    \includegraphics[width=0.33\linewidth]{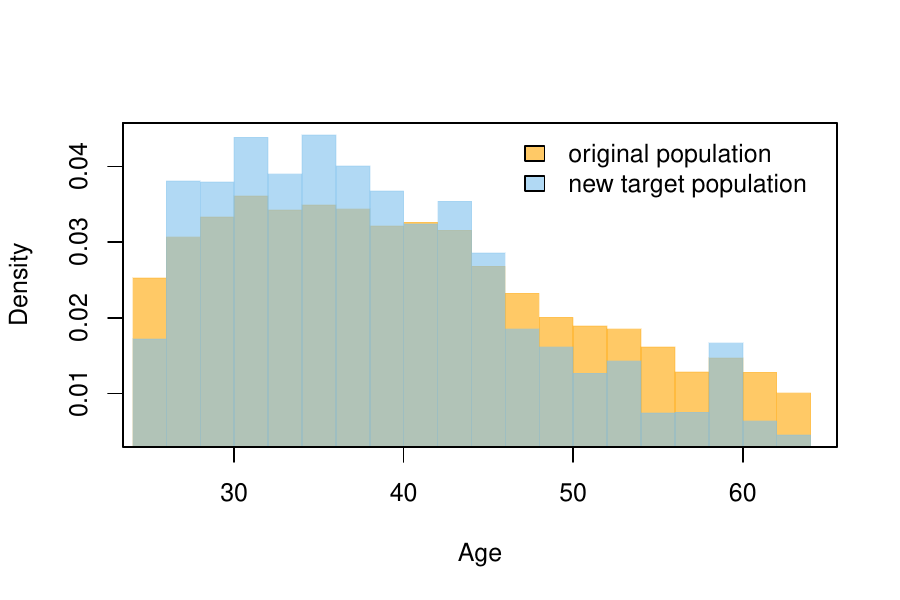}\includegraphics[width=0.33\linewidth]{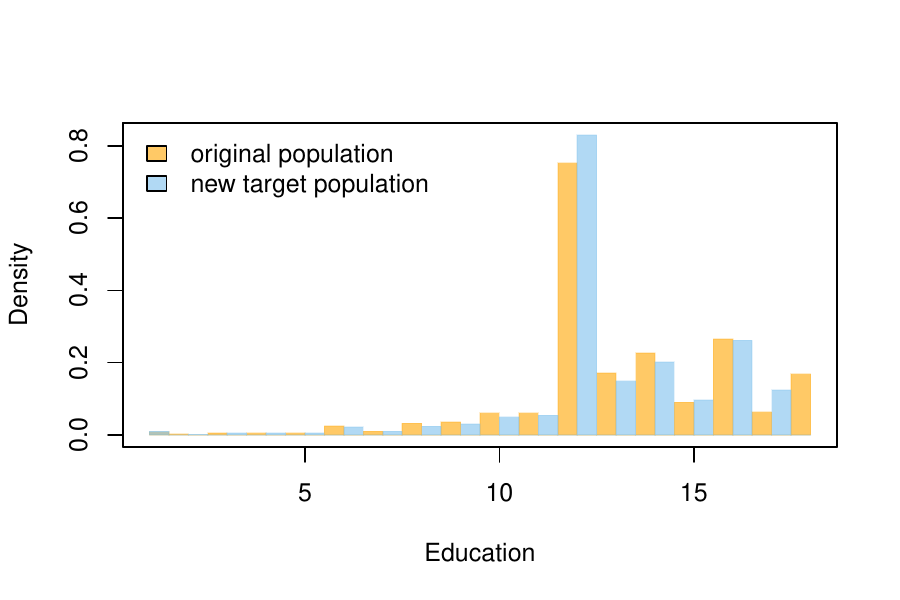}\includegraphics[width=0.33\linewidth]{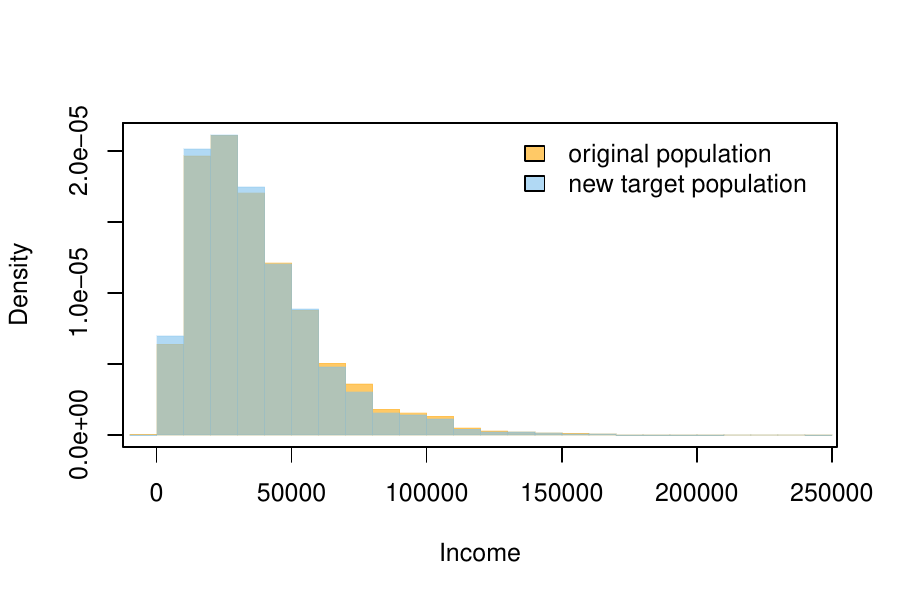}
    \caption{Distribution of the covariates age, education and income for the original population versus the new target population 
    for the 401(k) dataset; see \cref{sec:401k} for details. We find the new target population as close as possible to the original population such that all candidate covariate adjustments identify the ATE of the reweighted population, provided at least one of them satisfies ignorability.
    }
    \label{fig:hist_rewt2}
\end{figure}

\subsection{Related work}

\paragraph{Multiverse analysis and stability.}
Multiverse analysis \citep{Steegen2016} and stability analysis \citep{Yu2013stability,Yu2020veridical} both address sensitivity to analytic choices---the former by reporting all plausible analyses, the latter by assessing whether conclusions persist under reasonable perturbations. These principles have been influential across Bayesian statistics \citep{Box1980,Skene1986}, econometrics \citep{Leamer1983}, and causal inference \citep{Lalonde1986,Rosenbaum1987,Freedman1991statistical,Rosenbaum2010evidence,ImbensRubin2015,Karmakar2019integrating}. We address a similar multiplicity problem, but offer a different solution---rather than reporting the range of estimates across plausible adjustment sets, we collapse it into a single credible estimate.~\citet{kang2024identification} study a parallel problem for instrumental variables, identifying the causal effect when a majority of candidate instruments are valid.

\paragraph{Causal graphs.} Graphical criteria provide systematic approaches for selecting valid adjustment sets from a known causal graph. For example, Pearl's backdoor criterion \citep{Pearl2009} offers a sufficient condition, which \citet{Shpitser2010} generalize to a wider class. \citet{VanderWeeleShpitser2011} introduce the {disjunctive cause criterion}---which selects adjustment sets from ancestral information without requiring the complete graph structure; see also \citet{guo2023confounderselectionobjectivesapproaches} for an overview of such criteria, and \citet{GuoEmaRotnitzky2023} for efficiency considerations when multiple \emph{valid} sets are available. Another approach is to learn the graph via causal discovery and conduct post-selection inference \citep{chang2025postselectioninferencecausaleffects}. These tools can help guide adjustment choices, but uncertainty can remain when multiple plausible graphs are consistent with available knowledge. Our work addresses this complementary issue: taking candidate adjustment sets as given—whether from graphical reasoning, expert judgment, causal discovery, or prior literature—our method provides valid inference as long as at least one  adjustment set is valid, without requiring us to identify which one(s).

\paragraph{Sensitivity analysis.} 
Sensitivity analysis methods assess the robustness of causal estimates to ignorability violations, typically through parameters that quantify the degree of unmeasured confounding. A variety of such methods have been introduced, including \citet{cornfield1959smoking,Rosenbaum1983,Robins1999,Kenneth2000,Rosenbaum2002,Imbens2003sensitivity,brumback2004sensitivity,Imai2010,Hosman2010,vanderweele2011,Blackwell2014,dorie2016flexible,CinelliHazlett2020,Oster2019,Zhao2019,Franks2020,DornGuo2023,kang2024identification,DornGuoKallus2025}, among others. All of these methods take one adjustment set as given and probe how far ignorability can be violated before conclusions change. We address a different challenge:~Rather than perturbing a single specification, we tackle uncertainty about which adjustment set is valid in the first place.

\paragraph{Reweighting methods.}
Reweighting methods have been a cornerstone of causal inference, from inverse probability weighting \citep{HorvitzThompson1952} and its stabilizations via outcome modeling \citep{Robins1994}, penalization \citep{DevilleSarndal} and entropy balancing \citep{Hainmueller2012}, to overlap weighting \citep{Li2018} and weighted instrumental variable methods \citep{Small_et_al_2017,rakshit2024localeffectscontinuousinstruments}. While these methods reweight to improve estimation under a single, assumed-valid adjustment set, our approach reweights to reconcile estimation across multiple adjustment sets, assuming at least one is valid.

\section{Point identification via population reweighting}\label{new-sec:pop-level}

\subsection{Problem setup and notation}\label{sec:notation}

Assume that we observe data $D_i=(Y_i,\, A_i,\, X_i)$, $i=1,\dots,n$, where $Y_i$ is the response, $A_i\in\{0,1\}$ is the binary treatment indicator, and $X_i\in\R^p$ are the covariates for the $i$-th unit. Denote by $Y_i(0)$ and $Y_i(1)$ the potential outcomes under control and treatment, respectively \citep{Neyman1923,Rubin1974,Rubin1977,ImbensRubin2015}, and assume that $Y_i=Y_i(A_i)$ (i.e., SUTVA holds).  
\begin{assumption}\label{assump-iid}
$\{(Y_i(0),\,Y_i(1),\,\trt_i,\, X_i)\}_{1\le i\le n}$ are independent observations from an unknown distribution $P$. Denote by $(Y(0),\, Y(1),\, A,\, X)$ a generic observation from $P$.
\end{assumption}

Our parameter of interest is the average treatment effect (ATE) defined as
$$\overline\tau:=\E_P[Y(1)-Y(0)].$$ 
For any subset $S\subseteq \{1,2,\dots,p\}$ we use $X_S$ to denote the covariates $\{X_{s}:s\in S\}$.

\begin{definition}\label{def:valid-adj-set}
    Call an adjustment set $S$ \emph{valid} if it satisfies the ignorability condition:~The treatment is independent of the potential outcomes when we condition on the covariates in that set, i.e.,
    $${A\indep Y(a)\, \mid \, X_S,\quad a=0,1}.$$
We say that an adjustment set $S$ is \emph{invalid} if it does not satisfy the above. 
\end{definition}

An invalid adjustment set can compromise inference for two broad reasons. First, it can leave out an important confounder (\cref{example-1}). Second, and more subtly, including certain variables can induce conditional dependence between the treatment and potential outcomes (\cref{example-2}). When the causal graph is known, graphical criteria provide a complete characterization of valid adjustment sets; see \citet{Pearl2009} for a book-level treatment. Our focus is on settings where such knowledge is unavailable or contested. Concretely, suppose that we have $K$ candidate adjustment sets $S_1,\dots,S_K$ that satisfy the following condition.

\begin{assumption}\label{assump-at-least-one-is-valid}
    At least one of the adjustment sets is valid (but we do not know which one(s)). Assume further that the intersection $S_{\cap}$ of the adjustment sets is non-empty and satisfies the strict overlap condition: $\eta\le P(A=1\mid X_{S_\cap})\le 1-\eta $ for some $\eta\in (0,1)$. 
\end{assumption}

Even when ignorability is violated, many off-the-shelf CATE estimators can estimate the predictive contrasts:
\begin{equation}\label{def:contrasts}
 \tau(X_{S_k}):=\E_P[Y\mid \trt=1, X_{S_k}]-\E_P[Y\mid \trt=0,X_{S_k}].
\end{equation}
When the adjustment set $S_k$ is {valid}, $\tau(X_{S_k})$ coincides with the conditional average treatment effect (CATE) and its expectation coincides with the ATE $\overline{\tau}$. However, for an invalid adjustment set $S_k$, the estimand $\tau(X_{S_k})$ does not have a causal interpretation, and its expectation can differ arbitrarily from $\overline{\tau}$. 

As discussed in \cref{sec:intro}, the naive approach of taking the convex hull of confidence intervals across all candidate adjustment sets provides a valid confidence interval for $\overline{\tau}$ under \cref{assump-at-least-one-is-valid}. However, this interval shrinks to the convex hull of the points $\{\E_P[\tau(X_{S_k})]\}_{k=1}^K$ as the sample size grows to infinity, leaving $\overline\tau$ only partially identified up to a constant order. 
Other naive approaches include adjusting for the union or the intersection of the available adjustment sets, or only the pre-treatment covariates; \cref{tab:siml_results} illustrates how these approaches might not always yield valid confidence intervals for the ATE.

\subsection{Our specification-robust approach}\label{sec:intro-AR}

\begin{figure}
    \centering
\includegraphics[width=0.49\linewidth]{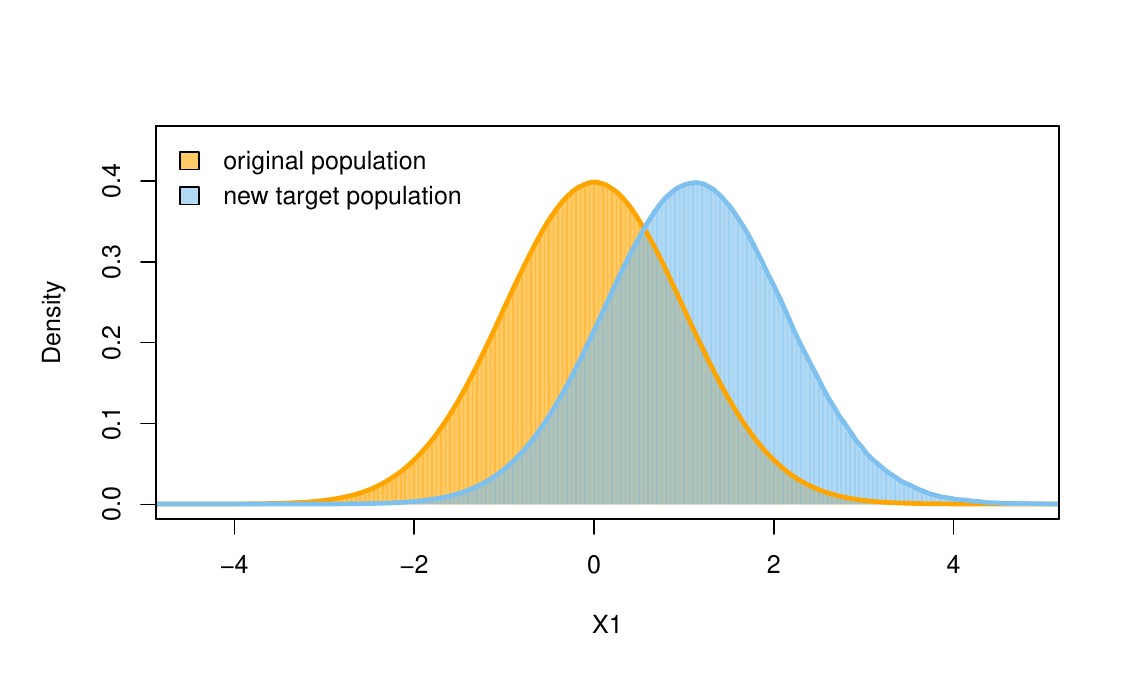}\hspace{1mm}
    \includegraphics[width=0.49\linewidth]{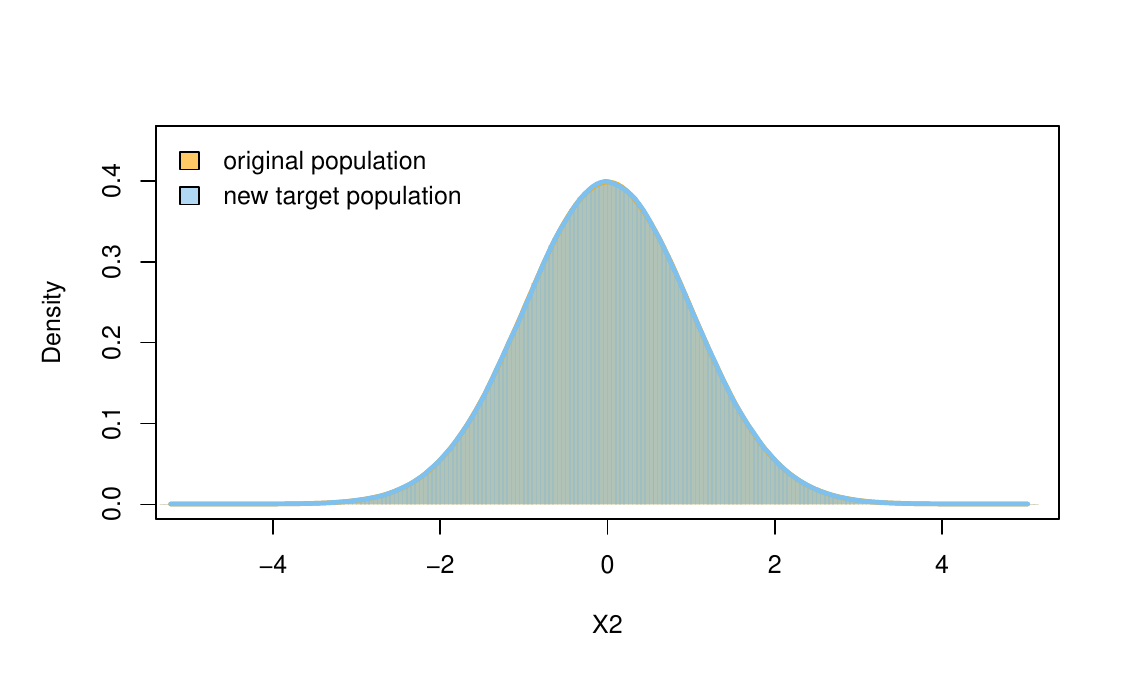}
    \caption{Distribution of the covariates under the original population ($P$) and the new target population ($w^*P$) for \cref{example-1}. We find the reweighted population $w^* P$ as close as possible to the original population $P$ such that all the regression adjustments identify the ATE of the population $w^* P$. In this example, $w^*$ is only a function of $X_1$, and thus $X_2$, being independent of $X_1$, retains the same distribution even after reweighting.}
    \label{fig:hist_rewt1} 
\end{figure}

Our strategy is to find the population closest to the original for which all candidate adjustment sets target the same estimand. We reweight the covariate distribution while leaving the conditional distribution of potential outcomes unchanged, producing a new population $w^*P = w^* P_X \cdot P_{Y(1),\,Y(0),\,\trt\mid X}$ (see \cref{fig:hist_rewt1,fig:hist_rewt2}). Here $w^*P$ denotes the population reweighted by $w^*$---the likelihood ratio of the new target population with respect to the original. We choose $w^*$ to minimize the Kullback--Leibler (KL) divergence from $P$ (so the reweighted population stays as close as possible to the original) subject to the constraint that all $K$ candidate estimands agree:\footnote{All twice continuously differentiable $f$-divergences share the same second-order Taylor expansion around unity, so for small perturbations the choice of divergence is inconsequential. We adopt KL because it yields tractable weight expressions that facilitate the bias-correction needed to achieve the parametric rate (\cref{sec:theory}).}
\begin{equation}\label{adj-sets-popln-opt}
\begin{split}    w^*=&\argmin_{w=w(X_{S_\cap})}\dkl(wP\,\|\, P) \\ &\text{subject to}\ \ \E_P[w(X_{S_\cap})\tau(X_{S_1})]=\cdots=\E_P[w(X_{S_\cap})\tau(X_{S_K})].
\end{split}
\end{equation}
A key design choice is that the transfer weights $w^*$ depends only on the common covariates $X_{S_\cap}$, i.e., those shared by all candidate adjustment sets. 
By construction, all candidate adjustment sets now target a single, shared estimand for the reweighted population:
\begin{equation}\label{eq:def-tauR}
    \tauR := \E_{w^*P}[\tau(X_{S_1})] =\E_{w^*P}[\tau(X_{S_2})]=\cdots=\E_{w^*P}[\tau(X_{S_K})],
\end{equation}
where $\tau(X_{S_k})$ is as defined in \eqref{def:contrasts}. Although some of the individual contrasts $\tau(X_{S_k})$ may lack a causal interpretation (when $S_k$ does not satisfy ignorability), the following result shows that the shared estimand $\tauR$ defined in \eqref{eq:def-tauR} has a causal interpretation under \cref{assump-at-least-one-is-valid}.

\begin{lemma}[Interpretation of the reweighted estimand]\label{propo:reweighted-estimand-makes-sense}
    Suppose that the adjustment sets $S_1,\dots,S_K$ satisfy \cref{assump-at-least-one-is-valid} and $w^*$ is the weighting function that solves the optimization problem \eqref{adj-sets-popln-opt}. Then, for each $k=1,2,\dots,K$,  
    $$\tauR=\E_P[w^*(X_{S_\cap})\tau(X_{S_k})]=\E_{w^*P}[Y(1)-Y(0)].$$
\end{lemma}

\cref{propo:reweighted-estimand-makes-sense} emphasizes that by solving \eqref{adj-sets-popln-opt} we not only find a reweighted population close to the original population for which all candidate adjustment sets mechanically target the same quantity, but the quantity they identify also has a causal interpretation:~It is the ATE of the reweighted population. 
We further elaborate on the interpretation and practical usefulness of the reweighted population for real-world applications in \cref{sec:practical}. We develop the supporting theory, from solving \eqref{adj-sets-popln-opt} empirically to constructing specification-robust confidence intervals, in the following section.

\section{Theory for estimation and inference}\label{sec:theory}

Given $K$ adjustment sets $S_1,\dots,S_K$ satisfying \cref{assump-at-least-one-is-valid}, our goal is to estimate (and draw inference on) the ATE $\tauR$ of the reweighted population as introduced in \cref{new-sec:pop-level}. We proceed in two steps. In \cref{sec:transfer-weights-theory}, we characterize the population solution to \eqref{adj-sets-popln-opt} and use this to construct empirical transfer weights (\cref{algo:compute-reweighting}).
Then, in \cref{sec:est-and-infer}, we construct an estimator of $\tauR$
based on augmented inverse probability weighting (AIPW) applied to each adjustment set, with an additional bias-correction to account for the uncertainty in estimating the transfer weights. The proposed estimator is asymptotically normal at the $n^{-1/2}$ rate (\cref{thm:matching.adj.sets}), using which we obtain specification-robust confidence intervals (\cref{cor:CI-from-Theorem-1}).

\subsection{Transfer weights via exponential tilting}\label{sec:transfer-weights-theory}

The optimization problem \eqref{adj-sets-popln-opt} might initially appear as  challenging since it involves optimization over an infinite-dimensional space of reweightings. However, recognizing \eqref{adj-sets-popln-opt} as an I-projection onto a moment-constrained set, we can apply classical results from information theory \citep{Csiszar1975,CoverThomas2005MaximumEntropy} to reduce it to a finite dimensional convex optimization problem via exponential tilting. We state the relevant result tailored to our setting in \cref{propo:exist-n-unique}, which relies on the following sufficient conditions for the existence and uniqueness of a solution. 

\begin{assumption}\label{assump:heterogeneity}
    Define differences-in-contrasts\footnote{The choice of reference index does not affect the reweighted population or the resulting estimand, since the constraint set $\{\E_P[\tau(X_{S_\ell})-\tau(X_{S_k})]=0, k=2,\dots,K\}$ spans the same affine subspace regardless of which index $\ell$ is designated as the reference. 
    } $\Delta\tau_k(X):=\tau(X_{S_1})-\tau(X_{S_k})$ where the contrasts $\tau(X_{S_k})$ are defined in \eqref{def:contrasts}, and denote their conditional mean given the common covariates as
    \begin{equation}
        \label{first-def-of-g}
    g(X_{S_\cap}):=\E_P[\Delta\tau(X)\mid X_{S_\cap}],
    \end{equation} where $\Delta\tau$ denotes the vector $(\Delta\tau_2,\dots,\Delta\tau_K)^\top$. Assume that $g(X_{S_\cap})$ has finite moment generating function in an open neighborhood of the origin: $\E_P[\exp(\rc\|g(X_{S_\cap})\|)]<\infty$ for some $\rc\in(0,\infty)$. Moreover\footnote{A sufficient condition for strict convexity and coercivity is that $g(X_{S_\cap})$ takes all possible sign patterns with positive probability, i.e., $P(\sgn g(X_{S_\cap})=s)>0$ for any $s\in\{-1,+1\}^{K-1}$; see 
    \cref{lemma:heterogeneity-old} 
    for details.  Coercivity ensures finiteness of the solution to \eqref{eq:def-lambda-star} while strict convexity ensures the uniqueness. 
}, assume that $\E_P[g(X_{S_\cap})\,g(X_{S_\cap})^\top]\succ 0$ and that the map $\psi_P(\lambda):=\E_P[\exp(g(X_{S_\cap})^\top\lambda)]$ is coercive, i.e., $\psi_P(\lambda)\to \infty$ as $\|\lambda\|\to \infty$. (Here $\|\cdot\|$ denotes the Euclidean norm.)
\end{assumption}

We note that \cref{assump:heterogeneity} is essentially a heterogeneity assumption on  how the contrasts vary with the common covariates:~It rules out cases where, regardless of how we reweight based on the common covariates, the contrasts cannot be aligned (e.g., when one or more of the differences-in-contrasts are always positive or always negative), which  prevents the existence of a reweighted population on which the estimands match as in \eqref{adj-sets-popln-opt}; see \cref{sec:feasibility} for its practical implications and possible remedies.  The following result states how \cref{assump:heterogeneity} implies the existence and uniqueness of the solution to \eqref{adj-sets-popln-opt} and also reduces it to a finite-dimensional convex optimization problem. 

\begin{lemma}[Existence and uniqueness of the transfer weights $w^*$]\label{propo:exist-n-unique}
Under \cref{assump:heterogeneity}, there exists a unique $\lambda^*\in\R^{K-1}$ such that 
\begin{equation}\label{eq:def-lambda-star}
    \lambda^* =\argmin_{\lambda\in \R^{K-1}}\ \E_P \left[\exp\left(g(X_{S_\cap})^\top \lambda\right)\right].
\end{equation}
 Furthermore, the solution to problem \eqref{adj-sets-popln-opt} is unique ($P$-a.s.) and is given by
    $$w^*(X_{S_\cap})={\exp\left(g(X_{S_\cap})^\top\lambda^*\right)}\;\Big/\;{\E_P \left[\exp\left(g(X_{S_\cap})^\top\lambda^*\right)\right]}.$$
\end{lemma}

\paragraph{Estimating the transfer weights.}
In light of \cref{propo:exist-n-unique}, we can  solve \eqref{adj-sets-popln-opt} empirically using numerical convex optimization (see \citet{BoydVandenberghe2004} for a book-level treatment). We describe this procedure in \cref{algo:compute-reweighting}. This (meta-)algorithm uses off-the-shelf CATE estimators which estimate the contrasts $\tau(X_{S_k})$ in the absence of ignorability (e.g., generalized random forests; \citet{Wager2018,Athey2019grf}) and returns estimated transfer weights for each unit. 

\begin{algorithm}
    \caption{
    Finding transfer weights that empirically solve the optimization problem \eqref{adj-sets-popln-opt}.}
    \label{algo:compute-reweighting} 
    \vspace{2mm}
   \begin{algorithmic}[1] 
    \Require{Data $(Y_i,\trt_i,X_i)_{i=1}^n$; $K\ge 2$ adjustment sets $S_1,\dots,S_K$ with non-empty intersection $S_\cap$.} 
    \vspace{2mm}
    \Ensure{Transfer weights $\wh{w}(X_{i,S_\cap})$ for each unit. 
    }
        \vspace{2mm}

\State{Use any off-the-shelf estimator $\wh{\mu}_a(X_{S_k})$ of $\mu_a(X_{S_k})=\E_P[Y\mid A = a, X_{S_k}]$ and estimate the contrasts $\tau(X_{S_k})=\mu_1(X_{S_k})-\mu_0(X_{S_k})$ using $\wh{\tau}(X_{S_k})=\wh\mu_1(X_{S_k})-\wh\mu_0(X_{S_k})$.
}

\State{Compute $\Delta\wh\tau_k(X)=\wh\tau(X_{S_1})-\wh\tau(X_{S_k})$ and use another baseline regressor, e.g., random forest \citep{Breiman2001}, to obtain an estimator $\wh{g}(X_{S_\cap})$ of  $\E_P[\Delta\wh\tau(X)\mid X_{S_\cap}]$. 
}

\State{Use any standard convex optimization routine to numerically solve the following  problem:
$$\wh{\lambda}:=\argmin_{\lambda\in \R^{K-1}}\; \P_n \left[\exp\left(\wh{g}(X_{S_\cap})^\top \lambda\right)\right]=\argmin_{\lambda\in \R^{K-1}}\; \frac1n\sum_{i=1}^n \exp\left(\wh{g}(X_{i,S_\cap})^\top \lambda\right).$$
}

\State{Obtain transfer weights for each unit: $$\wh{w}(X_{i,S_\cap}):={\exp\left(\wh{g}(X_{i,{S_\cap}})^\top\, \wh{\lambda}\right)}\;\Big/\;{\P_n \left[\exp\left(\wh{g}(X_{S_\cap})^\top\, \wh{\lambda}\right)\right]}.$$
}
\end{algorithmic}
\end{algorithm}

\vspace{-6mm}

\subsection{Specification-robust estimation and inference}
\label{sec:est-and-infer}

With the transfer weights $\wh{w}(X_{i,S_\cap})$ obtained from \cref{algo:compute-reweighting}, we can estimate the ATE $\tauR$ of the reweighted population using the following estimators:
\begin{equation}
    \label{def-naive-rewt-est}
    \wh{\tau}_k^{\,\mathtt{R}} := \P_n \left[\wh{w}(X_{S_\cap})\,\wh\tau(X_{S_k})\right],\quad k=1,\dots,K,
\end{equation}
where $\P_n [\psi(D)]$ denotes the empirical mean $n^{-1}\sum_{i=1}^n \psi(D_i)$ for any (measurable) function $\psi$ of a generic observation $D=(Y,A,X)$.
Since all of these estimators $\wh\tau^\mathtt{R}_k$ target the same estimand $\tauR$, we can aggregate them via an affine combination and report a single point estimate:\begin{equation}\label{AR-with-generic-nu}
    \wh\tau^{\nam}(\nu)=\sum_{k=1}^K\nu_k \wh\tau_k^\mathtt{R},\quad\text{where}\quad \sum_{k=1}^K\nu_k=1.
\end{equation}
In the special case where the baseline estimators $\wh{\tau}(X_{S_k})$ are based on parametric models (at least one well-specified), the above estimator achieves fast ($n^{-1/2}$) rate of convergence for any choice of the affine weights $\nu$ and thus we can choose $\nu$ that minimizes the asymptotic variance. We analyze this special case in \cref{sec:lm-based}. 

On the other hand, flexible nonparametric contrast estimators $\wh\tau(X_{S_k})$ might have a slow convergence rate, which can be inherited by the reweighted estimators $\wh\tau_k^\mathtt{R}$ defined in \eqref{def-naive-rewt-est}. 
 To overcome this, we rely on a bias-correction motivated by the augmented inverse probability weighted (AIPW) estimator \citep{Robins1994,Robins1995semiparametric,Hahn1998,Scharfstein1999adjusting,Chernozhukov2018DML}. 
 First, we replace the generic estimators $\wh\tau(X_{S_k})$ in \eqref{def-naive-rewt-est} with their AIPW versions (Step 2 of \cref{algo:rewt-adj-sets}). This alone does not fully account for the uncertainty in estimating the transfer weights; we handle this by carefully choosing the weights $\nu$ in \eqref{AR-with-generic-nu} and adding a first order correction term; see 
 \cref{algo:rewt-adj-sets} 
 for a complete description. 
The following result states the asymptotic distribution of the bias-corrected estimator; see \cref{proof-of-main-CLT-with-bias-corr} for a proof. 

\begin{algorithm}[t]
    \caption{Bias-corrected specification-robust estimator for the ATE $\tauR$ (with externally fit nonparametric regressors)}
    \label{algo:rewt-adj-sets} 
    \begin{algorithmic}[1] 
    \vspace{2mm}
    \Require Data $(Y_i,\trt_i,X_i)_{i=1}^n$; $K\ge 2$ adjustment sets $S_1,\dots,S_K$ with non-empty intersection $S_\cap$; propensity score estimators $\wh{e}(X_{S_k})$ of $e(X_{S_k})=P(\trt=1\mid X_{S_k})$, conditional response function estimators $\wh\mu_{a}(X_{S_k})$ of $\mu_{a}(X_{S_k})={\E}_P\left[Y\mid \trt=a,X_{S_k}\right]$, and  regressors $\wh{\E}_P[\wh\mu_a(X_{S_k})\mid X_{S_\cap}]$ estimating $\E_P[\wh\mu_a(X_{S_k})\mid X_{S_\cap}]$ for $a=0,1$, and $k=1,\dots,K$.
\vspace{2mm}
    \Ensure An estimator $\wh\tau^\nam_{bc}$ of the ATE $\tauR$ for the reweighted population (see~\eqref{eq:def-tauR} for definition). 
  \vspace{2mm}

\item Compute $\wh\tau(X_{S_k})=\wh\mu_1(X_{S_k})-\wh\mu_0(X_{S_k})$, obtain $\wh{g}(X_{S_\cap}) = \wh\E_P[\Delta\wh\tau(X)\mid X_{S_\cap}]$ and transfer weights $\wh{w}(X_{i,{S_\cap}})$ as in \cref{algo:compute-reweighting}.

\vspace{1mm}

\item Compute the reweighted (AIPW) estimators as $\wh{\tau}_k^{\,\mathtt{R}} := \P_n \left[\wh{w}(X_{S_\cap})\,\wh\tau^\mathtt{AIPW}(X_{S_k})\right]$, where $$\wh\tau^\mathtt{AIPW}(X_{S_k}):=\wh{\tau}(X_{S_k})+\frac{\trt}{\wh{e}(X_{S_k})}(Y-\wh\mu_{1}(X_{S_k}))-\frac{1-\trt}{1-\wh{e}(X_{S_k})}(Y-\wh\mu_{0}(X_{S_k})).$$ 

\item Compute affine weights $\wh{\nu}$ as follows. Set $\wh\nu_1 := 1-\sum_{k=2}^K \wh\nu_{k}$ and $$\wh\nu_{2:K} := \left(\P_n\left[ \wh{w}(X_{S_\cap})\wh{g}(X_{S_\cap})\wh{g}(X_{S_\cap})^\top\right]\right)^{-1}\P_n\left[ \wh{w}(X_{S_\cap})\wh{g}(X_{S_\cap})\left(\wh\tau(X_{S_1})-\wh\tau_1^\mathtt{R} \right)\right].$$

\item With $\Delta\wh\tau^\mathtt{AIPW}_k(X):=\wh{\tau}^\mathtt{AIPW}(X_{S_1})-\wh{\tau}^\mathtt{AIPW}(X_{S_k})$, compute a bias-correction term $B_n$ as follows.
$$B_n:=\wh{\lambda}^\top\,\P_n\left[\wh{w}(X_{S_\cap})\left(
\Delta\wh\tau^\mathtt{AIPW}(X)-\wh{g}(X_{S_\cap})\right)\left(\sum_{k=1}^K \wh\nu_k\, \wh{\E}_P\left[\wh\tau(X_{S_k})\mid X_{S_\cap}\right]-\wh\tau_1^\mathtt{R} \right)\right].$$

\item Combine the bias-corrected estimator as $\wh\tau^\nam_{bc}:=\sum_{k=1}^K \wh\nu_k \wh{\tau}_k^\mathtt{\,R}  + B_n$.
\end{algorithmic}
\end{algorithm}

\begin{theorem}[Asymptotic normality of the specification-robust estimator]\label{thm:matching.adj.sets}
Assume that the propensity score estimators $\wh{e}(X_{S_k})$, conditional response estimators $\wh\mu_a(X_{S_k})$ and nonparametric regressors $\wh{\E}_P[\wh\mu_a(X_{S_k})\mid X_{S_\cap}]$ used in \cref{algo:rewt-adj-sets} are fit independent of the data $(Y_i,A_i,X_i)_{i=1}^n$, and that \cref{assump-iid,assump-at-least-one-is-valid,assump:heterogeneity} hold true. Assume further that,
    \begin{enumerate}[label=(\arabic*)]
    
    \item (rates of nonparametric regressors) for some $q>2$, and for each $k=1,\dots,K$ and $a=0,1$, 
   \begin{align*}
      & \|\wh{\mu}_a(X_{S_k})-\mu_a(X_{S_k})\|_{L^{q}(\P_n)}=\o(n^{-1/4}), \\[1mm]
      &\|\wh{e}^{-1}(X_{S_k})-e^{-1}(X_{S_k})\|_{L^{q}(\P_n)}=\o(n^{-1/4}),\\[1mm]
     & \|\wh{\E}_P[\wh{\mu}_a(X_{S_k})\,|\, X_{S_\cap}]-\E_P[\wh{\mu}_a(X_{S_k})\,|\, X_{S_\cap}]\|_{L^{q}(\P_n)}=\o(n^{-1/4}).
    \end{align*}
   
    \item (finite exponential moments) The moment generating functions of $\E_P[\mu_a(X_{S_k})\mid X_{S_\cap}]$ and $\wh{g}(X_{S_\cap})$ are finite in a radius $\rc= (2\vee\frac{2q}{q-2})\|\lambda^*\|$ around the origin, and $\var_P[Y\mid A=a,X_{S_\cap}]$ is uniformly bounded for $a=0,1$.
    \item (strict convexity of empirical objective) $\P_n [\wh{g}(X_{S_\cap})\wh{g}(X_{S_\cap})^\top] \succ 0$ almost surely. 
    \end{enumerate}  Then, the specification-robust estimator $\wh\tau^\nam_{bc}$ we compute in \cref{algo:rewt-adj-sets} satisfies
$$\sqrt{n}\left(\wh\tau^\nam_{bc}-\tauR\right)\dto \normal(0, \mathrm{V}_\nam),$$
where the asymptotic variance is defined as
\begin{multline*}
\mathrm{V}_\nam:=\var\left[w^*(X_{S_\cap})\left(\sum_{k=1}^K\nu^*_k\tau^\mathtt{AIPW}(X_{S_k})-\tauR\right.\right.\\
+\left.\left.(\lambda^*)^\top (\Delta\tau^\mathtt{AIPW}(X)-g(X_{S_\cap}))\left(\sum_{k=1}^K \nu^*_k\, \E_P\left[\tau(X_{S_k})\mid X_{S_\cap}\right]-\tauR\right)\right)\right],
\end{multline*}
 with $\Delta\tau^\mathtt{AIPW}_k(X):={\tau}^\mathtt{AIPW}(X_{S_1})-{\tau}^\mathtt{AIPW}(X_{S_k})$ $(k=2,\dots,K)$, the AIPW-versions of the contrasts  defined as $\tau^\mathtt{AIPW}(X):={\tau}(X)+\frac{\trt}{e(X)}(Y-\mu_1(X))-\frac{1-\trt}{1-e(X)}(Y-\mu_0(X))$,
and the affine weights $\nu^*$ are defined as: $\nu^*_1=1-\sum_{k=2}^K\nu^*_k$ and
$$\nu^*_{2:K} := \left(\E_P\left[ w^*(X_{S_\cap})\,g(X_{S_\cap})\,g(X_{S_\cap})^\top\right]\right)^{-1}\E_P\left[ w^*(X_{S_\cap})\,g(X_{S_\cap})\left(\tau(X_{S_1})-\tauR \right)\right].$$

\end{theorem}

\paragraph{Cross-fitting.} To keep the exposition simple, we assumed in the statement of \cref{thm:matching.adj.sets} that the nonparametric regression estimators (taken as input in \cref{algo:rewt-adj-sets}) are all fit independent of the data. In our software implementation we use \emph{cross-fitting} \citep{Chernozhukov2018DML} to achieve this; see \cref{sec:implementation} for details.

\paragraph{Inference.}
When drawing specification-robust inference using the estimator $\wh\tau^\nam_{bc}$ we compute in \cref{algo:rewt-adj-sets}, it is important to note that there are three distinct sources of uncertainty that we should account for:
Uncertainty in the outcome model $\mu_a(x)$, the inverse probability weights that balance the treatment and control groups, and the transfer weights $w^*(X_{S_\cap})$ used for reweighting the source population.~\cref{thm:matching.adj.sets} explicitly quantifies how these three estimation uncertainties combine into the asymptotic variance $\mathrm{V}_\nam$ of our specification-robust estimator.

In light of \cref{thm:matching.adj.sets}, constructing an asymptotically valid $(1-\alpha)$-confidence interval for $\tauR$ based on $\wh\tau^\nam_{bc}$ is straightforward. Take any  consistent estimator $\wh{\mathrm{V}}_\nam$ of the asymptotic variance $\mathrm{V}_\nam$ in \cref{thm:matching.adj.sets} and define
\begin{equation}
    \label{eq:CI-from-AR}
    \mathcal{I}_\alpha:=\left[\wh\tau^\nam_{bc}-\frac{z_{\alpha/2}}{\sqrt{n}}\sqrt{\wh{\mathrm{V}}_\nam},\ \wh\tau^\nam_{bc}+\frac{z_{\alpha/2}}{\sqrt{n}}\sqrt{\wh{\mathrm{V}}_\nam}\right],
\end{equation}
where $z_\alpha$ is the $(1-\alpha)$-th quantile of standard normal distribution. We propose a plug-in estimator in 
\cref{rem:plug-in-est}  
and illustrate the performance of the resulting confidence intervals in \cref{sec:401k}.

\begin{corollary}[Validity of specification-robust CIs]\label{cor:CI-from-Theorem-1}
    For any consistent estimator $\wh{\mathrm{V}}_\nam$ of the asymptotic variance $\mathrm{V}_\nam$ in \cref{thm:matching.adj.sets}, the $(1-\alpha)$ confidence interval $\mathcal{I}_\alpha$ for $\tauR$ constructed in \eqref{eq:CI-from-AR} is (pointwise) asymptotically valid:
       $$\lim_{n\to\infty} P(\tauR\in \mathcal{I}_\alpha)=1-\alpha.$$
\end{corollary}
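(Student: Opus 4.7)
The plan is to derive \cref{cor:CI-from-Theorem-1} as an essentially immediate consequence of \cref{thm:matching.adj.sets} via Slutsky's theorem, with the consistent variance estimator $\wh{\mathrm{V}}_\mathtt{AR}$ handling the studentization. Roughly, once we have a CLT for $\wh{\tau}^\mathtt{AR}$ and a consistent estimator of its asymptotic variance, the coverage statement is textbook.

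Concretely, I would first invoke \cref{thm:matching.adj.sets} to obtain the limit law $\sqrt{n}(\wh{\tau}^\mathtt{AR}-\tauR)\dto \normal(0,\mathrm{V}_\mathtt{AR})$. I would briefly verify that $\mathrm{V}_\mathtt{AR}>0$: if it were zero, then on the reweighted population the random variable inside the $\var[\cdot]$ defining $\mathrm{V}_\mathtt{AR}$ would be almost surely constant, which together with the non-degeneracy $\E[\wh{g}(X_{S_\cap})\wh{g}(X_{S_\cap})^\top]\succ 0$ and the heterogeneity in \cref{assump:heterogeneity} would force a degenerate alignment of the contrasts $\delta_{2:K}(D)$ incompatible with the hypotheses. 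Given $\mathrm{V}_\mathtt{AR}>0$ and the assumed consistency $\wh{\mathrm{V}}_\mathtt{AR}\Pto \mathrm{V}_\mathtt{AR}$, the continuous mapping theorem yields $\sqrt{\wh{\mathrm{V}}_\mathtt{AR}}\Pto \sqrt{\mathrm{V}_\mathtt{AR}}$.

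Next, Slutsky's theorem applied to the ratio gives the studentized limit
$$T_n \,:=\, \frac{\sqrt{n}\,(\wh{\tau}^\mathtt{AR}-\tauR)}{\sqrt{\wh{\mathrm{V}}_\mathtt{AR}}} \,\dto\, \normal(0,1).$$
Rewriting the coverage event, $\tauR\in \mathcal{I}_\alpha$ if and only if $|T_n|\le z_{\alpha/2}$. Since the standard normal CDF is continuous, the portmanteau theorem yields
$$P(\tauR\in \mathcal{I}_\alpha) \,=\, P(|T_n|\le z_{\alpha/2}) \,\longrightarrow\, P(|Z|\le z_{\alpha/2}) \,=\, 1-\alpha,$$
where $Z\sim \normal(0,1)$, which is the desired conclusion.

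There is no serious obstacle here: the only non-routine point is the positivity of $\mathrm{V}_\mathtt{AR}$, needed so that dividing by $\sqrt{\wh{\mathrm{V}}_\mathtt{AR}}$ is asymptotically well defined and the Slutsky step applies. Everything else is a direct, classical consequence of the CLT already established in \cref{thm:matching.adj.sets}.
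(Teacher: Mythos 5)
Your proposal is correct and matches the paper's (implicit) treatment: the paper gives no separate proof of this corollary, regarding it as the standard Slutsky/studentization consequence of \cref{thm:matching.adj.sets} together with consistency of $\wh{\mathrm{V}}_\mathtt{AR}$, which is exactly your argument. Your added remark on verifying $\mathrm{V}_\mathtt{AR}>0$ is a reasonable point the paper glosses over, though your sketch of why degeneracy is ruled out would need to be fleshed out if one wanted it fully rigorous.
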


\paragraph{Uniform validity.} 
The pointwise validity in \cref{cor:CI-from-Theorem-1} already carries some robustness to the unknown data-generating process: The proposed procedure is pointwise valid over the larger class $\sP_\nam=\bigcup_{k=1}^K \sP_k$, where $\sP_k$ collects all distributions under which the adjustment set $S_k$ satisfies ignorability. We show in  \cref{sec:uniform-validity}
that this can be strengthened to uniform validity, ensuring that $\lim_{n\to\infty}\sup_{P\in\sP}|P(\tauR\in \mathcal{I}_\alpha)-(1-\alpha)|= 0$ holds over a restricted class $\sP\subset \sP_\nam$, under uniform analogues of our regularity conditions. 

\section{Practical considerations}\label{sec:practical}

\subsection{Diagnostic plots}\label{sec:diagnostics}

Our procedure shifts the target from the original population to a reweighted population. The key to interpretability is transparent reporting of how the reweighted population differs from the original. To do this in practice, we recommend comparing the covariate distributions under the original and reweighted populations using histogram or density plots as in \cref{fig:hist_rewt1,fig:hist_rewt2} and summary statistics (means, quantiles). With such diagnostic plots, one can assess whether the redefined population remains substantively meaningful.

There are two possible scenarios. When the reweighted population closely resembles the original, as in \cref{fig:hist_rewt2} for our real-data example, the interpretive cost of shifting the target is small and the reweighted ATE might be of direct scientific interest. When the distribution shift is more pronounced, one should assess whether the reweighted population remains a meaningful target for the question at hand. In this case, our framework allows placing additional constraints on the reweighting---requiring, for instance, that pre-specified functions of the covariates retain the same mean as in the original population---so that the new target preserves features of the original population deemed important by the researcher. We describe this extension next.


\subsection{Protecting covariates}\label{sec:protect-covariates}

    While ideally we want the empirical weights to shift the distributions of only one or two covariates substantially (e.g., for our data analysis example in \cref{sec:401k}), this may not always be the case. Here we discuss a more general case where, in addition to the constraints in \eqref{adj-sets-popln-opt}, the researcher wants to protect a function $f(X_{S_\cap})$ of the covariates that define the reweighting (i.e., the covariates common to all adjustment sets). 
Here protecting the function $f(X_{S_\cap})$ of the covariates amounts to preserving its mean under the reweighted population relative to the original.
   Formally, we want to solve the following problem:
    \begin{align*}
    w^*&=\argmin_{w=w(X_{S_\cap})}\dkl(wP\,\|\, P)\\ &\quad\; \text{subject to}\quad \E_P[w(X_{S_\cap})\tau(X_{S_1})]=\cdots=\E_P[w(X_{S_\cap})\tau(X_{S_K})],\numberthis\label{eq:KL-protect}\\[2mm]
   & \qquad \qquad \text{and}\quad \E_P[w(X_{S_\cap})f(X_{S_\cap})]=\E_P[f(X_{S_\cap})].
\end{align*}
    Our specification-robust approach introduced in \cref{new-sec:pop-level,sec:theory} extends naturally to the above problem---all we have to do is to append $f(X)-\E_P[f(X)]$ to the constraint vector $\Delta\tau(X)$ (see~\cref{assump:heterogeneity} for definition),  i.e., replace $\Delta\tau(X)$ with the following:
    \begin{equation}
        \label{eq:new-delta}
        \delta(X):=(\tau(X_{S_1})-\tau(X_{S_2}),\dots,\tau(X_{S_1})-\tau(X_{S_K}), f(X_{S_\cap})-\E_P[f(X_{S_\cap})]).
    \end{equation}
    Similarly, we replace the function $g$ defined in \cref{assump:heterogeneity} with the following:
    \begin{equation}
        \label{eq:new-g}
        g(X_{S_\cap}):=\E_P\left[\delta(X)\mid X_{S_\cap}\right].
    \end{equation}
We can now rewrite the optimization problem \eqref{eq:KL-protect} more compactly as:
$$w^*=\argmin_{w=w(X_{S_\cap})}\dkl(wP\,\|\, P)\quad\; \text{subject to}\quad\E_{wP}[g(X_{S_\cap})]=0,$$
which bears enough structural similarity to our previous formulation \eqref{adj-sets-popln-opt} that 
we can empirically solve the above problem using numerical convex optimization in the same manner as earlier. The bias-correction to obtain parametric ($n^{-1/2}$) rate of convergence requires some modification to incorporate the additional constraints; we refer the reader to 
\cref{algo:rewt-adj-sets-protect-covariates} 
for a complete description of this procedure. We denote by $\htauRp$ the resulting bias-corrected specification-robust estimator. The following result shows that this estimator satisfies asymptotic normality with only minor modifications to \cref{thm:matching.adj.sets}.

\begin{theorem}[Asymptotic normality under additional covariate constraints]
\label{thm:matching.adj.sets-protect-covariates}    
Assume that the conditions of \cref{thm:matching.adj.sets} hold when $g$ is re-defined as in \eqref{eq:new-g} and the differences-in-contrasts vector $\Delta\tau$ is replaced with the constraint vector $\delta$ in \eqref{eq:new-delta}. Then, the specification-robust estimator $\htauRp$ satisfies
$$\sqrt{n}\left(\htauRp-\tauRp\right)\dto \normal(0, \mathrm{V}_{\namp}),$$
where $\tauRp$ is the ATE of the new target population defined as the solution to \eqref{eq:KL-protect}. An explicit expression for the asymptotic variance $\mathrm{V}_{\namp}$ is deferred to 
\cref{supp:constraints}.
\end{theorem}

The above result yields asymptotically valid confidence intervals for the ATE $\tauRp$ of the new target population in the same manner as in \cref{cor:CI-from-Theorem-1} of \cref{thm:matching.adj.sets}; we omit the analogous result here for space.

\subsection{Infeasibility is informative}\label{sec:feasibility}

    When some adjustment sets produce estimates that are drastically different from the rest, the optimization problem \eqref{adj-sets-popln-opt} might have no solution---that is, no reweighted population exists for which all candidate estimators agree. This infeasibility is, in fact, a meaningful finding:~It highlights that the assumptions underlying the candidate adjustment sets might be fundamentally incompatible. In this scenario, our procedure can still be used to identify clusters of adjustment sets that are internally compatible, in the sense that for each cluster we can find a target population close to the original population such that all the adjustment sets \emph{in that cluster} target the ATE of this new target population. This can serve as a starting point for the practitioner to deliberate on which group of assumptions is most credible for the application at hand.

\section{Empirical illustrations}\label{sec:applications}

\subsection{Simulation examples}\label{sec:simulations}

In this section, we use simulation examples to demonstrate the empirical performance of our method and compare it to the naive approach of reporting the convex hull of confidence intervals from each adjustment set. These examples also illustrate how standard remedies can fail:~\cref{example-1} shows that the intersection of candidate adjustment sets need not be valid, while \cref{example-2} shows that the union need not be valid either, and that even a pre-treatment covariate can be a non-confounder whose inclusion leads to invalid inference.

\begin{example}[Two confounders]
\label{example-1} 
Draw $X_1, X_2\iid\normal(0,1)$, $A\mid X_1,X_2\sim\ber(1/(1+e^{5X_1+5X_2}))$, $Y=A\cdot(1 + X_1 - 5X_2) + 4X_2+ \eps_Y$ where $\eps_{Y}\sim\normal(0,1)$. The true ATE is $\overline\tau=1$. Here $\{X_1\}$ is an invalid adjustment set while $\{X_1,X_2\}$ is valid. We apply linear regression with treatment-covariate interactions. For a single replication, the bootstrap confidence intervals are $[-1.853, -1.139]$ for the adjustment set $\{X_1\}$ and $[0.812, 1.207]$ for the adjustment set $\{X_1,X_2\}$. 
\end{example} 

\begin{example}[M-bias]\label{example-2} Draw $U_1,U_2,X_1\iid\normal(0,1)$, $A=\ind{U_1+X_1>0}$, $X_2 =  U_1+U_2$, $Y=A\cdot(1 + X_1 - 6U_2)+5U_2+ \eps_Y$ where $\eps_{Y}\sim\normal(0,1)$. The true ATE is $\overline\tau=1$. Here $\{X_1\}$ is a valid adjustment set while $\{X_1,X_2\}$ is invalid (adjusting for $X_2$ opens a backdoor path and introduces M-bias). We apply linear regression with treatment-covariate interactions. For a single replication, the bootstrap confidence intervals are $[0.897, 2.024]$ for the adjustment set $\{X_1\}$ and $[-1.297, -0.497]$ for the adjustment set $\{X_1,X_2\}$.  
\end{example}

\begin{figure}[t]
\centering

\begin{tikzpicture}[baseline={(X2.base)}, ->, >=Stealth, node distance=1cm, every node/.style={draw, circle, minimum size=10mm}, thick]
\node (X1) {$X_1$};
\node (X2) [below=of X1] {$X_2$};
\node (A) [right=of X1] {$A$};
\node (Y) [below=of A] {$Y$};

\draw (X1) -- (A);
\draw (X2) -- (A);
\draw (X1) -- (Y);
\draw (X2) -- (Y);
\draw (A) -- (Y);
\end{tikzpicture}
\centering
\hspace{2cm}
\centering
\begin{tikzpicture}[baseline={(Y.base)}, ->, >=Stealth, node distance=1cm, every node/.style={draw, circle, minimum size=10mm}, thick]
\node (U1) {$U_1$};
\node (U2) [below=of U1] {$U_2$};
\node (A) [right=of U1, xshift=2cm] {$A$};
\node (X1) [right=of A] {$X_1$};
\node (Y) [below=of A] {$Y$};
\coordinate (midL) at ($(U1)!0.5!(U2)$);
\coordinate (midR) at ($(A)!0.5!(Y)$);
\node (X2) at ($(midL)!0.5!(midR)$) {$X_2$};
\draw (U1) [out=25, in=155] to  (A);
\draw (U1)  --  (X2);
\draw (U2) -- (X2);
\draw (U2) [out=-25, in=-155] to  (Y);
\draw (A) -- (Y);
\draw (X1) -- (A);
\draw (X1) -- (Y);
\end{tikzpicture}

\caption{Illustration of the data-generating mechanisms in \cref{example-1} (left panel) and \cref{example-2} (right panel). In \cref{example-1}, both $X_1$ and $X_2$ are confounders;  adjusting only for $\{X_1\}$ leaves residual confounding from $X_2$. In \cref{example-2}, $X_2$ is a pre-treatment collider (influenced by $U_1$ and $U_2$). Adjusting for $\{X_1\}$ is valid, but adjusting for $\{X_1, X_2\}$ induces $M$-bias by opening the backdoor path $A\leftarrow U_1 \rightarrow X_2 \leftarrow U_2 \rightarrow Y$.}
\label{fig:three-dags}
\end{figure}

For each of these simulation examples, we consider two adjustment sets:~$S_1=\{X_1\}$ and $S_2=\{X_1,X_2\}$. We draw $n=1000$ observations from the respective data generating processes and compute 95\% confidence intervals for the ATE obtained using the interacted linear regression models for both adjustment sets; see \cref{example-1,example-2} for the results from one replication. In particular, we note that in each of these examples the confidence intervals for the two adjustment sets lead to \emph{conflicting inference} about the ATE.

We replicate the above experiments 1000 times and report in \cref{tab:siml_results} the empirical coverage (average proportion of times the resulting confidence interval contains the true ATE) along with the average width (averaged across replications). In both of these examples, the confidence intervals constructed using an invalid adjustment set (which differs across the examples) provide zero coverage, which highlights how consequential the choice of adjustment set is, especially since ignorability is untestable. We apply the proposed specification-robust procedure with linear regression as the method for estimating the contrasts; see 
\cref{sec:lm-based} 
for details.~\cref{tab:siml_results} shows that, while the convex hull provides coverage under the assumption that one of the adjustment sets is valid, our method provides the same with much narrower confidence intervals.


\subsection{Application:~Effect of 401(k) eligibility on financial assets}\label{sec:401k}

Here we demonstrate the proposed specification-robust procedure using a real-world dataset studied by \citet{Abadie2003} and also revisited by \citet{Chernozhukov2018DML}. In this application, the goal is to study the causal effect of 401(k) eligibility (whether 401(k) has been offered to an employee by the employer) on the net financial assets of the employees. \citet{Abadie2003} analyzed this dataset using an instrumental variable approach. However, we follow \citet{Chernozhukov2018DML} and draw inference based on augmented inverse probability weighted (AIPW) estimators.
The focus here is on the uncertainty in not knowing which of the covariates we should adjust for, and the fact that multiple adjustment sets can seem equally plausible.

The dataset includes a variety of covariates, including age, income, education, family size, marital status, home-ownership, and whether the individual is covered by other pension or IRA plans. There is some ambiguity regarding which covariates should be included in the adjustment set. For instance, participation in IRA is \emph{plausibly} affected by 401(k) eligibility, making it potentially a post-treatment variable rather than a confounder. 
 To illustrate our method, we consider an example with four adjustment sets $S_1,\dots,S_4$ constructed in a nested manner: $S_1=$ \{age, education\}, $S_2 = S_1\,\cup$ \{family size, income\}, $S_3 = S_2\, \cup$ \{marital status, two-earner household\}, and $S_4 = S_3 \,\cup $ \{home-ownership, defined pension, participation in IRA\}.
 
 As the baselines for comparison, we construct confidence intervals based on the AIPW estimator for each of the candidate adjustment sets. Our method reweights the population only using the covariates common to all candidate adjustment sets, namely, age and education.
 We compare the distributions of the covariates age, income and education for the original and reweighted populations in \cref{fig:hist_rewt2}. Here the reweighted histograms are visibly close to the original ones, suggesting that the distribution shift is modest in this example. We use generalized random forests (\texttt{grf} package; \citet{Wager2018,Athey2019grf}) to learn the nuisance functions and use cross-fitting to avoid reducing sample size; see 
 \cref{sec:implementation} 
 for further details\footnote{Replication files are available at: \href{https://github.com/ghoshadi/specification-robust-causal}{https://github.com/ghoshadi/specification-robust-causal}.}.
 
 We show in \cref{401(k)-example-m}  the 95\% confidence intervals obtained using each of these adjustment sets (in black), the naive confidence interval formed by the convex hull of the different AIPW confidence intervals (in red) and the confidence interval constructed using our specification-robust approach (in green). The width of our specification-robust confidence interval in this example is about \emph{77\% narrower} compared to the convex hull.

\begin{figure}[t]
    \centering
\begin{tikzpicture}[x=0.7cm, y=3.5cm]
  \draw[<->, thick] (5.8, -0.2) -- (16.8, -0.2);
  \foreach \x in {6,8,10,12,14,16} {
    \draw (\x, -0.18) -- (\x, -0.22) node[below]{\scriptsize \x};
  }

  \draw[line width=2pt, grayred] (6.895634, 0) -- (16.372020, 0);
  \node[anchor=east] at (5.75, 0){\scriptsize Convex hull};

  \draw[line width=2pt] (13.375226, 0.4) -- (16.372043, 0.4);
  \node[anchor=west] at (16.9, 0.4){\scriptsize age, education};

  \draw[line width=2pt] (7.580271, 0.3) -- (10.042839, 0.3);
  \node[anchor=west] at (16.9, 0.3){\scriptsize + family size, income};

  \draw[line width=2pt] (7.791115, 0.2) -- (10.228913, 0.2);
  \node[anchor=west] at (16.9, 0.2){\scriptsize + marital status, two-earner household};

  \draw[line width=2pt] (6.895613, 0.1) -- (9.187526, 0.1);
  \node[anchor=west] at (16.9, 0.1){\scriptsize + home-ownership, defined pension,};
  \node[anchor=west] at (16.9+0.5, 0.0){\scriptsize IRA participation};

  \fill[myblue] (14.873634, 0.4) circle (3pt);
  \fill[myblue] (8.811555, 0.3) circle (3pt);
  \fill[myblue] (9.010014, 0.2) circle (3pt);
  \fill[myblue] (8.041570, 0.1) circle (3pt);

    \fill[white] (14.873634, 0.4) circle (1.25pt);
  \fill[white] (8.811555, 0.3) circle (1.25pt);
  \fill[white] (9.010014, 0.2) circle (1.25pt);
  \fill[white] (8.041570, 0.1) circle (1.25pt);

  \draw[line width=2pt, darkgreen] (6.593781, -0.1) -- (8.776708, -0.1);
  \node[anchor=east] at (5.75, -0.1){\scriptsize Proposed CI};
  \fill[myblue] (7.685244, -0.1) circle (3pt);
  \fill[white] (7.685244, -0.1) circle (1.25pt);
    
  \foreach \y/\lbl in {0.4/AIPW ($S_1$), 0.3/AIPW ($S_2$), 0.2/AIPW ($S_3$), 0.1/AIPW ($S_4$)} {
    \node[anchor=east] at (5.75, \y){\scriptsize \lbl};
  }
\end{tikzpicture}
\caption{95\% CIs for the ATE of 401(k) eligibility on net financial assets (in thousands of US dollars). The black lines show AIPW intervals for four nested adjustment sets, the red line shows the convex hull of the AIPW CIs, and the green line shows the proposed specification-robust CI. The circles indicate corresponding point estimates.}
\label{401(k)-example-m}
\end{figure}

\section{Discussion}\label{sec:conclusion}

In observational studies, researchers often face multiple plausible adjustment sets, none of which can be confirmed as valid using the data alone. This poses a  challenge for causal inference as different adjustment sets may yield conflicting conclusions, and reporting the whole range of estimates does not resolve the ambiguity. Simple remedies such as adjusting for the union or intersection of the candidate adjustment sets, or restricting to pre-treatment covariates, can also fail, as we illustrate in \cref{example-2,example-1}.

We introduce a specification-robust procedure that returns a single point estimate and one confidence interval for the average treatment effect, which are valid as long as at least one candidate adjustment set satisfies ignorability---without requiring knowledge of which one.
Similar to how overlap weighting or trimming handle overlap violations, we resolve specification uncertainty by redefining the target to a population where credible, precise inference is feasible. The key idea is to reweight the population, as close as possible to the original in KL-divergence, such that all candidate estimands agree. After finding these transfer weights, we use augmented inverse probability weighting and additional bias-correction to obtain a single estimator with $\sqrt{n}$-rate of convergence.  We use synthetic and real-data examples to illustrate how our procedure can be a useful alternative to choosing one adjustment set without guaranteed coverage or simply reporting the whole range.

The proposed method has a few limitations worth noting. First, it requires that at least one adjustment set satisfies ignorability; it does not protect against the case where all candidate sets are invalid. Second, there could be settings where the candidate adjustment sets have no covariate in common, in which case our proposed method is not applicable.
Third, the candidate adjustment sets may be fundamentally incompatible---in the sense that no reweighted population exists for which all estimands agree. We discussed in \cref{sec:feasibility} how to handle this infeasibility issue by considering smaller clusters of the candidate adjustment sets. Another limitation is the interpretability of the weights for each unit. In practice, binary weights can be more interpretable, similar to matching. However, extending our framework to binary weights is both analytically and computationally challenging, and beyond our current scope.

It would also be worthwhile to extend the proposed procedure to settings with many candidate adjustment sets, where requiring exact agreement of the estimands may be overly restrictive. In such scenarios, it might be beneficial to relax our constraints by allowing approximate alignment of the estimands, or requiring only a qualitative agreement (e.g., same sign of the causal effect). We leave these directions for future research.

\section{Acknowledgments}

The authors thank Guido Imbens, Sam Pimentel, Peng Ding, and Stefan Wager for insightful discussions, and the participants of the Berkeley-Stanford Joint Colloquium and Stanford Statistics Department retreat for their valuable feedback. Rothenh\"ausler acknowledges financial support from the Dieter Schwarz Foundation, the Dudley Chamber Fund, and the David Huntington Foundation.


\small
\bibliographystyle{apalike}
\bibliography{AR_ref}
\normalsize
\newpage

\appendixpageoff
\appendixtitleoff
\appendix

\numberwithin{equation}{section}
\numberwithin{figure}{section}
\numberwithin{table}{section}
\numberwithin{remark}{section}
\numberwithin{lemma}{section}
\numberwithin{proposition}{section}
\numberwithin{theorem}{section}
\numberwithin{corollary}{section}
\numberwithin{algorithm}{section}
\numberwithin{assumption}{section}

\part{Appendices} 

\numberwithin{equation}{section}
\counterwithin{figure}{section}
\counterwithin{table}{section}
\counterwithin{remark}{section}
\counterwithin{lemma}{section}
\counterwithin{proposition}{section}
\counterwithin{theorem}{section}
\counterwithin{corollary}{section}
\counterwithin{algorithm}{section}
\counterwithin{assumption}{section}
\normalsize




\parttoc 

\newpage

\section{Deferred results}

\subsection{Bounds on the ATE shift}\label{sec:ATE-bounds}

The gap $|\tauR-\overline\tau|$  between the ATE of the new target population and the original ATE is $O(1)$ in general, and this bound is sharp since at least one of the adjustment sets satisfies ignorability (but we cannot test which one). The following lemma provides a global upper bound on this gap; see \cref{proof-global-bound} for a proof.

\begin{lemma}\label{global-bound}
    Suppose that \cref{assump-at-least-one-is-valid,assump-iid,assump:heterogeneity} hold, and that the conditional contrasts $f_k(X_{S_\cap}):=\E_P[\tau(X_{S_k})\mid X_{S_\cap}]$ are $\sigma^2$ sub-Gaussian, i.e., for each $k=1,\dots,K$ and every $t\in\R$, it holds that $$\E_P\left[\exp\left(t\left(f_k(X_{S_\cap})-\overline\tau_k\right)\right)\right]\le \exp\left(\frac12\sigma^2 t^2\right),$$ where $\overline\tau_k:=\E_P[f_k(X_{S_\cap})]=\E_P[\tau(X_{S_k})]$. Then, the difference between the ATE $\tauR$ of the new target population and the original ATE $\tau$ satisfies $$\left|\tauR-\overline\tau\right|\le \sigma\sqrt{2\dkl(w^*P\,\|\,P)}.$$
\end{lemma}

While the above result provides a global upper bound, a natural question is how does the gap $\left|\tauR-\overline\tau\right|$ relate to the disagreements in the estimands $\overline\tau_k=\E_P[\tau(X_{S_k})]$. Define the vector of disagreements as
\begin{equation}
    \label{vec-disagree}
    \Delta:=\E_P[g(X_{S_\cap})]=(\overline\tau_1-\overline\tau_2,\dots,\overline\tau_1-\overline\tau_K)^\top.
\end{equation}
The following lemma provides a local upper bound, and shows that $\left|\tauR-\overline\tau\right|$ goes to zero linearly in $\|\Delta\|$ when $\|\Delta\|\to 0$; see \cref{proof-local-bound} for a proof.

\begin{lemma}\label{local-bound}
    Suppose that \cref{assump-at-least-one-is-valid,assump-iid,assump:heterogeneity} hold true, and define the vector $\Delta$ of disagreements as in \eqref{vec-disagree}. Then, as $\|\Delta\|\to 0$,
    $$\tauR-\overline\tau=-\Delta^\top \Sigma_g^{-1}c_j+O(\|\Delta\|^2),$$
    where $\Sigma_g:=\var_P(g(X_{S_\cap}))$, and $c_j:=\cov_P(g(X_{S_\cap}), f_j(X_{S_\cap}))$ with $f_j(X_{S_\cap}):=\E_P[\tau(X_{S_j})\mid X_{S_\cap}]$ and $S_j$ is a valid adjustment set (exists under \cref{assump-at-least-one-is-valid}). When the conditional contrasts $\{f_k(X_{S_\cap})\}_{k=1}^K$ are jointly Gaussian, the above holds exactly (i.e., without the remainder term).
\end{lemma}

The above result also highlights that the global upper bound in \cref{global-bound} is sharp upto multiplicative constants (achieved when the conditional contrasts are jointly Gaussian).

\subsection{Further details on protecting covariates}\label{supp:constraints}

In \cref{sec:protect-covariates} of the main paper, we discuss an extension of our specification-robust procedure which enables the researcher to protect (known) functions of the common covariates, with the population-level optimization problem as described in \eqref{eq:KL-protect}.
In this section, we provide the relevant algorithm (a modification of \cref{algo:compute-reweighting,algo:rewt-adj-sets} combined) that solves the optimization problem \eqref{eq:KL-protect} empirically, and computes the bias-corrected specification-robust estimator $\htauRp$ incorporating the additional constraints.  \cref{thm:matching.adj.sets-protect-covariates} of the main paper states the asymptotic normality of this estimator but defers an explicit expression for the asymptotic variance, which we provide in the following remark; see \cref{proof:thm:matching.adj.sets-protect-covariates} for a proof.

\begin{algorithm}[t]
    \caption{Specification-robust estimator for the ATE with additional covariate constraints (using externally fit nonparametric regressors)}
    \label{algo:rewt-adj-sets-protect-covariates} 
    \vspace{2mm}
    \begin{algorithmic}[1] 
    \Require Data $(Y_i,\trt_i,X_i)_{i=1}^n$; $K\ge 2$ adjustment sets $S_1,\dots,S_K$ with non-empty intersection $S_\cap$;
    an $\R^d$-valued function $f$ of the  covariates that we want to protect as in \eqref{eq:KL-protect}; propensity score estimators $\wh{e}(X_{S_k})$ of $e(X_{S_k})=P(\trt=1\mid X_{S_k})$, conditional response function estimators $\wh\mu_{a}(X_{S_k})$ of $\mu_{a}(X_{S_k})={\E}_P[Y\mid \trt=a,X_{S_k}]$, and  regressors $\wh{\E}[\wh\mu_a(X_{S_k})\mid X_{S_\cap}]$ estimating $\E_P[\wh\mu_a(X_{S_k})\mid X_{S_\cap}]$ for $a=0,1$, and $k=1,\dots,K$.
    \vspace{2mm}
    \Ensure An estimator $\htauRp$ of the ATE $\tauRp$ of the reweighted population $w^*P$ defined via \eqref{eq:KL-protect}.
        \vspace{2mm}
\item Set $\wh{\tau}(X_{S_k})=\wh\mu_1(X_{S_k})-\wh\mu_0(X_{S_k})$ and estimate the constraint vector 
$\delta(X)$ defined in \eqref{eq:new-delta} using $$\wh{\delta}(X):=\left(\wh{\tau}(X_{S_1})-\wh{\tau}(X_{S_2}),\,\dots,\,\wh{\tau}(X_{S_1})-\wh{\tau}(X_{S_K}),\,f(X_{S_\cap})-\P_n[f(X_{S_\cap})]\right).$$ 

\item Use nonparametric regressors to compute $\wh{g}_k(X_{S_\cap}):=\wh{\E}[\wh\delta_{k}(X)\mid X_{S_\cap}]$, for $ k=1,\dots,K-1$; set $\wh{g}_{K:(K-1+d)}(X_{S_\cap}):=f(X_{S_\cap})-\P_n[f(X_{S_\cap})]$.
\item Find transfer weights as in \cref{algo:compute-reweighting} with the above choice of $\wh{g}$, obtain $\wh{\lambda}$ as a by-product, and compute reweighted AIPW estimators $\wh{\tau}_k^{\,\mathtt{R}}$ as in \cref{algo:rewt-adj-sets}.

\item Find weights $\wh{\nu}$ as in \cref{algo:rewt-adj-sets} using the above $\wh{g}$, i.e., set $\wh\nu_1 := 1-\sum_{k=2}^K \wh\nu_{k}$ and $$\wh\nu_{2:(K+d)} := \left(\P_n\left[ \wh{w}(X_{S_\cap})\wh{g}(X_{S_\cap})\wh{g}(X_{S_\cap})^\top\right]\right)^{-1}\P_n\left[ \wh{w}(X_{S_\cap})\wh{g}(X_{S_\cap})\left(\wh\tau(X_{S_1})-\wh\tau_1^\mathtt{R} \right)\right].$$

\item With $\wh\delta^\mathtt{AIPW}_k(X):=\wh{\tau}^\mathtt{AIPW}(X_{S_1})-\wh{\tau}^\mathtt{AIPW}(X_{S_{k+1}})$ for $1\le k\le K-1$, and $\wh\delta^\mathtt{AIPW}_k(X):=\wh{g}_k(X_{S_\cap})$ for $K\le k\le K-1+d$, compute a bias-correction term as
\begin{equation*}
    \begin{split}
B_n:=\wh{\lambda}^\top\,\P_n\left[\wh{w}(X_{S_\cap})\left(
\wh\delta^\mathtt{AIPW}(X)-\wh{g}(X_{S_\cap})\right)\left(\wh{\E}\left[\wh\tau(X_{S_1})\mid X_{S_\cap}\right]-\wh\tau_1^\mathtt{R} - \wh{g}(X_{S_\cap})^\top \wh{\nu}_{2:(K+d)} \right)\right]\\
\,-\, \P_n\left[\wh{w}(X_{S_\cap})\left(f(X_{S_\cap})-\P_n [f(X_{S_\cap})]\right)^\top\wh\nu_{(K+1):(K+d)}\right].
    \end{split}
\end{equation*}

\item With $\wh\nu$ and $B_n$ as defined above, compute $\htauRp:=\sum_{k=1}^K \wh\nu_k \wh{\tau}_k^\mathtt{\,R}  + B_n$. 
\end{algorithmic}
\end{algorithm}

\begin{remark}\label{rem:protect-covariates-variance}
The asymptotic variance $\mathrm{V}_{\namp}$ in \cref{thm:matching.adj.sets-protect-covariates} is given by
    \begin{multline*}
\mathrm{V}_{\namp}:=\var\Bigg[
 w^*(X_{S_\cap})\Bigg(\sum_{k=1}^K\nu^*_k\,\tau^\mathtt{AIPW}(X_{S_k})-\tauRp\\
 \qquad\qquad\qquad\qquad+\lambda^{*,\top} (\delta^\mathtt{AIPW}(X)-g(X_{S_\cap}))\left( \E_P\left[\tau(X_{S_1})\mid X_{S_\cap}\right]-\tauRp-g(X_{S_\cap})^\top \nu_{2:(K+d)}^{*}\right)\Bigg)\\
-\left(w^*(X_{S_\cap})-1\right)(f(X_{S_\cap}) - \E_P[f(X_{S_\cap})])^\top \nu_{(K+1):(K+d)}^*\Bigg],
\end{multline*}
where the vector $\delta^\mathtt{AIPW}(X)$ is the same as the constraint vector $\delta(X)$ defined in \eqref{eq:new-delta} except with contrasts $\tau(X_{S_k})$ replaced with their AIPW versions $\tau^\mathtt{AIPW}(X_{S_k})$ (cf.~\cref{thm:matching.adj.sets}), and the weights $\nu^*$ are such that $\nu^*_1=1-\sum_{k=2}^K\nu^*_k$ and
$$\nu^*_{2:(K+d)} := \left(\E_P\left[ w^*(X_{S_\cap})\,g(X_{S_\cap})\,g(X_{S_\cap})^\top\right]\right)^{-1}\E_P\left[ w^*(X_{S_\cap})\,g(X_{S_\cap})\left(\tau(X_{S_1})-\tauRp \right)\right].$$
\end{remark}

\section{Specification-robust inference with parametric baseline}\label{sec:lm-based}

The linear regression model with treatment-covariate interactions \citep{Lin2013} is one of the most commonly used parametric approach for drawing inference on the ATE in applied research, popular for its straightforward implementation and the transparent interpretability of the coefficients; see discussions in \citet[Section 1]{Hainmueller_Mummolo_Xu_2019} and \citet[Section 3.1]{Anoke2019}. 

In this section, we consider the problem of drawing inference with multiple adjustment sets using the linear model with treatment-covariate interactions as the baseline method for estimating the contrasts $\tau(X_{S_k})=\E_P[Y\mid A=1, X_{S_k}]-\E_P[Y\mid A=0, X_{S_k}]$.~Throughout this section, we work under the assumption that such an interacted linear model is well-specified for at least one of the adjustment sets. More concretely, we impose the following assumption. 

\begin{assumption}\label{assump:linear-model}
   Assume that $j\in\{1,2,\dots,K\}$ is such that the adjustment set $S_j$ is valid and the data $(Y(0),Y(1),A,X)$ generated from $P$ which satisfies the following:  $$Y(a) = \overline\delta +  X_{S_j}^\top\overline\beta + a\cdot(\overline{\tau}+  X_{S_j}^\top \overline\gamma)+\eps, \quad \E_P[\eps\mid \trt, X_{S_j}]=0,\quad \var_P[\eps\mid \trt,X_{S_j}]=\sigma^2.$$
   Assume further that the covariates are centered, i.e., $\E_P[X]=0$, and that the adjustment sets have non-empty intersection:~$S_\cap=\cap_{k=1}^K S_k\neq \emptyset$.
\end{assumption}

Empirically, we obtain $\wh{\tau}(X_{S_k})$ by fitting a linear regression with interaction between the treatment $\trt$ and the covariates $X_{S_k}$. Thus, 
\begin{equation}\label{def:contrasts-lm-empirical}
\begin{split}
        \wh{\tau}(X_{S_k})&:=\wh{\tau}_k+X_{S_k}^\top \wh{\gamma}_k,\\
   (\wh\delta_k,\,\wh\beta_k,\,\wh\tau_k,\,\wh\gamma_k) &:= \argmin_{\delta,\,\beta,\,\tau,\,\gamma}\ \sum_{i=1}^n \left(Y_i-\delta -X_{i,S_k}^\top \beta-\trt_i(\tau+X_{i,S_k}^\top \gamma)\right)^2.
\end{split}
\end{equation}
Even when the adjustment set is possibly invalid, it is well-known (see, e.g.,~\citet[Proposition 7.1]{Buja2019}) that $\wh{\tau}(X_{S_k})$ targets the following estimand: 
\begin{equation}\label{def:contrasts-lm}
\begin{split}
        \tau^*(X_{S_k})&:=\tau_k^*+X_{S_k}^\top \gamma_k^*,\\
        (\delta_k^*,\,\beta_k^*,\,\tau_k^*,\,\gamma_k^*) &:= \argmin_{\delta,\,\beta,\,\tau,\,\gamma}\ \E_P\left(Y-\delta -X_{S_k}^\top \beta-\trt(\tau+X_{S_k}^\top \gamma)\right)^2.
\end{split}
\end{equation}
Under \cref{assump:linear-model}, $\tau_j^*$ is the same as the ATE $\overline\tau$ (and $\gamma_j^*=\overline{\gamma}$). However, $\tau_k^*$ does not have a causal interpretation when the adjustment set $S_k$ is invalid. Our core approach remains the same as in \cref{sec:theory} of the main paper, except that here we aim to solve the population optimization problem \eqref{adj-sets-popln-opt} with $\tau^*(X_{S_k})$ instead of $\tau(X_{S_k})$. Thus, we find a reweighted population $w^* P$ closest in KL-divergence to the original population $P$ such that $$\E_P[w^*(X_{S_\cap})\tau^*(X_{S_1})]=\cdots =\E_P[w^*(X_{S_\cap})\tau^*(X_{S_K})].$$ 
The following lemma mirrors \cref{propo:reweighted-estimand-makes-sense} from the main paper for the setting of this section and shows that estimand in the above display has a causal interpretation---it is the ATE of the reweighted population; see 
\cref{proof-of-propo3} 
for a proof.
\begin{lemma}[Interpretation of the new estimand]\label{propo:reweighted-estimand-makes-sense-lm-based}
    Suppose that the adjustment sets $S_1,\dots,S_K$ satisfy \cref{assump:linear-model} and $w^*=w^*(X_{S_\cap})$ solves the optimization problem
    \begin{multline}\label{adj-sets-popln-opt-lm-based}
         w^*=\argmin_{w=w(X_{S_\cap})}\dkl(wP\,\|\, P)\  \text{subject to}\ \ \E_P[w(X_{S_\cap})\tau^*(X_{S_1})]=\cdots=\E_P[w(X_{S_\cap})\tau^*(X_{S_K})].
    \end{multline}
    Then, for each $k=1,2,\dots,K$, it holds that $$\E_P[w^*(X_{S_\cap})\tau^*(X_{S_k})]=\E_{w^*P}[Y(1)-Y(0)]=\tauR.$$
\end{lemma}

\subsection{Existence and uniqueness of the transfer weights}

Here we derive a result analogous to \cref{propo:exist-n-unique} for the setting of this section by replacing $\tau(X_{S_k})$ with $\tau^*(X_{S_k})$ in \cref{propo:exist-n-unique}. We state this result as \cref{propo:exist-n-unique-lm-based} below, which shows the existence and uniqueness of the weighting function $w^*$ in \eqref{adj-sets-popln-opt-lm-based} under the following heterogeneity assumption; see 
\cref{proof-of-propo4} 
for a proof.

\begin{assumption}\label{assump:heterogeneity-lm-based}
    Define differences-in-contrasts $\Delta\tau_k^*(X):=\tau^*(X_{S_1})-\tau^*(X_{S_k})$ for $k=2,\dots,K$, where $\tau^*(X_{S_k})$ are as defined in \eqref{def:contrasts-lm}, and denote their conditional mean given the common covariates as 
    $$g^*(X_{S_\cap}):=\E_P[\Delta\tau^*(X)\mid X_{S_\cap}],$$
    where $\Delta\tau^*$ denotes the vector $(\Delta\tau_2^*,\dots,\Delta\tau_K^*)^\top$. Assume that, (i) $g^*(X_{S_\cap})$ has a finite moment generating function in an open neighborhood of the origin: $\E_P[\exp(\rc\|g^*(X_{S_\cap})\|)]<\infty$ for some $\rc\in(0,\infty)$, (ii) $\E_P[g^*(X_{S_\cap})g^*(X_{S_\cap})^\top]\succ 0$, and (iii) the map $\psi_P(\lambda):=\E_P[\exp(g^*(X_{S_\cap})^\top\lambda)]$ is coercive, i.e., $\psi_P(\lambda)\to \infty$ as $\|\lambda\|\to \infty$. (Here $\|\cdot\|$ denotes the Euclidean norm.)
\end{assumption}

 A sufficient condition for (ii) and (iii) in \cref{assump:heterogeneity-lm-based} is that $g^*(X_{S_\cap})$ takes all possible sign patterns with positive probability, i.e., for any $s\in\{-1,+1\}^{K-1}$, it holds that $P(\sgn g^*(X_{S_\cap})=s)>0$; see the arguments used in the proof of \cref{lemma:heterogeneity-old}.

\begin{lemma}[Existence and uniqueness of the transfer  weights]\label{propo:exist-n-unique-lm-based}
Under \cref{assump:heterogeneity-lm-based}, there exists a unique $\lambda^*\in\R^{K-1}$ such that 
\begin{equation}
    \label{eq:def-lambda-lm}
    \lambda^* =\argmin_{\lambda\in \R^{K-1}}\ \E_P \left[\exp\left(g^*(X_{S_\cap})^\top \lambda\right)\right].
\end{equation}
 Furthermore, the solution to the problem \eqref{adj-sets-popln-opt-lm-based}, unique $P$-a.s., is given by
    $$w^*(X_{S_\cap})={\exp\left(g^*(X_{S_\cap})^\top\lambda^*\right)}\;\Big/\;{\E_P \left[\exp\left(g^*(X_{S_\cap})^\top\lambda^*\right)\right]}.$$
\end{lemma}

\subsection{Estimation and inference}

In light of \cref{propo:exist-n-unique-lm-based}, solving the problem \eqref{adj-sets-popln-opt-lm-based} reduces to solving the convex problem \eqref{eq:def-lambda-lm}, for which we need to estimate the conditional means $g_k^*(X_{S_\cap})=\E_P[\tau^*(X_{S_1})-\tau^*(X_{S_k})\mid X_{S_\cap}]$. It turns out that \cref{assump:heterogeneity-lm-based,assump:linear-model} are not enough to enable us to estimate $g(X_{S_\cap})$ at parametric rate, which is why we further impose the following modeling assumption.

\begin{assumption}\label{assump:linear-model-further}
    For each $k=1,\dots,K$, there exists constants $a_k$ and $b_k$ such that $$\E_P[\tau^*(X_{S_k}) \mid X_{S_\cap}]=a_k +  X_{S_\cap}^\top b_k.$$
\end{assumption}

We describe in \cref{algo:rewt-adj-sets-lm-based} below a simplified version of our general procedure (\cref{algo:rewt-adj-sets}) proposed in the main paper that operates under \cref{assump-iid,assump:heterogeneity-lm-based,assump:linear-model,assump:linear-model-further}. We retain the name \emph{specification-robust} for this model-based version of our estimator, as it remains robust without us knowing which regression adjustment set $S_j$ satisfies \cref{assump:linear-model}.

\begin{algorithm}[t]
    \caption{Specification-robust linear model-based inference for the average treatment effect (simplified version of \cref{algo:rewt-adj-sets} operating under linear model assumptions)}
    \label{algo:rewt-adj-sets-lm-based}
    \vspace{2mm}
    
    \begin{algorithmic}[1] 
    \Require Data $(Y_i,\trt_i,X_i)_{i=1}^n$; $K\ge 2$ adjustment sets $S_1,\dots,S_K$ with intersection $S_\cap\neq \emptyset$.\vspace{1mm}
    
    \Ensure An estimator $\wh{\tau}^\mathtt{SR,lm}$ of the ATE $\tauR$ for the reweighted population (see \cref{propo:reweighted-estimand-makes-sense-lm-based} for definition) with an asymptotically valid $(1-\alpha)$-confidence interval $\widetilde{\mathcal{I}}_\alpha$.
        \vspace{2mm}

\item Estimate $\tau^*(X_{S_k})$ using the linear regression of $Y$ on treatment $A$, covariates $X_{S_k}$ and the treatment-covariate interactions.

    \item Run linear regression of $\wh{\tau}^*(X_{S_k})$ on $X_{S_\cap}$ and obtain $\wh{\E}_P[\wh{\tau}^*(X_{S_k})\mid X_{S_\cap}]=\wh{a}_k+X_{S_\cap}^\top \wh{b}_k$, 
    $k=1,\dots,K.$
    \item Compute $\wh{g}_k(X_{S_\cap})=\wh{a}_1-\wh{a}_k + X_{S_\cap}^\top (\wh{b}_1-\wh{b}_k)$, $k=2,\dots,K$, and solve the empirical optimization problem $$\wh{\lambda}:=\argmin_{\lambda\in\R^{K-1}}\, \P_n \left[\exp\left(\wh{g}(X_{S_\cap})^\top \lambda \right)\right].$$
    \item Compute the weight for the $i$-th unit as $$\wh{w}(X_{i,S_\cap})={\exp\left(\wh{g}(X_{i,S_\cap})^\top \wh{\lambda}\right)}\,\Big/\,{\P_n\left[ \exp\left(\wh{g}(X_{S_\cap})^\top \wh\lambda\right)\right]}.$$
    \item Compute our specification-robust linear model-based estimator as follows.
    $$\wh{\tau}^\mathtt{SR,lm} = \wh{a}_1 + \wh{b}_1^{\,\top} \P_n \left[\wh{w}(X_{S_\cap})X_{S_\cap} \right].$$
    
\item Construct an asymptotically valid $(1-\alpha)$-confidence interval for $\tauR$ based on $\wh\tau^\mathtt{SR,lm}$:
$$\widetilde{\mathcal{I}}_\alpha:=\left[\wh{\tau}^\mathtt{SR,lm}-\frac{z_{\alpha/2}}{\sqrt{n}}\sqrt{\wh{\mathrm{V}}_\mathtt{SR,lm}},\ \wh{\tau}^\mathtt{SR,lm}+\frac{z_{\alpha/2}}{\sqrt{n}}\sqrt{\wh{\mathrm{V}}_\mathtt{SR,lm}}\right],$$
where $z_\alpha$ is the $(1-\alpha)$-th quantile of standard normal distribution and $\wh{\mathrm{V}}_\mathtt{SR,lm}$ is a consistent estimator of the asymptotic variance $\mathrm{V}_\mathtt{SR,lm}$ in \cref{thm:matching.adj.sets-lm-based}.
\end{algorithmic}
\end{algorithm}

The following theorem is our main result for the setting in this section; it states the asymptotic distribution of the specification-robust linear model-based estimator $\wh{\tau}^\mathtt{SR,lm}$ we compute in \cref{algo:rewt-adj-sets-lm-based}; see 
\cref{app:proof-of-thm-lm-based} 
for a proof.

\begin{theorem}[Asymptotic normality of the specification-robust estimator with parametric baseline]\label{thm:matching.adj.sets-lm-based}
        Under \cref{assump-iid,assump:heterogeneity-lm-based,assump:linear-model,assump:linear-model-further}, it holds that $$\sqrt{n}\left(\wh{\tau}^{\,\mathtt{SR,lm}}-\tauR\right)\dto \normal(0, \mathrm{V}_\mathtt{SR,lm}),$$ for some $\mathrm{V}_\mathtt{SR,lm}\in (0,\infty)$, see \eqref{eqn:VARlm} in \cref{app:proof-of-thm-lm-based} 
        for an explicit formula. 
\end{theorem}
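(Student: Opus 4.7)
The plan is to linearize $\wh\tau^{\mathtt{AR,lm}}$ around $\tauR$ into a sum of i.i.d.~influence-function contributions and then apply the multivariate CLT. First I would use \cref{propo:reweighted-estimand-makes-sense-lm-based} together with \cref{assump:linear-model-further} (for $k=1$) to rewrite the target as $\tauR = \E_P[w^*(X_{S_\cap})\tau^*(X_{S_1})] = \E_P[w^*(X_{S_\cap})(a_1 + X_{S_\cap}^\top b_1)] = a_1 + b_1^\top m^*$, where $m^* := \E_P[w^*(X_{S_\cap}) X_{S_\cap}]$ and I used $\E_P[w^*]=1$. Setting $\wh m := \P_n[\wh w(X_{S_\cap}) X_{S_\cap}]$, the centered estimator splits into three $O_P(n^{-1/2})$ pieces plus a product remainder:
\[
\wh\tau^{\mathtt{AR,lm}} - \tauR = (\wh a_1 - a_1) + (\wh b_1 - b_1)^\top m^* + b_1^\top (\wh m - m^*) + (\wh b_1 - b_1)^\top (\wh m - m^*).
\]

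Next I would obtain stage-wise asymptotically linear expansions for each building block. \emph{(i)} Standard OLS theory applied to the first-stage interaction regression yields asymptotically linear $(\wh\tau_k, \wh\gamma_k)$ for the projection parameters $(\tau_k^*, \gamma_k^*)$, with influence function built from the usual residual-times-design terms; \cref{assump:linear-model} ensures a well-defined residual variance for the valid $S_j$, while \cref{assump-iid} supplies i.i.d.~sampling. \emph{(ii)} Under \cref{assump:linear-model-further}, the secondary OLS of $\wh\tau^*(X_{S_k}) = \wh\tau_k + X_{S_k}^\top \wh\gamma_k$ on $(1, X_{S_\cap})$ targets the well-specified parameters $(a_k, b_k)$ with no approximation bias, and its influence function combines the usual OLS residual with a delta-method contribution inherited from~\emph{(i)}. \emph{(iii)} Since $\wh g_k(x) = (\wh a_1 - \wh a_k) + x^\top(\wh b_1 - \wh b_k)$ is linear in the stage-\emph{(ii)} estimators, it is pointwise $\sqrt n$-asymptotically linear around $g^*_k(x)$, with influence function a linear combination of those in~\emph{(ii)}.

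Finally I would analyze $\wh\lambda$ and $\wh w$ via standard convex M-estimation theory. Linearizing the first-order condition $\P_n[\wh g \exp(\wh g^\top \wh\lambda)] = 0$ jointly around $(\lambda^*, g^*)$ and inverting the Hessian $H^* := \E_P[g^* g^{*\top} \exp(g^{*\top}\lambda^*)]$ (positive-definite by \cref{assump:heterogeneity-lm-based}) produces an expansion of $\sqrt n(\wh\lambda - \lambda^*)$ as a CLT term from $\P_n[g^* \exp(g^{*\top}\lambda^*)]$ plus a linear functional of the stage-\emph{(iii)} influence functions. Smoothness of $(g,\lambda)\mapsto \exp(g^\top\lambda)/\E_P[\exp(g^\top\lambda)]$ then gives a pointwise expansion of $\wh w - w^*$; integrating against $X_{S_\cap}$ and adding the empirical-process term $(\P_n - \E_P)[w^*(X_{S_\cap}) X_{S_\cap}]$ yields an asymptotic linearization of $\wh m - m^*$. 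Substituting everything into the decomposition above produces $\sqrt n(\wh\tau^{\mathtt{AR,lm}} - \tauR) = n^{-1/2}\sum_{i=1}^n \psi(D_i) + o_P(1)$ with $\E[\psi(D)]=0$; the CLT with $\mathrm{V}_\mathtt{AR,lm} := \E[\psi(D)^2] < \infty$ then delivers the claim.

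The main technical obstacle is the nested estimation in the step for $\wh\lambda$: its score and Hessian are evaluated at the estimated nuisance $\wh g$ rather than $g^*$, so one must carefully separate the contribution of $(\wh g - g^*)$ from the empirical-process term and control the quadratic remainder of the joint Taylor expansion. The exponential-moment conditions in \cref{assump:heterogeneity-lm-based} are used precisely here to bound this remainder and to ensure uniform integrability of the Hessian estimator, so that plugging in $\wh g$ only perturbs $\wh\lambda - \lambda^*$ at an $o_P(n^{-1/2})$ scale and the overall influence-function representation is preserved.
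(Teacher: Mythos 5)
Your proposal follows essentially the same route as the paper: the identity $\tauR = a_1 + b_1^\top\E_P[w^*(X_{S_\cap})X_{S_\cap}]$, the same four-term decomposition of $\wh\tau^{\mathtt{AR,lm}}-\tauR$, asymptotically linear expansions of $(\wh a_k,\wh b_k)$ via Huber--White/OLS-under-misspecification theory, an M-estimation Taylor expansion of the first-order condition for $\wh\lambda$ with the $\wh g - g$ contribution separated out, and a final influence-function CLT --- this is exactly the structure of the paper's proof together with its \cref{lemma:rate-weights-lm-based}. The one point neither you nor the paper addresses explicitly is the strict positivity $\mathrm{V}_{\mathtt{AR,lm}}>0$ claimed in the statement, but your argument is otherwise a faithful blueprint of the published proof.
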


\cref{thm:matching.adj.sets-lm-based} enables us to construct a single asymptotically valid confidence interval that shrinks at $n^{-1/2}$-rate with the sample size; as we state in the following corollary.

\begin{corollary}[Validity of specification-robust confidence intervals with parametric baseline]\label{corollary-lm-based}
    For any consistent estimator $\wh{\mathrm{V}}_\mathtt{SR,lm}$ of the asymptotic variance
    $\mathrm{V}_\mathtt{SR,lm}$ in \cref{thm:matching.adj.sets-lm-based}, the confidence interval $\widetilde{\mathcal{I}}_\alpha$ we compute in \cref{algo:rewt-adj-sets-lm-based} is asymptotically valid: $$\lim_{n\to\infty} P\left(\tauR\in \widetilde{\mathcal{I}}_\alpha:=\left[\wh{\tau}^\mathtt{SR,lm}-\frac{z_{\alpha/2}}{\sqrt{n}}\sqrt{\wh{\mathrm{V}}_\mathtt{SR,lm}},\ \wh{\tau}^\mathtt{SR,lm}+\frac{z_{\alpha/2}}{\sqrt{n}}\sqrt{\wh{\mathrm{V}}_\mathtt{SR,lm}}\right]\right)=1-\alpha,$$ 
    where $z_{\alpha}$ is the $(1-\alpha)$-th quantile of the standard normal distribution.
\end{corollary}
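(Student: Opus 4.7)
The plan is to derive this corollary as a standard consequence of \cref{thm:matching.adj.sets-lm-based} combined with Slutsky's theorem and the continuous mapping theorem, which is the usual recipe for turning a central limit theorem into an asymptotically valid Wald-type confidence interval. Since $\mathrm{V}_\mathtt{AR,lm}\in(0,\infty)$ by the conclusion of \cref{thm:matching.adj.sets-lm-based}, the function $v\mapsto 1/\sqrt{v}$ is continuous in a neighborhood of $\mathrm{V}_\mathtt{AR,lm}$, so consistency of $\wh{\mathrm{V}}_\mathtt{AR,lm}$ yields $\sqrt{\mathrm{V}_\mathtt{AR,lm}/\wh{\mathrm{V}}_\mathtt{AR,lm}}\Pto 1$ by the continuous mapping theorem.

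Next, I would write
\begin{equation*}
\frac{\sqrt{n}\bigl(\wh{\tau}^{\,\mathtt{AR,lm}}-\tauR\bigr)}{\sqrt{\wh{\mathrm{V}}_\mathtt{AR,lm}}}=\underbrace{\frac{\sqrt{n}\bigl(\wh{\tau}^{\,\mathtt{AR,lm}}-\tauR\bigr)}{\sqrt{\mathrm{V}_\mathtt{AR,lm}}}}_{\dto\,\normal(0,1)}\,\cdot\,\underbrace{\sqrt{\frac{\mathrm{V}_\mathtt{AR,lm}}{\wh{\mathrm{V}}_\mathtt{AR,lm}}}}_{\Pto\,1}.
\end{equation*}
The first factor converges in distribution to $\normal(0,1)$ by \cref{thm:matching.adj.sets-lm-based}, and the second factor converges in probability to $1$ as noted above. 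Slutsky's theorem then gives
\begin{equation*}
T_n \defeq \frac{\sqrt{n}\bigl(\wh{\tau}^{\,\mathtt{AR,lm}}-\tauR\bigr)}{\sqrt{\wh{\mathrm{V}}_\mathtt{AR,lm}}}\ \dto\ \normal(0,1).
\end{equation*}

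To conclude, observe that the event $\{\tauR\in\widetilde{\mathcal{I}}_\alpha\}$ is exactly the event $\{|T_n|\le z_{\alpha/2}\}$ by the definition of $\widetilde{\mathcal{I}}_\alpha$ in \cref{algo:rewt-adj-sets-lm-based}. Since $\pm z_{\alpha/2}$ are continuity points of the standard normal distribution function, the Portmanteau theorem (applied to the convergence $T_n\dto\normal(0,1)$) yields
\begin{equation*}
\lim_{n\to\infty} P\bigl(\tauR\in\widetilde{\mathcal{I}}_\alpha\bigr)=\lim_{n\to\infty}P(|T_n|\le z_{\alpha/2})=P(|Z|\le z_{\alpha/2})=1-\alpha,
\end{equation*}
where $Z\sim\normal(0,1)$ and the last equality uses the definition of the standard normal quantile $z_{\alpha/2}$.

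There is no real obstacle in this argument — the corollary is essentially a packaging statement. The only detail worth flagging is that the division by $\sqrt{\wh{\mathrm{V}}_\mathtt{AR,lm}}$ is well-defined with probability tending to one precisely because $\mathrm{V}_\mathtt{AR,lm}>0$ is guaranteed by \cref{thm:matching.adj.sets-lm-based}; without this strict positivity one would have to argue separately about the degenerate case. All substantive work — in particular, the asymptotic normality of $\wh{\tau}^{\,\mathtt{AR,lm}}$ under the model assumptions and the explicit form of $\mathrm{V}_\mathtt{AR,lm}$ — has already been absorbed into \cref{thm:matching.adj.sets-lm-based}.
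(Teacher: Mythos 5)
Your proof is correct and is exactly the standard Slutsky/continuous-mapping argument that the paper implicitly relies on (the paper states this corollary without a written proof, treating it as an immediate consequence of \cref{thm:matching.adj.sets-lm-based}). Nothing is missing: you correctly use the strict positivity $\mathrm{V}_\mathtt{AR,lm}>0$ to justify the division and identify the coverage event with $\{|T_n|\le z_{\alpha/2}\}$.
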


\begin{remark}[Variance estimation]\label{remark:AR-boot-lm-case}
    We recommend estimating $\mathrm{V}_\mathtt{SR,lm}$ either using plug-in estimates in the explicit formula \eqref{eqn:VARlm} 
    we provide in \cref{app:proof-of-thm-lm-based}, 
    or using bootstrap, as follows. 
    \begin{enumerate}
        \item Choose a large number $B=B(n)$ and for $b=1,\dots,B$,  sample $D_1^b,\dots,D_n^b$ with replacement from the data $\{D_1,\dots,D_n\}$ where $D_i=(Y_i,\, X_i,\, A_i)$.
        \item Compute the specification-robust linear model-based estimator $\wh{\tau}^{\mathtt{SR,lm}}_{b}$ using the resampled data and compute $$\wh{\mathrm{V}}_\mathtt{SR,lm}=\frac{1}{B}\sum_{b=1}^B \left(\wh{\tau}^{\mathtt{SR,lm}}_{b}-\wh{\tau}^{\mathtt{SR,lm}}\right)^2.$$
    \end{enumerate}
    Since our proof of \cref{thm:matching.adj.sets-lm-based} is based on asymptotic linear expansions, this bootstrap procedure can be theoretically justified.
\end{remark}

We present empirical illustrations of the confidence intervals constructed using \cref{remark:AR-boot-lm-case} for our simulation examples in \cref{sec:simulations} of the main paper.

\section{Uniform validity}\label{sec:uniform-validity}

In this section, we establish that the specification-robust confidence intervals from \cref{sec:est-and-infer} are not only pointwise valid, but also uniformly valid over a suitable class of distributions, provided the regularity conditions hold uniformly.
First, we set a general notation for the model class where we do not know which adjustment set is valid:
\begin{equation*}
    \sP_\nam := \bigcup_{k=1}^K \sP_{k},\quad\text{where}\quad\sP_k:= \{P: {S_k} \text{ is a valid adjustment set under }P\}.
\end{equation*}
In other words, $\sP_k$ is the set of all distributions $P$ of $(Y_i(0),Y_i(1), A_i,X_i)$ (cf.~\cref{assump-iid}) such that the ignorability condition $A\indep Y(a)\mid X_{S_k}$ $(a=0,1)$ holds under $P$. We can view the average treatment effect (ATE) as a functional: 
\begin{equation}
    \label{eq:ATE-functional}
    \overline\tau(P):=\E_P[Y(1)-Y(0)].
\end{equation}
On $\sP_\nam$, the original ATE $\overline{\tau}(P)$ is not point-identified unless the researcher knows which of the $K$ adjustment sets is (are) valid. We shift the goalpost and set the following as the target estimand: $$\tauR(P):=\overline\tau(w^*P),$$ where the reweighted population $w^*P$ is the closest---in KL divergence---to the original population where all candidate adjustment sets agree (see \cref{sec:theory} for details). Thus, even the `pointwise' results derived in \cref{sec:theory} have a sense of uniform validity:~While the standard ATE estimators are only valid on a single $\sP_k$, our approach is valid on the union class $\sP_\nam$ (modulo our regularity conditions). In the rest of this section, we derive 
results that are uniformly valid over a restricted class $\sP\subset \sP_\nam$, under suitable conditions, as follows.

\begin{assumption}
    [Uniform version of \cref{assump-iid}]\label{assump-iid-uniform}
     $(Y_i(0),Y_i(1),\trt_i, X_i)$, $i=1,\dots,n$, are independent observations from an unknown distribution $P$ that belongs to the class $\sP$.
\end{assumption}

\begin{assumption}
    [Uniform version of \cref{assump-at-least-one-is-valid}]\label{assump-at-least-one-is-valid-uniform}  Assume that under any $P\in\sP$, at least one of the adjustment sets is valid. Assume further that the intersection $S_{\cap}$ of the adjustment sets is non-empty and satisfies the strict overlap assumption:~For every $P\in\sP$, $\eta\le P(A=1\mid X_{S_\cap})\le 1-\eta $ for some $\eta\in (0,1)$ independent of $P$. 
\end{assumption}

Similar to \eqref{eq:ATE-functional}, denote contrasts as $\tau(X_{S_k};P):=\E_P[Y\mid X_{S_k}, A=1]-\E_P[Y\mid X_{S_k}, A=0]$, and differences-in-contrasts as $\Delta\tau_k(X;P):=\tau(X_{S_1};P)-\tau(X_{S_k};P)$ for $k=2,\dots,K$. Also define    \begin{equation}
        \label{last-def-of-g}
    g(X_{S_\cap};P):=\E_P[\Delta\tau(X;P)\mid X_{S_\cap}],
    \end{equation} where $\Delta\tau$ denotes the vector $(\Delta\tau_2,\dots,\Delta\tau_K)^\top$.   Furthermore, denote by $Q_{\lambda,P}$ the exponentially tilted distribution with density proportional to $\exp(\lambda^\top g)$ w.r.t.~$P$, i.e.,
    \begin{equation}
        \label{eq:exp-tilt}
\frac{dQ_{\lambda,P}}{dP}:=\exp(\lambda^\top g(X_{S_\cap};P))\;\Big/\;\E_P\left[\exp(\lambda^\top g(X_{S_\cap};P))\right],
    \end{equation}
    and recall from \cref{propo:exist-n-unique} that the solution to the problem \eqref{adj-sets-popln-opt}, unique $P$-a.s., is given by
    $w^*(X_{S_\cap};P)=Q_{\lambda^*(P),\, P}$ where $\lambda^*(P)$ is defined in \eqref{def:lambda-uniform} below.
\begin{assumption}
    [Uniform version of \cref{assump:heterogeneity}]\label{assump:heterogeneity-uniform}
    ${}$
    \begin{enumerate}[label=(\roman*)]
        \item  There exists $\rc\in(0,\infty)$ (independent of $P$) such that $\sup_{P\in\sP}\E_P[\exp(\rc\|g(X_{S_\cap};P)\|)]<\infty$.
        \item For any $P\in\sP$, the moment generating function $\psi_P:\lambda\mapsto \E_P[\exp( \lambda^\top g(X_{S_\cap};P))]$ is coercive, i.e., $\psi_P(\lambda)\to\infty$ as $\|\lambda\|\to\infty$.
        \item $\E_{Q_{\lambda,P}}[g(X_{S_\cap};P)g(X_{S_\cap};P)^\top]\succeq \kappa_0 I$ uniformly over $\lambda$ in the ball of radius $\rc$ centered at the origin in $\R^{K-1}$ and uniformly over $P\in\sP$, for some $\kappa_0\in(0,\infty)$.
    \end{enumerate}

\end{assumption}

 The following lemma mirrors \cref{propo:exist-n-unique} of the main paper---it states existence and uniqueness of the solution to \eqref{adj-sets-popln-opt} via exponential tilting and shows that $\|\lambda^*(P)\|$ remains uniformly bounded. 

\begin{lemma}[Uniform extension of \cref{propo:exist-n-unique}]\label{propo:exist-n-unique-uniform}
Under \cref{assump:heterogeneity-uniform}, for every $P\in\sP$ there exists a unique $\lambda^*=\lambda^*(P)\in\R^{K-1}$ such that  
\begin{equation}
    \label{def:lambda-uniform}
    \lambda^*(P) =\argmin_{\lambda\in \R^{K-1}}\ \E_P \left[\exp\left(g(X_{S_\cap};P)^\top \lambda\right)\right].
\end{equation}
 Furthermore, $\sup_{P\in\sP} \|\lambda^*(P)\|<\infty$.
\end{lemma}

It is worth noting that item (iii) in \cref{assump:heterogeneity-uniform} is only imposed as a sufficient condition to ensure $\sup_{P\in\sP}\|\lambda^*(P)\|<\infty$, and can be replaced with the high-level conditions: $\E_P[gg^\top]\succ 0$ for all $P\in \sP$ and $\sup_{P\in\sP}\|\lambda^*(P)\|<\infty$. Following is our main result in this section; it establishes the asymptotic normality of the bias-corrected specification-robust estimator uniformly over the class $\sP\subset\sP_{\,\nam}$ satisfying \cref{assump-iid-uniform,assump:heterogeneity-uniform,assump-at-least-one-is-valid-uniform}, see \cref{proof:thm:matching.adj.sets-uniform} for a proof.

\begin{theorem}
       [Uniform version of \cref{thm:matching.adj.sets}]\label{thm:matching.adj.sets-uniform}
       Assume that the propensity score estimators $\wh{e}(X_{S_k})$, conditional response function estimators $\wh\mu_a(X_{S_k})$ and nonparametric regressors $\wh{\E}_P[\wh\mu_a(X_{S_k})\mid X_{S_\cap}]$ used in \cref{algo:rewt-adj-sets} are fit independent of the data $(Y_i,A_i,X_i)_{i=1}^n$, and that \cref{assump-iid-uniform,assump-at-least-one-is-valid-uniform,assump:heterogeneity-uniform} hold true. Assume further that,
    \begin{enumerate}[label=(\arabic*)]
    
    \item (uniform rates of the nonparametric regressors) for some $q>2$, 
    \begin{align*}
       & \sup_{P\in\sP}\|\wh{\mu}_a(X_{S_k})-\mu_a(X_{S_k})\|_{L^{q}(\P_n)}=\o(n^{-1/4}), \\[2mm]
       &\sup_{P\in\sP}\|\wh{e}^{-1}(X_{S_k})-e^{-1}(X_{S_k})\|_{L^{q}(\P_n)}=\o(n^{-1/4}),\\[2mm]
       & \sup_{P\in\sP}\|\wh{\E}_P[\wh{\mu}_a(X_{S_k})\,|\, X_{S_\cap}]-\E_P[\wh{\mu}_a(X_{S_k})\,|\, X_{S_\cap}]\|_{L^{q}(\P_n)}=\o(n^{-1/4}),
    \end{align*}
   for $a=0,1$ and $k=1,\dots,K$,

    \item (uniformly bounded moments) $\sup_{P\in\sP}\E_P|Y|^q<\infty$, and $\sup_{P\in\sP}\var_P[Y\mid A=a,X_{S_\cap}]\le C$ almost surely for $a=0,1$, for some $C\in (0,\infty)$ (independent of $P$),
        \item (finite exponential moments) The moment generating functions of $\E_P[\mu_a(X_{S_k};P)\mid X_{S_\cap}]$ and $\wh{g}(X_{S_\cap})$ are uniformly bounded in $P\in\sP$ in a radius $\rc > (2\vee\frac{2q}{q-2})\sup_{P\in\sP}\|\lambda^*(P)\|$ around the origin, for $a=0,1$, $k=1,\dots,K$,
    \item (strong convexity of empirical objective) $\P_n [\exp(\wh{g}(X_{S_\cap})^\top\lambda)\wh{g}(X_{S_\cap})\wh{g}(X_{S_\cap})^\top] \succ \kappa_1I$ uniformly over $\lambda$ in the ball of radius $\rc$ around the origin, almost surely under $P$ for every $P\in\sP$, for some $\kappa_1\in(0,\infty)$ (independent of $P$). 
    \end{enumerate}  Then, the specification-robust estimator $\wh\tau^\nam_{bc}$ we compute in \cref{algo:rewt-adj-sets}  satisfies
    \begin{equation}\label{eq:ALE-first}
  \wh\tau^{\,\nam}_{bc} - \tauR(P)
  \;=\;
  \P_n[\psi^{\,\nam}(D;P)]
  \;+\;
  r_n(P),
\end{equation}
where the remainder term is uniformly asymptotically negligible: $\sup_{P\in\sP}\left|r_n(P)\right|=\o(n^{-1/2})$, and the influence function is given by 
    \begin{equation}\label{eq:IF-first}
\begin{split}
     \psi^{\,\nam}(D;P)
  &:=
  w^*(X_{S_\cap};P)
  \,\Bigg[
    \sum_{k=1}^K \nu^*_k
    \bigl(\tauAIPW{k} - \tauR(P)\bigr)\\
    &\qquad\qquad\qquad\qquad\qquad+
    (\lambda^*)^\top
    \bigl(\Delta\tau^{\mathtt{AIPW}}(X;P) - g(X_{S_\cap};P)\bigr)
    \phi(X_{S_\cap};\nu^*(P),P)
  \Bigg],
\end{split}
\end{equation}
where $\phi(X_{S_\cap};\nu,P):=\sum_{k=1}^K\nu_k\,\E_P[\tau(X_{S_k};P)\mid X_{S_\cap}]-\tauR(P)$, with affine weights $\nu^*(P)$ as defined in \cref{thm:matching.adj.sets}.
Moreover, if the asymptotic variance is uniformly bounded from below, i.e., if $\inf_{P\in\sP}\var_P(\psi^\nam(D;P))\ge v_0$ for some constant $v_0>0$, then the following uniform Berry--Esseen bound holds.
$$\lim_{n\to\infty}\sup_{P\in\sP}\,\sup_{t\in\R}\left| P\left(\frac{\sqrt{n}\left(\wh\tau^\nam_{bc}-\tauR(P)\right)}{\sqrt{\mathrm{V}_\nam(P)}}\le t\right) -\Phi(t)\right|=0,$$
where the asymptotic variance $\mathrm{V}_\nam(P)=\var_P(\psi^\nam(D;P))$ is the same as in \cref{thm:matching.adj.sets}.
\end{theorem}

Recall that we construct an $(1-\alpha)$-confidence interval for $\tauR$ based on $\wh\tau^\nam_{bc}$ as follows.
\begin{equation}
    \label{eq:CI-from-AR-recall}
    \mathcal{I}_\alpha:=\left[\wh\tau^\nam_{bc}-\frac{z_{\alpha/2}}{\sqrt{n}}\sqrt{\wh{\mathrm{V}}_\nam},\ \wh\tau^\nam_{bc}+\frac{z_{\alpha/2}}{\sqrt{n}}\sqrt{\wh{\mathrm{V}}_\nam}\right],
\end{equation}
where $z_\alpha$ is the $(1-\alpha)$-th quantile of standard normal distribution and $\wh{\mathrm{V}}_\nam$ is a pointwise consistent estimator of the asymptotic variance $\mathrm{V}_\nam(P)$, i.e., $\wh{\mathrm{V}}_\nam\Pto \mathrm{V}_\nam(P)$ under $P$ for every $P\in\sP$. It is straightforward from \cref{thm:matching.adj.sets-uniform} that the above confidence intervals are uniformly asymptotically valid; we state this formally in the following corollary.

\begin{corollary}[Uniform validity of specification-robust confidence intervals]\label{cor:CI-from-Theorem-uniform}
    For any pointwise consistent estimator $\wh{\mathrm{V}}_\nam$ of the asymptotic variance $\mathrm{V}_\nam(P)$ in \cref{thm:matching.adj.sets-uniform}, the $(1-\alpha)$ confidence interval $\mathcal{I}_\alpha$ for $\tauR$ constructed in \eqref{eq:CI-from-AR-recall} is asymptotically valid over the class $\sP\subset\sP_\nam$ satisfying the conditions of \cref{thm:matching.adj.sets-uniform}, i.e.,
       $$\lim_{n\to\infty} \sup_{P\in\sP}\left|P(\tauR\in \mathcal{I}_\alpha)-(1-\alpha)\right|=0.$$
\end{corollary}

\section{Implementation}\label{sec:implementation}

\paragraph{Parametric models as the baseline.} For the simulation examples (\cref{sec:simulations}), we estimate the conditional response functions using linear regression with treatment-covariate interactions, and construct specification-robust confidence intervals via the nonparametric bootstrap, as outlined in \cref{corollary-lm-based} and \cref{remark:AR-boot-lm-case}. As outlined in \cref{sec:lm-based}, this procedure does not require additional bias-correction for the transfer weights, but it does rely on the stronger assumption that at least one of the candidate linear models is correctly specified. It is also worthwhile to note that \cref{algo:rewt-adj-sets-lm-based} computes the final estimator using the first adjustment set as a reference. However, the same argument applies to any of the $K$ adjustment sets. We can thus compute an affine weighted estimator as in \eqref{AR-with-generic-nu}, namely, $$\wh\tau^{\,\mathtt{SR},\mathtt{lm}}(\nu)=\sum_{k=1}^K\nu_k\left(\wh{a}_k+\wh{b}_k^{\,\top} \P_n[\wh{w}(X_{S_\cap})X_{S_\cap}]\right),$$ and choose $\nu$ by minimizing the asymptotic variance of the resulting estimator (which can be estimated using bootstrap as in \cref{remark:AR-boot-lm-case}). However, our simulation examples only use $K=2$ candidate adjustment sets and as a result the above estimator turns out to be identical across all values of $\nu$, which makes the variance-minimizing step redundant.

\paragraph{Nonparametric regressors as the baseline.} For the 401(k) application (\cref{sec:401k}), we estimate the conditional response functions and propensity scores using generalized random forests (\texttt{grf} package; \citet{Wager2018,Athey2019grf}), and use the plug-in variance estimator of \cref{rem:plug-in-est} below to construct specification-robust confidence intervals. To avoid reducing the effective sample size, we use cross-fitting \citep{Chernozhukov2018DML} throughout:~We split the data into \emph{three} folds, fit the nuisance functions (conditional response $\mu_a(X_{S_k})=\E_P[Y=1\mid X_{S_k}]$, propensity $e(X_{S_k})=\P(A=1\mid X_{S_k})$, and $\E_P[\tau(X_{S_1})-\tau(X_{S_k})\mid X_{S_\cap}]$) on each fold. We evaluate these fitted regressor functions on the held-out `second' fold and solve the empirical convex optimization problem to find the transfer weights (as in \cref{algo:compute-reweighting}). Finally, we evaluate this weight function as well as the fitted nuisance functions on the data from the `last' fold to compute the specification-robust estimator (as in \cref{algo:rewt-adj-sets}). We repeat this with the roles of the folds swapped in a cyclic fashion and then aggregate the influence functions across all three folds. To improve numerical stability, we clip the propensity scores, winsorize the fitted weights and use a ridge regularization while finding the affine weights ($\nu$). Replication files are available at \href{https://github.com/ghoshadi/specification-robust-causal}{https://github.com/ghoshadi/specification-robust-causal}.

\begin{remark}[Plug-in estimator of variance]\label{rem:plug-in-est}
    In practice, we recommend using a plug-in estimator of the asymptotic variance $\mathrm{V}_\nam$ in \cref{thm:matching.adj.sets}, as follows.
    \begin{multline*}
\wh{\mathrm{V}}_\nam:=\wh\var\left[\wh{w}(X_{S_\cap})\left(\sum_{k=1}^K\wh{\nu}_k\wh\tau^\mathtt{AIPW}(X_{S_k})-\wh\tau^{\nam}_{bc}\right)\right.\\
+\left.\wh\lambda^\top \wh{w}(X_{S_\cap})(\Delta\wh\tau^\mathtt{AIPW}(X)-\wh{g}(X_{S_\cap}))\left(\sum_{k=1}^K \wh{\nu}_k \wh{\E}_P\left[\wh\tau(X_{S_k})\mid X_{S_\cap}\right]-\wh\tau^\nam_{bc}\right)\right],
\end{multline*}
where $\wh{\var}$ denotes empirical variance.
This complements our construction of an asymptotically valid confidence interval in \cref{cor:CI-from-Theorem-1}, and the resulting confidence interval demonstrates robust empirical performance, providing nominal coverage in all of our simulation and real-data examples in \cref{sec:applications}.
\end{remark}

\section{Main proofs}


\subsection{Proof of \texorpdfstring{\cref{propo:reweighted-estimand-makes-sense}}{Lemma 1 (interpreting the new estimand)}}\label{app:proof-of-propo:reweighted-estimand-makes-sense}

\begin{proof}
    It follows from the construction of the (population) weights $w^*$ that $$\E_P[w^*(X_{S_\cap})\tau(X_{S_1})]=\E_P[w^*(X_{S_\cap})\tau(X_{S_2})]=\dots=\E_P[w^*(X_{S_\cap})\tau(X_{S_K})]=:\tauR.$$
    Now \cref{assump-at-least-one-is-valid} posits that at least one of the adjustment sets, say $S_{j}$, is valid. Using the ignorability condition $Y(0), Y(1)\indep A\mid X_{S_j}$, we deduce the following.
    \begin{align*}
    \tauR &= \E_P[w^*(X_{S_\cap})\tau(X_{S_{j}})]\\
    &= \E_{P} [w^*(X_{S_\cap}) \left(\E_P[Y\mid \trt=1, X_{S_j}]-\E_P[Y\mid \trt=0,X_{S_j}]\right)]\tag{by definition}\\
    &= \E_{P} [w^*(X_{S_\cap}) \left(\E_P[Y(1)\mid \trt=1, X_{S_j}]-\E_P[Y(0)\mid \trt=0,X_{S_j}]\right)]\tag{using ignorability}\\
    &= \E_{P}[w^*(X_{S_\cap})(Y(1)-Y(0))]\tag{tower property}\\
    &= \E_{w^*P}[Y(1)-Y(0)],
    \end{align*}
    as desired to show.
\end{proof}

\subsection{Proof of \texorpdfstring{\cref{propo:exist-n-unique}}{Lemma 2 (existence and uniqueness of optimal weights)}}\label{app:proof-of-propo:exist-n-unique}

\begin{proof}
    The first conclusion follows from the fact that the moment generating function $\psi_P:\lambda\mapsto \E_P[\exp( g(X_{S_\cap})^\top\lambda)]$ is strictly convex and finite in a neighborhood of the origin. To show the strict convexity, note that $$\E_P[g(X_{S_\cap})\,g(X_{S_\cap})^\top]\succ 0\implies \nabla^2_\lambda \psi_P(\lambda)=\E_P[\exp(g(X_{S_\cap})^\top\lambda)g(X_{S_\cap})\,g(X_{S_\cap})^\top ]\succ 0,$$ because if there exists $v\neq 0$ such that $v^\top \nabla^2_\lambda \psi_P(\lambda)v = 0$, then the nonnegativity of $\exp(\cdot)$ implies that $v^\top \E_P[gg^\top]v=0$---which contradicts $\E_P[gg^\top]\succ 0$. 
      To show the second conclusion, note that we can rewrite the optimization problem \eqref{adj-sets-popln-opt} as $$w^*=\argmin_{w=w(X_{S_\cap})} \dkl(wP\,\|\, P)\quad\text{ subject to }\quad\E_P[w(X_{S_\cap})\,g(X_{S_\cap})]=0.$$ We apply a result from \citet{DonskerVaradhan1976} (see \cref{lemma-DV} for a formal statement) to conclude
    that the solution to the above problem, unique $P$-a.s., is given by $$w^*(x_{S_\cap})=\exp(g(x_{S_\cap})^\top \lambda^*)\;\Big/\;\E_P[\exp(g(X_{S_\cap})^\top \lambda^*)],$$ where $\lambda^*=\argmin_\lambda \E_P[\exp(g(X_{S_\cap})^\top \lambda)]$. The coercivity condition ensures that $\|\lambda^*\|<\infty$. 
\end{proof}

\subsection{Proof of \texorpdfstring{\cref{thm:matching.adj.sets}}{Theorem 1 (asymptotic distribution of our specification-robust estimator)}}\label{proof-of-main-CLT-with-bias-corr}

\begin{proof}
First, we analyze how close the reweighted estimators $\wh{\tau}_k^{\,\mathtt{R}}$ computed in \cref{algo:rewt-adj-sets} are to their oracle versions; see \eqref{emp-minus-oracle-tau-k} for a quantitative statement. The next step is to consider the aggregated estimators and analyze the difference $\sum_k\wh\nu_k\wh\tau_k^{\,\mathtt{R}}-\sum_k\nu_k^*\wh\tau_k^{\,\mathtt{R},*}$. This difference alone is not asymptotically negligible---we need the bias-correction term to account for first-order bias from estimating the transfer weights. With the additional bias-correction in place, we obtain an asymptotic linear expansion of the (bias-corrected and aggregated) specification-robust estimator, which yields the desired limiting distribution.

\paragraph{Difference of $\wh\tau_k^{\,\mathtt{R}}$ and their oracle versions.}
Define oracle estimators of the reweighted ATE $\tauR$ as 
$$\wh{\tau}_k^{\,\mathtt{R},*}:=\tauR+\P_n \left[w^*(X_{S_\cap})\left( \tau^\mathtt{AIPW}(X_{S_k})-\tauR\right)\right],\quad k=1,\dots,K,$$ where $w^*$ is the population weight function that solves \eqref{adj-sets-popln-opt}, $\tauR$ is the ATE for the reweighted population as defined in \eqref{eq:def-tauR}, and the AIPW functionals are as defined in \cref{thm:matching.adj.sets}. 
Since $\P_n [\wh{w}(X_{S_\cap})]=1$, we can deduce using algebra that
\begin{align*}
	\wh{\tau}_k^{\,\mathtt{R}}-\wh{\tau}_k^{\,\mathtt{R},*} &= \P_n \left[\wh{w}(X_{S_\cap})\left(\wh\tau^\mathtt{AIPW}(X_{S_k})-\tauR\right)-w^*(X_{S_\cap})\left( \tau^\mathtt{AIPW}(X_{S_k})-\tauR\right)\right]\\[1mm]
	&=\P_n \left[\left(\wh{w}(X_{S_\cap})-w^*(X_{S_\cap})\right) \left( \tau^\mathtt{AIPW}(X_{S_k})-\tauR\right)\right]\\[1mm]
    &\qquad\qquad+\P_n \left[{w}^*(X_{S_\cap})\left(\wh\tau^\mathtt{AIPW}(X_{S_k})- \tau^\mathtt{AIPW}(X_{S_k})\right)\right]\\[1mm]
	&\qquad\qquad+\P_n \left[\left(\wh{w}(X_{S_\cap})-w^*(X_{S_\cap})\right) \left(\wh\tau^\mathtt{AIPW}(X_{S_k})- \tau^\mathtt{AIPW}(X_{S_k})\right)\right]\\[1mm]
    &=: \mathbf{I}_n + \mathbf{II}_n + \mathbf{III}_n.
\end{align*}
It follows from standard proof techniques for AIPW estimators that $\mathbf{II}_n=\o(n^{-1/2})$ as well as $\mathbf{III}_n=\o(n^{-1/2})$; we delegate these proofs to \cref{lemma:AIPW-approx}. It only remains to tackle the first term.
We apply the tower property to deduce that
\begin{align*}
    &\E_P\left[\tau^\mathtt{AIPW}(X_{S_k})-\tau(X_{S_k}) \mid X_{S_\cap}\right]\\[1mm]
    &=\E_P\left[\E_P\left[\frac{A(Y-\mu_1(X_{S_k}))}{e(X_{S_k})}+\frac{(1-A)(Y-\mu_0(X_{S_k}))}{1-e(X_{S_k})}\;\Big|\; X_{S_k}\right]\;\Big|\; X_{S_\cap}\right]
    \tag{since $S_{\cap}\subseteq S_k$}\\
    &=0.\tag{by definition}
\end{align*}
This, in conjunction with \cref{lemma.one}, tells us  that 
$$\P_n\left[\left(\wh{w}(X_{S_\cap})-w^*(X_{S_\cap})\right) \left( \tau^\mathtt{AIPW}(X_{S_k})-\E_P[\tau(X_{S_k})\mid X_{S_\cap}]\right)\right]=\o(n^{-1/2}),$$
which allows us to replace the $\tau^\mathtt{AIPW}(X_{S_k})$ term in $\mathbf{I}_n$ with $\E_P[\tau(X_{S_k})\mid X_{S_\cap}]$, i.e., 
\begin{align*}
    \wh{\tau}_k^{\,\mathtt{R}}-\wh{\tau}_k^{\,\mathtt{R},*} &=\mathbf{I}_n+\o(n^{-1/2})\\
    &=\P_n \left[\left(\wh{w}(X_{S_\cap})-w^*(X_{S_\cap})\right) \left(\E_P[\tau(X_{S_k})\mid X_{S_\cap}]-\tauR\right)\right] + \o(n^{-1/2}).\label{emp-minus-oracle-tau-k}\numberthis
\end{align*}

\paragraph{Difference of aggregated $\wh\tau_k^{\,\mathtt{R}}$ and its oracle version.} Recall that $\nu^*_1:=1-\sum_{k=2}^K\nu^*_k$ and
$$\nu^*_{2:K} := \left(\E_P\left[ w^*(X_{S_\cap})\,g(X_{S_\cap})\,g(X_{S_\cap})^\top\right]\right)^{-1}\E_P\left[ w^*(X_{S_\cap})\,g(X_{S_\cap})\left(\tau(X_{S_1})-\tauR \right)\right].$$ To quantify how close $\wh\tau^\nam_{bc}-B_n=\sum_{k=1}^K \wh{\nu}_k \wh{\tau}_k^{\,\mathtt{R}}$ is to its oracle $\sum_{k=1}^K\nu^*_k\wh{\tau}_k^{\,\mathtt{R},*}$, we begin by writing
\begin{align*}
   \sum_{k=1}^K  \wh\nu_k \wh{\tau}_k^{\,\mathtt{R}}- \sum_{k=1}^K \nu_k^*\,\wh{\tau}_k^{\,\mathtt{R},*}
   &= \sum_{k=1}^K(\wh\nu_k - \nu_k^*)   \wh{\tau}_k^{\,\mathtt{R},*}+ \sum_{k=1}^K\nu_k^*(\wh{\tau}_k^{\,\mathtt{R}}-\wh{\tau}_k^{\,\mathtt{R},*}) + \sum_{k=1}^K(\wh\nu_k - \nu_k^*)   (\wh{\tau}_k^{\,\mathtt{R}}-\wh{\tau}_k^{\,\mathtt{R},*}).
   \numberthis\label{AR-gap-AR-star-1}
\end{align*}
For the first term in the above display, use $\sum_{k=1}^K\wh\nu_k=\sum_{k=1}^K\nu_k^*=1$ to deduce that 
\begin{align*}
    \sum_{k=1}^K  (\wh\nu_k - \nu_k^*)\, \wh{\tau}_k^{\,\mathtt{R},*} &= \sum_{k=1}^K(\wh\nu_k - \nu_k^*)\,   \P_n \left[w^*(X_{S_\cap})\left( \tau^\mathtt{AIPW}(X_{S_k})-\tauR\right)\right]\\
    &=\o(1)\O(n^{-1/2})=\o(n^{-1/2}), \numberthis\label{AR-gap-AR-star-2}
\end{align*}
since $$\E_P \left[w^*(X_{S_\cap})\left( \tau^\mathtt{AIPW}(X_{S_k})-\tauR\right)\right]\stackrel{\text{Tower}}{=}\E_P \left[w^*(X_{S_\cap})\left( \tau(X_{S_k})-\tauR\right)\right]\stackrel{\eqref{eq:def-tauR}}{=}0.$$ 
Next, we show that the last sum in \eqref{AR-gap-AR-star-1} is asymptotically negligible. To do this, we start from \eqref{emp-minus-oracle-tau-k}, and apply the Cauchy-Schwarz inequality and \cref{lemma:rate-of-weights} (which shows that $\|\wh{w}-w^*\|_{L^2(\P_n)}=\o(n^{-1/4})$) to deduce that
\begin{equation}\label{rate-emp-minus-oracle-tau-k}
|\wh{\tau}_k^{\,\mathtt{R}}-\wh{\tau}_k^{\,\mathtt{R},*} |\le \|\wh{w}(X_{S_\cap})-w^*(X_{S_\cap})\|_{L^2(\P_n)}\|\E_P[\tau(X_{S_k})\mid X_{S_\cap}]-\tauR\|_{L^2(\P_n)}+\o(n^{-1/2})=\o(n^{-1/4}),
\end{equation}
where $\|\E_P[\tau(X_{S_k})\mid X_{S_\cap}]-\tauR\|_{L^2(\P_n)}=\O(1)$ followed from the fact that $\E_P[\tau(X_{S_k})\mid X_{S_\cap}]$ has finite second moment (since its moment generating function is finite on an open neighborhood of the origin). On the other hand, \cref{lemma:rate-of-nu} shows that $\wh\nu_k-\nu_k^*=\o(n^{-1/4})$---which combined with \eqref{rate-emp-minus-oracle-tau-k} yields the following.
\begin{equation}
   \sum_{k=1}^K (\wh\nu_k - \nu_k^*)  (\wh{\tau}_k^{\,\mathtt{R}}-\wh{\tau}_k^{\,\mathtt{R},*})=\o(n^{-1/4})\o(n^{-1/4})=\o(n^{-1/2}).\label{AR-gap-AR-star-3}
\end{equation}
Having shown the first and third sums in \eqref{AR-gap-AR-star-1} are asymptotically negligible (\cref{AR-gap-AR-star-2,AR-gap-AR-star-3}), we arrive at
\begin{align*}
        \sum_{k=1}^K \wh\nu_k\,\wh{\tau}_k^{\,\mathtt{R}}- \sum_{k=1}^K \nu_k^*\,\wh{\tau}_k^{\,\mathtt{R},*} &=\sum_{k=1}^K\nu_k^*(\wh{\tau}_k^{\,\mathtt{R}}-\wh{\tau}_k^{\,\mathtt{R},*})+\o(n^{-1/2})\label{AR-gap-second-last}\numberthis
\end{align*}
Denote by $\phi(X_{S_\cap};\nu)$ the conditional mean of contrasts aggregated using affine weights $\nu$ and centered by $\tauR$, i.e., $$\phi(X_{S_\cap};\nu):=\sum_{k=1}^K\nu_k\,\E_P[\tau(X_{S_k})\mid X_{S_\cap}]-\tauR.$$
Using this notation and \eqref{emp-minus-oracle-tau-k}, we can write the right-hand side of \eqref{AR-gap-second-last} as
\begin{equation}\label{gap-between-agg-AR-and-oracle}
   \sum_{k=1}^K \wh\nu_k\,\wh{\tau}_k^{\,\mathtt{R}}- \sum_{k=1}^K \nu_k^*\,\wh{\tau}_k^{\,\mathtt{R},*} =\P_n \left[\left(\wh{w}(X_{S_\cap})-w^*(X_{S_\cap})\right) \phi(X_{S_\cap};\nu^*)\right]+\o(n^{-1/2}).
\end{equation}

\paragraph{Using the bias-correction.} In view of the last display, it remains to handle the first-order bias from estimating the transfer weights, which we achieve by using a bias-correction term $B_n$ such that $ \wh\tau^\nam_{bc} = \sum_{k=1}^K\wh\nu_k\,\wh\tau_k^{\,\mathtt{R}} + B_n$; see \cref{algo:rewt-adj-sets} for the definition of $B_n$. We resume from \eqref{gap-between-agg-AR-and-oracle}, and apply \cref{lemma:Taylor-expand-weights}---which gives a Taylor expansion for $(\wh{w}(X_{S_\cap})-w^*(X_{S_\cap}))$---to deduce that
\begin{align*}
    \wh\tau^\nam_{bc} - \sum_{k=1}^K \nu_k^*\,\wh{\tau}_k^{\,\mathtt{R},*} &= B_n+\P_n\left[ w^*(X_{S_\cap})\left(\wh{g}(X_{S_\cap})^\top\,\wh{\lambda} -g(X_{S_\cap})^\top\lambda^*\right)\,\phi(X_{S_\cap};\nu^*)\right]+\o(n^{-1/2})\\
    &=: B_n + \Delta^\lambda_n+\Delta^g_n + \o(n^{-1/2}),\label{AR-minus-oracle}\numberthis
\end{align*}
 where
 \begin{align*}
     \Delta^\lambda_n&:=(\wh{\lambda}-\lambda^*)^\top \P_n\left[ w^*(X_{S_\cap})\,g(X_{S_\cap})\,\phi(X_{S_\cap};\nu^*)\right],\\[2mm]
     \Delta^g_n&:=\wh{\lambda}^{\,\top} \,\P_n\left[ w^*(X_{S_\cap})\left(\wh{g}(X_{S_\cap})-g(X_{S_\cap})\right)\,\phi(X_{S_\cap};\nu^*)\right].
 \end{align*}
We first show that $\Delta_n^\lambda$ is asymptotically negligible (i.e., $\o(n^{-1/2})$).
Use $\nu^*_1=1-\sum_{k=2}^K\nu^*_k$ to write 
\begin{align*}
\phi(X_{S_\cap};\nu^*)&=\sum_{k=1}^K\nu^*_k\,\E_P[\tau(X_{S_k})\mid X_{S_\cap}]-\tauR=\E_P[\tau(X_{S_1})\mid X_{S_\cap}]-\tauR -g(X_{S_\cap})^\top \nu^*_{2:K}.
\end{align*} On the other hand, the tower property gives
\begin{align*}
   \E_P\left[ w^*(X_{S_\cap})\,g(X_{S_\cap})\,g(X_{S_\cap})^\top \nu^*_{2:K}\right]  &= \E_P\left[ w^*(X_{S_\cap})\,g(X_{S_\cap})\left(\tau(X_{S_1})-\tauR \right)\right]\\
&=\E_P\left[ w^*(X_{S_\cap})\,g(X_{S_\cap})\left(\E_P[\tau(X_{S_1})\mid X_{S_\cap}]-\tauR \right)\right]\end{align*}
This implies that
$\E_P[ w^*(X_{S_\cap})\,g(X_{S_\cap})\left(\E_P[\tau(X_{S_1})\mid X_{S_\cap}]-\tauR -g(X_{S_\cap})^\top \nu^*_{2:K}\right)]=0$,
which we can restate as
$$\E_P[ w^*(X_{S_\cap})\,g(X_{S_\cap})\,\phi(X_{S_\cap};\nu^*)]=0.$$
We combine the above display with the fact that $\wh{\lambda}\Pto \lambda^*$ (from~\cref{lemma:rate-of-lambda}) to deduce that
 \begin{align*}
     \Delta^\lambda_n&:=(\wh{\lambda}-\lambda^*)^\top \P_n\left[ w^*(X_{S_\cap})\,g(X_{S_\cap})\,\phi(X_{S_\cap};\nu^*)\right]\\[1mm]
&=\o(1)\O(n^{-1/2})=\o(n^{-1/2}).\label{Delta-lambda}\numberthis
 \end{align*}
It only remains to handle the sum $B_n+\Delta^g_n$ in \eqref{AR-minus-oracle}. Recall that  $g(X_{S_\cap})=\E_P[\Delta\tau(X)\mid X_{S_\cap})]=\E_P[\Delta\tau^\mathtt{AIPW}(X)\mid X_{S_\cap})]$ where $$\Delta\tau^\mathtt{AIPW}(X):=(\tau^\mathtt{AIPW}(X_{S_1})-\tau^\mathtt{AIPW}(X_{S_2}),\,\dots,\, \tau^\mathtt{AIPW}(X_{S_1})-\tau^\mathtt{AIPW}(X_{S_K})).$$
We show in \cref{lemma:bias-corr-is-close-to-oracle} that 
$B_n=\wh{B}_n^* + \o(n^{-1/2})$ where 
$$\wh{B}_n^*=\wh{\lambda}^\top \,\P_n\left[w^*(X_{S_\cap})\,(\Delta\tau^\mathtt{AIPW}(X)-\wh{g}(X_{S_\cap}))\,\phi(X_{S_\cap};\nu^*)\right].$$ (The rationale behind this unusal notation involving both ${\,}\hat{\,}\,$ and $*$ is that $\wh{B}_n^*$ combines both estimates and oracle components, and denotes a quantity defined only for the purposes of this proof.)
Using $\wh{B}_n^*$ as a replacement for $B_n$ in \eqref{AR-minus-oracle} and invoking \eqref{Delta-lambda}, we can continue from \eqref{AR-minus-oracle} to write
\begin{align*}
    \wh\tau^\nam_{bc}-\sum_{k=1}^K \nu_k^*\,\wh{\tau}_k^{\,\mathtt{R},*}  &= \wh{B}_n^* + \Delta^g_n + \o(n^{-1/2})\\
    &= \wh{\lambda}^{\,\top} \,\P_n\left[ w^*(X_{S_\cap})\,\phi(X_{S_\cap};\nu^*)\left(\Delta\tau^\mathtt{AIPW}(X)-g(X_{S_\cap})\right)\right]+\o(n^{-1/2})\\[1mm]
    &=\lambda^{*,\top}\P_n\left[ w^*(X_{S_\cap})\,\phi(X_{S_\cap};\nu^*)\left(\Delta\tau^\mathtt{AIPW}(X)-g(X_{S_\cap})\right)\right]+\o(n^{-1/2}),\label{pre-ALR}\numberthis
\end{align*}
where the last equality is due to the fact that
$$ (\wh{\lambda}-\lambda^*)^\top \,\P_n\left[ w^*(X_{S_\cap})\,\phi(X_{S_\cap};\nu^*)\left(\Delta\tau^\mathtt{AIPW}(X)-g(X_{S_\cap})\right)\right]=\o(1)\O(n^{-1/2})=\o(n^{-1/2}),$$
which follows from $\E_P[\Delta\tau^\mathtt{AIPW}(X)\mid X_{S_\cap}]=g(X_{S_\cap})$ and $\wh\lambda-\lambda^*=\o(1)$ (\cref{lemma:rate-of-lambda}).

\paragraph{Asymptotic linear expansion.} We deduce from \eqref{pre-ALR} that  
\begin{align*}
    \wh\tau^\nam_{bc}-\tauR &= \sum_{k=1}^K \nu_k^*\,\wh{\tau}_k^{\,\mathtt{R},*} -\tauR + (\lambda^*)^\top\P_n\left[ w^*(X_{S_\cap})\left(\Delta\tau^\mathtt{AIPW}(X)-g(X_{S_\cap})\right)\,\phi(X_{S_\cap};\nu^*)\right]+\o(n^{-1/2})
    \\
    &=\P_n \left[w^*(X_{S_\cap})\left(\sum_{k=1}^K\nu^*_k\left( \tau^\mathtt{AIPW}(X_{S_k})-\tauR\right)\right.\right.\\
    &\qquad\qquad\left.\left.+\lambda^{*,\top} \left(\Delta\tau^\mathtt{AIPW}(X)-g(X_{S_\cap})\right)\left(\sum_{k=1}^K\nu^*_k\,\E_P[\tau(X_{S_k})\mid X_{S_\cap}]-\tauR\right)\right)\right]+\o(n^{-1/2}),
\end{align*}
which gives us the desired result via the classical central limit theorem.
\end{proof}

\subsection{Proof of \texorpdfstring{\cref{thm:matching.adj.sets-protect-covariates}}{Theorem 2 (protecting covariates)}}
\label{proof:thm:matching.adj.sets-protect-covariates}

\begin{proof}
The proof closely mirrors the proof of \cref{thm:matching.adj.sets} in \cref{proof-of-main-CLT-with-bias-corr}; the key differences are that, the vector of differences-in-contrasts $\Delta\tau(X)\in\R^{K-1}$ is now replaced with the constraint vector $\delta(X)\in\R^{K-1+d}$ as defined in \eqref{eq:new-delta}, and $g(X_{S_\cap})=\E_P[\Delta\tau(X)\mid X_{S_\cap}]$ is replaced with $g(X_{S_\cap})=\E_P[\delta(X)\mid X_{S_\cap}]\in \R^{K-1+d}$. It is important to note that here $\lambda^*,\nu^*,\wh\lambda,\wh\nu\in\R^{K-1+d}$ and that only the weights $\wh\nu_1,\dots,\wh\nu_K$ are used to aggregate the candidate estimators while the bias-correction involves the entire vectors $\wh\nu$ and $\wh{g}$. 

\paragraph{Difference between reweighted estimators and their oracle.} Define the oracle estimators of the reweighted ATE $\tauRp$ as 
$$\wh{\tau}_k^{\,\mathtt{R},*}:=\tauRp+\P_n \left[w^*(X_{S_\cap})\left( \tau^\mathtt{AIPW}(X_{S_k})-\tauRp\right)\right],\quad k=1,\dots,K.$$
    The arguments used in \cref{proof-of-main-CLT-with-bias-corr} to show \eqref{gap-between-agg-AR-and-oracle} apply here verbatim, and yield the following.
    \begin{align*}
    \sum_{k=1}^K \wh\nu_k\,\wh{\tau}_k^{\,\mathtt{R}} - \sum_{k=1}^K \nu_k^*\,\wh{\tau}_k^{\,\mathtt{R},*} &=\sum_{k=1}^K\nu_k^*(\wh{\tau}_k^{\,\mathtt{R}}-\wh{\tau}_k^{\,\mathtt{R},*})+\o(n^{-1/2})\\[2mm]
    &=\P_n \left[\left(\wh{w}(X_{S_\cap})-w^*(X_{S_\cap})\right) \phi(X_{S_\cap};\nu^*)\right]+\o(n^{-1/2}),\label{Thm2.eq1}\numberthis
\end{align*}
where $\phi(X_{S_\cap};\nu^*)$ denotes the conditional mean of contrasts aggregated using  the affine weights $\nu_1^*,\dots,\nu_K^*$ and centered by $\tauRp$, i.e., 
\begin{equation}
    \label{def-f-x-nu}
    \begin{split}
        \phi(X_{S_\cap};\nu^*)&:=\sum_{k=1}^K\nu_k^*\,\E_P[\tau(X_{S_k})\mid X_{S_\cap}]-\tauRp\\[2mm]
        &=\E_P[\tau(X_{S_1})\mid X_{S_\cap}]-\tauRp-g_{1:(K-1)}(X_{S_\cap})^\top \nu_{2:K}^*.
    \end{split}
\end{equation}
Now comes a point where this proof deviates from \cref{proof-of-main-CLT-with-bias-corr}:~Here we can no longer have $\nu^*$ such that $\E_P[w^*(X_{S_\cap})\, g(X_{S_\cap})\,\phi(X_{S_\cap}; \nu^*)]=0$ (this yields a system of $(K-1+d)$ equations in $(K-1)$ variables). We handle this issue by modifying the bias-correction so that the quantity $g_{1:(K-1)}^\top\nu_{2:K}^*$ in $\phi$ (in \eqref{def-f-x-nu}) is replaced with $g^\top \nu_{2:(K+d)}^*$; see \eqref{def-f-x-nu-tilde} below.

\paragraph{Extending the oracle part to incorporate  additional constraints.} We will use the short-hand $\wh\nu_f=\wh\nu_{(K+1):(K+d)}$ (and the same for $\nu^*$) to remove notational clutter. Observe that the proof of the supporting results
\cref{lemma:AIPW-approx,lemma:rate-of-double-conditional-mean,lemma:rate-of-nu,lemma:rate-of-lambda,lemma:rate-of-weights} all go through. Next, using $\|\wh{w}-w^*\|_{L^2(\P_n)}=\o(n^{-1/4})$ from \cref{lemma:rate-of-weights}, $\wh{\nu}-\nu^*=\o(n^{-1/4})$ from \cref{lemma:rate-of-nu}, and algebra, we deduce that
\begin{align*}
    &\P_n\left[\wh{w}(f-\P_n [f])^\top\wh\nu_f\right]- \P_n\left[w^*(f-\E_P [f])^\top\nu^*_f\right]\\[1mm]
    &=\P_n[(\wh{w}-w^*)(f-\E_P [f])^\top \nu^*_f] - \P_n[w^*\nu^*_f]^\top (\P_n[f]-\E_P[f])\\[1mm]
    &\qquad\qquad\qquad\qquad\qquad\qquad\quad+(\wh\nu_f-\nu^*)^\top\P_n[w^*(f-\E_P [f])] + \o(n^{-1/2})\\[1mm]
    &=\P_n[(\wh{w}-w^*)(f-\E_P [f])^\top \nu^*_f] - \E_P[w^*\nu^*_f]^\top (\P_n[f]-\E_P[f]) + \o(n^{-1/2}),\label{Thm2.Eq2}\numberthis
\end{align*}
since $\E_P[w^*(f-\E_P[f])]=0$ by definition. We can now combine \eqref{Thm2.eq1} and \eqref{Thm2.Eq2} to deduce that
\begin{align*}
    &\left(\sum_{k=1}^K\wh\nu_k\,\wh\tau_k^{\,\mathtt{R} } -\P_n\left[\wh{w}(f-\P_n [f])^\top\wh\nu_f\right]\right)\\
    &\qquad\qquad\qquad- \left(\sum_{k=1}^K \nu_k^*\,\wh\tau_k^{\,\mathtt{R},*} - \P_n\left[w^*(f-\E_P [f])^\top\nu^*_f\right] +\nu^{*,\top}_f (\P_n[f]-\E_P[f])\right)\\[2mm]
    &=\P_n \left[\left(\wh{w}(X_{S_\cap})-w^*(X_{S_\cap})\right) \phi(X_{S_\cap};\nu^*)\right]-\P_n[(\wh{w}-w^*)(f-\E_P [f])^\top \nu^*_f]+\o(n^{-1/2})\\[2mm]
    &=\P_n \left[\left(\wh{w}(X_{S_\cap})-w^*(X_{S_\cap})\right) \wt\phi(X_{S_\cap};\nu^*)\right]+\o(n^{-1/2}),\label{gap-proof-potect}\numberthis
\end{align*}
where \begin{equation}
    \label{def-f-x-nu-tilde}
\wt\phi(X_{S_\cap};\nu^*):=\E_P[\tau(X_{S_1})\mid X_{S_\cap}]-\tauRp-g(X_{S_\cap})^\top \nu^*_{2:(K+d)}.
\end{equation}
The key difference between the above display and \eqref{def-f-x-nu} is that the above uses the full constraint vector $g(X_{S_\cap})=\E_P[(\Delta\tau, f-\E_P [f])\mid X_{S_\cap}]$, not just its first $K-1$ components.

\paragraph{Completing the proof.}
With $\wt\phi(X_{S_\cap},\nu^*)$ (as defined in \eqref{def-f-x-nu-tilde}) replacing $\phi(X_{S_\cap},\nu^*)$, proofs of the supporting results
\cref{lemma:Taylor-expand-weights,lemma:bias-corr-is-close-to-oracle} also go through. As a result, the same arguments used in the proof of \cref{thm:matching.adj.sets} in \cref{proof-of-main-CLT-with-bias-corr} (precisely, \eqref{Delta-lambda} and \eqref{pre-ALR}) yield the following.
\begin{align*}
    &\P_n \left[\left(\wh{w}(X_{S_\cap})-w^*(X_{S_\cap})\right) \wt\phi(X_{S_\cap};\nu^*)\right]\\[2mm]
    &+\P_n\left[\wh{\lambda}^{\,\top} (\wh\delta^{\,\mathtt{AIPW}}-\wh{g}(X_{S_\cap}))\left(\wh{\E}\left[\wh\tau(X_{S_1})\mid X_{S_\cap}\right]-\wh\tau^{\,\mathtt{R}}_1-\wh{g}(X_{S_\cap})^\top\wh\nu_{2:(K+d)}\right)\right]\\[2mm]
    &=\P_n\left[{\lambda}^{*,\top} (\delta^{\,\mathtt{AIPW}}-{g}(X_{S_\cap}))\left(\E_P\left[\tau(X_{S_1})\mid X_{S_\cap}\right]-\tauRp-{g}(X_{S_\cap})^\top\nu_{2:(K+d)}\right)\right]+\o(n^{-1/2}).
\end{align*}
Combining the above display with \eqref{gap-proof-potect}, we obtain the following
asymptotic linear expansion.
\begin{align*}
    \htauRp - \tauRp &=\sum_{k=1}^K\wh\nu_k\,\wh\tau_k^{\,\mathtt{R}} -\tauRp -\P_n\left[\wh{w}(f-\P_n [f])^\top\wh\nu_f\right]\tag{\cref{algo:rewt-adj-sets-protect-covariates}}\\
    &\qquad\qquad+\P_n\left[\wh{\lambda}^{\,\top} (\wh\delta^{\,\mathtt{AIPW}}-\wh{g}(X_{S_\cap}))\left(\wh{\E}\left[\wh\tau(X_{S_1})\mid X_{S_\cap}\right]-\wh\tau^{\,\mathtt{R}}_1-\wh{g}(X_{S_\cap})^\top\wh\nu_{2:(K+d)}\right)\right]\\
    &=\P_n\bigg[w^*(X_{S_\cap})\bigg(\sum_{k=1}^K \nu_k^*\,\tau^{\,\mathtt{AIPW}}(X_{S_k}) -\tauRp\bigg)- (w^*(X_{S_\cap})-1)(f-\E_P [f])^\top\nu^*_{(K+1):(K+d)} \\
 &\qquad\;\;+{\lambda}^{*,\top} (\delta^{\,\mathtt{AIPW}}-{g}(X_{S_\cap}))\left(\E_P\left[\tau(X_{S_1})\mid X_{S_\cap}\right]-\tauRp-{g}(X_{S_\cap})^\top\nu_{2:(K+d)}\right)\bigg]\\
 &\qquad\;\;+\o(n^{-1/2}),
\end{align*}
which yields the desired result.
\end{proof}

\subsection{Proof of \texorpdfstring{\cref{global-bound}}{global bound on ATE shift}}\label{proof-global-bound}

\begin{proof} Under \cref{assump-at-least-one-is-valid}, there exists one valid adjustment set $S_j$. Now, 
    the Donsker-Varadhan inequality states that $$\E_Q[\varphi]\le \dkl(Q\,\|\,P)+\log \E_P\left[e^\varphi\right].$$ Using this for $Q=w^*P$ and $\varphi=\pm t(f_j-\overline\tau_j)$, we obtain $$\left|\tauR-\overline\tau\right|=\left|\E_{w^*P}\left[f_j -\overline\tau_j\right]\right|\le \frac{\dkl(w^*P\,\|\,P)}{t}+\frac{\sigma^2t}{2}.$$
    Optimizing over $t>0$ gives us the desired bound.
\end{proof}

\subsection{Proof of \texorpdfstring{\cref{local-bound}}{local bound on ATE shift}}\label{proof-local-bound}

\begin{proof} Denote by $\Psi(\lambda)$ the log-MGF of $g(X_{S_\cap})$, i.e., $$\Psi(\lambda):=\log\E_P[\exp(\lambda^\top g(X_{S_\cap}))].$$
  Define $F(\lambda):=\nabla \Psi(\lambda)$, and note that $F(0)=\Delta$ and $\nabla F(0)=\Sigma_g$. Since $\lambda^*$ is the unique minimizer of $\Psi$ (cf.~\cref{propo:exist-n-unique}), we can write $F\left(\lambda^*\right)=0$. Therefore, a Taylor expansion of $F$ around $\lambda^*$ yields
  \begin{equation}
      \label{Psi-taylor}
      0=\Delta+\Sigma_g \lambda^*+R_F(\lambda^*), \quad\left\|R_F(\lambda)\right\| \leq C\|\lambda\|^2,
  \end{equation}
where $C$ depends on the third moments of $g$ (finite since the moment generating function of $g$ is finite in an open neighborhood of the origin). Note that $\Sigma_g$ is positive definite (by assumption), denote by $\kappa_0>0$ its smallest eigenvalue. Since $\nabla^2 \Psi\succeq \frac{\kappa_0}{2}I$, it follows that 
$$(\lambda^*-0)^\top (F(\lambda^*)-F(0))\ge \frac{\kappa_0}{2}\|\lambda^* - 0\|^2\implies \|\lambda^*\|\le \frac{2}{\kappa_0}\|\Delta\|.$$
Using this upper bound, we deduce from \eqref{Psi-taylor} that
\begin{equation}\label{lambda-local-expansion}
    \|\lambda^*+\Sigma_g^{-1}\Delta\|=\|\Sigma_g^{-1}R_F(\lambda^*)\|\le \frac{1}{\kappa_0}\|R_F(\lambda^*)\|\le \frac{C}{\kappa_0}\|\lambda^*\|^2\le \frac{4C}{\kappa_0^3}\|\Delta\|^2.
\end{equation}
Next, recall that $w^*=e^{g^\top \lambda^*} / \E_P[e^{g^\top \lambda^*}]$. Since $\left\|\lambda^*\right\|=O(\|\Delta\|)$, we deduce, using $e^u=1+u+O(u^2)$ in the numerator and $1+\lambda^{*, \top} \Delta+O(\|\Delta\|^2)$ in the denominator, that
\begin{equation}
    \label{w-local}
    w^*(x)=1+\lambda^{*, \top}(g(x)-\Delta)+R_w(x),\quad |R_w(x)|\le C'\|\Delta\|^2\|g(x))\|^2,
\end{equation}
for some universal constant $C'\in(0,\infty)$. Since $S_j$ satisfies ignorability and $\E_P[w^*]=1$, $$\tauR-\overline\tau=\tauR-\overline\tau_j=\mathbb{E}_P\left[\left(w^*-1\right) (f_j-\overline\tau_j)\right].$$ We can thus continue from \eqref{w-local} to write
$$
\tauR-\overline\tau=\lambda^{*, \top} \mathbb{E}\left[(g-\Delta)\left(f_j-\overline\tau_j\right)\right]+O(\|\Delta\|^2)=\lambda^{*, \top} c_j+O(\|\Delta\|^2),
$$
where we used Hölder inequality and bounded moments to control the remainder term. This combined with \eqref{lambda-local-expansion} yields the desired conclusion.

\paragraph{The Gaussian case.} When the conditional contrasts $f_k(X_{S_\cap})=\E_P[\tau(X_{S_k})\mid X_{S_\cap}]$ are jointly Gaussian, the log-MGF of $g=(f_1-f_2,\dots,f_1-f_K)^\top$ can be written as $$\Psi(\lambda)=\Delta^\top\lambda+\frac{1}{2}\lambda^\top \Sigma_g\lambda.$$ Since $\Sigma_g\succ 0$, the unique minimizer of $\Psi(\lambda)$ is given by $\lambda^*=-\Sigma_g^{-1}\Delta$, with no remainder. Define $$U:=(g-\Delta)^\top \lambda^*,\qquad\text{and}\qquad V:=f_j-\overline\tau_j.$$ Since $f_j$ are jointly Gaussian, it follows that $(U,V)$ is jointly Gaussian with moment generating function $M(s,t)=\exp(\frac{1}{2}(s^2\sigma^2_U+2st\sigma_{UV}+t^2\sigma^2_V)).$ Differentiating this w.r.t.~$t$ and setting $t=0$ and $s=1$, we obtain $$\E_P[e^U V]=\E_P[e^U]\cdot\cov_P(U,V).$$
Consequently, 
\begin{equation*}
    \begin{split}
        \tauR-\overline\tau=\tauR-\overline\tau_j&=\E_{w^*P}\left[f_j-\overline\tau_j\right]\\
        &=\E_P\left[\frac{e^U}{\E_P[e^U]}V\right]=\cov_P(U,V)=-\Delta^\top\Sigma_g^{-1}c_j,
    \end{split}
\end{equation*}
as desired to show.
\end{proof}

\subsection{Proof of \texorpdfstring{\cref{propo:reweighted-estimand-makes-sense-lm-based}}{Lemma B.1 (interpreting estimand; special case)}}
\label{proof-of-propo3}

\begin{proof} The proof essentially follows along the lines of the proof of \cref{propo:reweighted-estimand-makes-sense}.
    We deduce from the construction of the (population) weights $w^*$ that $$\E_P[w^*(X_{S_\cap})\tau^*(X_{S_1})]=\E_P[w^*(X_{S_\cap})\tau^*(X_{S_2})]=\dots=\E_P[w^*(X_{S_\cap})\tau^*(X_{S_K})]=:\tauR.$$
    On the other hand,
    \begin{align*}
   \tauR &:= \E_P[w^*(X_{S_\cap})\tau^*(X_{S_{j}})]= \E_{P} [w^*(X_{S_\cap}) (\tau_j^*+X_{S_j}^\top \gamma_j^*)]\tag{by definition}\\
    &= \E_{P}[w^*(X_{S_\cap})(Y(1)-Y(0))]\tag{by \cref{assump:linear-model} and the tower property}\\
    &= \E_{w^*P}[Y(1)-Y(0)],
    \end{align*}
    as desired to show.
\end{proof}

\subsection{Proof of \texorpdfstring{\cref{propo:exist-n-unique-lm-based}}{Lemma B.2 (existence and uniqueness; special case)}}\label{proof-of-propo4}

\begin{proof}
    The proof is completely analogous to the proof of \cref{propo:exist-n-unique} with $\tau(X_{S_k})$ replaced with $\tau^*(X_{S_k})$, hence omitted.
\end{proof}

\subsection{Proof of \texorpdfstring{\cref{thm:matching.adj.sets-lm-based}}{Theorem B.1 (asymptotic distribution for parametric baseline)}}\label{app:proof-of-thm-lm-based}

\begin{proof} We will roughly follow the proof of \cref{thm:matching.adj.sets} in \cref{proof-of-main-CLT-with-bias-corr}, with the key difference being that linearity of the baseline models simplifies the supporting results (e.g., the Taylor expansion for the transfer weights); we collect all supporting results in a single lemma (\cref{lemma:rate-weights-lm-based}) which provides the asymptotic linear expansions used in this proof. We begin by recalling that $$\wh{\tau}^\mathtt{SR,lm} = \wh{a}_1 + \wh{b}_1^{\,\top} \P_n \left[\wh{w}(X_{S_\cap})X_{S_\cap}\right].$$
Observe that
\begin{align*}
    \tauR &= \E_P[w^*(X_{S_\cap})\tau^*(X_{S_1})] \tag{from \cref{propo:reweighted-estimand-makes-sense-lm-based}}\\
    &=\E_P[w^*(X_{S_\cap})\E_P[\tau^*(X_{S_1})\mid X_{S_\cap}]] \tag{using the tower property}\\
    &= \E_P[w^*(X_{S_\cap})(a_1+X_{S_\cap}^\top b_1)]\tag{using \cref{assump:linear-model-further}}\\
    &= a_1 + b_1^\top \E_P[w^*(X_{S_\cap})X_{S_\cap}]. \tag{using $\E_P[w^*(X_{S_\cap})]=1$}
\end{align*}
Therefore we can write 
\begin{align*}
    \wh{\tau}^{\,\mathtt{SR,lm}}-\tauR &= (\wh{a}_1-a_1)+(\wh{b}_1-b_1)^\top \E_P[w^*(X_{S_\cap})X_{S_\cap}] \\
    &\qquad+ {b}_1^{\top} \P_n \left[(\wh{w}(X_{S_\cap})-w^*(X_{S_\cap}))X_{S_\cap}\right]+{b}_1^{\top} \left(\P_n [w^*(X_{S_\cap})X_{S_\cap}]-\E_P[w^*(X_{S_\cap})X_{S_\cap}]\right) \\
    &\qquad+(\wh{b}_1-b_1)^\top \P_n \left[(\wh{w}(X_{S_\cap})-w^*(X_{S_\cap}))X_{S_\cap}\right]\\
    &\qquad+(\wh{b}_1-b_1)^\top \left(\P_n [w^*(X_{S_\cap})X_{S_\cap}]-\E_P[w^*(X_{S_\cap})X_{S_\cap}]\right).\numberthis\label{AR-lm-decomp}
\end{align*}
It follows from the Huber–White analysis of linear regression (see, e.g., \citet{Buja2019}) that $$(\wh{a}_1-a_1)=\P_n[\psi_{a_1}(D)]+\o(n^{-1/2}),\quad\text{ and }\quad(\wh{b}_1-b_1)=\P_n[\psi_{b_1}(D)]+\o(n^{-1/2}),$$ for some influence functions $\psi_{a_1}$ and $\psi_{b_1}$ with $\E_P[\psi_{a_1}(D)]=0$ and $\E_P[\psi_{b_1}(D)]=0$; see \cref{lemma:rate-weights-lm-based} for further details.
We also show in \cref{lemma:rate-weights-lm-based} that there exists an influence function $\psi_{c}$ with $\E_P[\psi_{c}(D)]=0$ such that 
\begin{equation*}
    \P_n [(\wh{w}(X_{S_\cap})-w^*(X_{S_\cap})) X_{S_\cap}] =\P_n [\psi_{c}(D)]+\o(n^{-1/2}).
\end{equation*}
We can therefore conclude that the last two terms in \eqref{AR-lm-decomp} are $\o(n^{-1/2})$, and thus deduce the following.
\begin{align*}
    \wh{\tau}^{\,\mathtt{SR,lm}}-\tauR &= \P_n \left[\psi_{a_1}(D)\right]+ \E_P[w^*(X_{S_\cap})X_{S_\cap}^\top]\cdot \P_n \left[\psi_{b_1}(D)\right] \\
    &\qquad+ {b}_1^{\top} \P_n \left[\psi_{c}(D)\right]+{b}_1^{\top} \left(\P_n [w^*(X_{S_\cap})X_{S_\cap}]-\E_P[w^*(X_{S_\cap})X_{S_\cap}]\right)+\o(n^{-1/2})\\
    &= \P_n [\psi_\mathtt{SR,lm}(D)] +\o(n^{-1/2}),
\end{align*}
where 
\begin{multline*}
    \psi_\mathtt{SR,lm}(D)=\psi_{a_1}(D)\,+\,\E_P[w^*(X_{S_\cap})X_{S_\cap}]^\top \psi_{b_1}(D) \\
    \,+\, b_1^\top \left( \psi_{c}(D) \,+\, w^*(X_{S_\cap})X_{S_\cap}-\E_P[w^*(X_{S_\cap})X_{S_\cap}]\right),
\end{multline*}
where $\psi_{a_1},\psi_{b_1}$ are as defined in \eqref{influence-function-for-ak-bk} and $\psi_c$ is as defined in \eqref{influence-function-for-c}.
Consequently, the desired result follows, with  
    \begin{equation}\label{eqn:VARlm}
        \mathrm{V}_\mathtt{SR,lm} = \var_P(\psi_\mathtt{SR,lm}(D)),
    \end{equation}
    where $\psi_\mathtt{SR,lm}(D)$ is as defined in the last display.
\end{proof}

\subsection{Proof of \texorpdfstring{\cref{propo:exist-n-unique-uniform}}{existence and uniqueness (uniform case)}}
\label{proof:propo:exist-n-unique-uniform}

\begin{proof}
    As in the proof of \cref{propo:exist-n-unique}, the first conclusion follows from the facts that for every $P\in\sP$ the moment generating function $\psi_P:\lambda\mapsto \E_P[\exp( \lambda^\top g(X_{S_\cap};P))]$ is strictly convex and finite in a neighborhood of the origin. To show the second conclusion, note that $\nabla^2_{\lambda}\psi_P(\lambda)=\E_{Q_{\lambda,P}}[gg^\top]\succeq\kappa_0I$, which implies the strong convexity:
    $$\psi_P(\lambda)-\psi_P(\lambda')-(\lambda-\lambda')^\top \nabla_\lambda\psi_P(\lambda')\ge \frac{\kappa_0}{2}\|\lambda'-\lambda\|^2.$$
    Reversing the role of $\lambda$ and $\lambda'$ in the above display, we get
    $$\psi_P(\lambda')-\psi_P(\lambda)-(\lambda'-\lambda)^\top \nabla_\lambda\psi_P(\lambda)\ge \frac{\kappa_0}{2}\|\lambda-\lambda'\|^2.$$
    Adding up the last two displays, we obtain
    $$(\lambda'-\lambda)^\top \left(\nabla_\lambda\psi_P(\lambda')-\nabla_\lambda\psi_P(\lambda)\right)\ge \kappa_0\|\lambda'-\lambda\|^2.$$
    Plugging in $\lambda'=\lambda^*(P)$ and $\lambda=0$ in the above display and using $\nabla_\lambda\psi_P(\lambda^*(P))=0$, we get $$(\lambda^*(P)-0)^\top \left(0-\nabla_\lambda\psi_P(0)\right)\ge \kappa_0\|\lambda^*(P)-0\|^2.$$
    Applying the Cauchy-Schwarz inequality and simplifying, 
    \begin{align*}
        \|\lambda^*(P)\|\le \kappa_0^{-1}\|\nabla_\lambda \psi_P(0)\| = \kappa_0^{-1}\|\E_P\left[g(X_{S_\cap};P)\right]\|\le \kappa_0^{-1}\rc^{-1}\E_P\left[\exp(\rc\|g(X_{S_\cap};P)\|)\right].
    \end{align*}
    Taking supremum over $P\in\sP$ and invoking \cref{assump:heterogeneity-uniform} once again, we are through.
\end{proof}

\subsection{Proof of \texorpdfstring{\cref{thm:matching.adj.sets-uniform}}{uniform CLT}}
\label{proof:thm:matching.adj.sets-uniform}

\begin{proof}
The proof mirrors that of \cref{thm:matching.adj.sets} in \cref{proof-of-main-CLT-with-bias-corr},
with the key modification being that we now extend the supporting results \cref{lemma:AIPW-approx,lemma:rate-of-double-conditional-mean,lemma:rate-of-nu,lemma:rate-of-lambda,lemma:rate-of-weights,lemma:Taylor-expand-weights,lemma:bias-corr-is-close-to-oracle} from \cref{app:supporting-results} 
uniformly over $\sP$.
We organize the proof in three steps:
(i)~uniform extensions of the supporting rate results,
(ii)~a uniform asymptotic linear expansion for
$\wh\tau^{\,\nam}_{bc}$,
and (iii)~a uniform Berry--Esseen bound,
which together yield the desired conclusions. Note that each of the quantities $g$, $\lambda^*$, $w^*$, $\nu^*$, $\tauR$ and $\mathrm{V}_{\nam}$ will now include $P$ as an additional argument.

\paragraph{Uniform versions of supporting results.}
Our proofs of \cref{lemma:AIPW-approx,lemma:rate-of-double-conditional-mean,lemma:rate-of-nu,lemma:rate-of-lambda,lemma:rate-of-weights,lemma:Taylor-expand-weights,lemma:bias-corr-is-close-to-oracle} primarily rely on three ingredients:
(a) H\"{o}lder's inequality,
(b) finite moment generating function of $g(X_{S_\cap};\,P)$ in a neighborhood of $\lambda^*(P)$, and (c) the $\o(n^{-1/4})$ rates on the nuisance functions. We can thus extend all of these supporting results to hold uniformly over $P\in\sP$ under the conditions of \cref{thm:matching.adj.sets-uniform}, as follows.
\begin{enumerate}
    \item \cref{lem:gwh} shows that
\[
  \sup_{P\in\sP}
  \|\wh g(X_{S_\cap}) - g(X_{S_\cap};\,P)\|_{L^q(\P_n)}
  \;=\; \o(n^{-1/4}).
\]
\item \cref{lem:lambdawh} shows that
\[
  \sup_{P\in\sP}
  \|\wh\lambda - \lambda^*(P)\|
  \;=\; \o(n^{-1/4}).
\]
\item \cref{lem:aipw} shows that
for each $k=1,\dots,K$,
\[
  \sup_{P\in\sP}
  \left|
    \P_n\!\bigl[w^*(X_{S_\cap};P)
      \bigl(\tauAIPWobs{k} - \tauAIPW{k}\bigr)
    \bigr]
  \right|
  \;=\; \o(n^{-1/2}),
\]
and the same holds with $(\wh w - w^*)$ replacing $w^*$.
\item \cref{lem:wwh} shows that
\[
  \sup_{P\in\sP}
  \|\wh w(X_{S_\cap}) - w^*(X_{S_\cap};\,P)\|_{L^2(\P_n)}
  \;=\; \o(n^{-1/4}).
\]
\item \cref{lem:nuwh} shows that
\[
  \sup_{P\in\sP}
  \|\wh\nu - \nu^*(P)\|
  \;=\; \o(n^{-1/4}).
\]
\item \cref{lem:taylor-weights} shows that
\begin{align*}
        &\P_n \left[\left(\wh{w}(X_{S_\cap})-w^*(X_{S_\cap};P)\right)\,\phi(X_{S_\cap};\nu^*(P),P) \right]\\
        &=\P_n\left[ w^*(X_{S_\cap};P)\left(\wh{g}(X_{S_\cap})^\top\,\wh{\lambda} -g(X_{S_\cap};P)^\top\lambda^*(P)\right)\,\phi(X_{S_\cap};\nu^*(P),P)\right]+R_n^{w}(P),
    \end{align*}
    where $\phi(X_{S_\cap};\nu,P):=\sum_{k=1}^K\nu_k\,\E_P[\tau(X_{S_k};P)\mid X_{S_\cap}]-\tauR(P)$, and the remainder term is uniformly asymptotically negligible: $\sup_{P\in\sP}|R_n^w(P)|=\o(n^{-1/2})$.
    \item \cref{lem:bias-corr-approx} shows that $$\sup_{P\in\sP}\left|B_n-\wh{B}_n^*(P)\right|=\o(n^{-1/2}),$$
    where
    $$\wh{B}_n^*(P)=\wh{\lambda}^\top \,\P_n\left[w^*(X_{S_\cap};P)(\Delta\tau^\mathtt{AIPW}(X;P)-\wh{g}(X_{S_\cap}))\,\phi(X_{S_\cap};\nu^*(P),P)\right].$$ 
\end{enumerate}

\paragraph{Uniform asymptotic linear expansion.}
For each $P\in\sP$, define the influence function
\begin{equation}\label{eq:IF}
\begin{split}
     \psi^{\,\nam}(D;P)
  &:=
  w^*(X_{S_\cap};P)
  \,\Bigg[
    \sum_{k=1}^K \nu^*_k
    \bigl(\tauAIPW{k} - \tauR(P)\bigr)\\
    &\qquad\qquad\qquad\qquad\qquad+
    (\lambda^*)^\top
    \bigl(\Delta\tau^{\mathtt{AIPW}}(X;P) - g(X_{S_\cap};P)\bigr)
    \phi(X_{S_\cap};\nu^*(P),P)
  \Bigg],
\end{split}
\end{equation}
where $\phi(X_{S_\cap};\nu,P):=\sum_{k=1}^K\nu_k\,\E_P[\tau(X_{S_k};P)\mid X_{S_\cap}]-\tauR(P)$.
It follows from \cref{assump-at-least-one-is-valid-uniform} that
$\E_P[\psi^{\,\nam}(D;\,P)] = 0$. 
The algebraic decompositions used in the proof of \cref{thm:matching.adj.sets} (cf.~\cref{proof-of-main-CLT-with-bias-corr}) carry over verbatim to give
\begin{equation}\label{eq:ALE}
  \wh\tau^{\,\nam}_{bc} - \tauR(P)
  \;=\;
  \P_n[\psi^{\,\nam}(D;P)]
  \;+\;
  r_n(P),
\end{equation}
where $r_n(P)$ is the sum of all the remainder terms from
the proof in \cref{proof-of-main-CLT-with-bias-corr}.
We show below that $\sup_{P\in\sP}\left|r_n(P)\right|=\o(n^{-1/2})$ by
examining each of these remainder terms in turn.

\begin{enumerate}
    \item \textit{AIPW approximation errors.}
These are the terms $\P_n[w^*(X_{S_\cap})(\tauAIPWobs{k}-\tauAIPW{k})]$
and the cross-term with $(\wh w - w^*)$;
both are $\o(n^{-1/2})$ uniformly by \cref{lem:aipw}. This tells us that the $\o(n^{-1/2})$ term in \eqref{emp-minus-oracle-tau-k} is asymptotically negligible uniformly in $P\in \sP$.

\item \textit{Oracle estimators aggregated with $\wh{\nu}-\nu^*$.} The remainder term in \eqref{AR-gap-AR-star-2} is uniformly asymptotically negligible since $\sup_{P\in\sP}\|\wh{\nu}-\nu^*(P)\|=\o(n^{-1/4})$ from \cref{lem:nuwh} and for each $P\in\sP$, we have $\E_P[w^*(X_{S_\cap};P)(\Delta\tau^\mathtt{AIPW}(X;P)-g(X_{S_\cap};P))]=0$ by definition.

\item \textit{Sum of cross-terms $(\wh\nu-\nu^*)(\wh\tau^\mathtt{R}_k - \wh\tau^{\mathtt{R},*}_k)$.} We apply \cref{lem:wwh} to deduce that the remainder term in \eqref{rate-emp-minus-oracle-tau-k} is uniformly asymptotically negligible. This, combined with \eqref{emp-minus-oracle-tau-k} and \cref{lem:nuwh} tells us that the sum of the cross-terms in \eqref{AR-gap-AR-star-3} is uniformly asymptotically negligible. 

Starting from the decomposition in \eqref{AR-gap-AR-star-1} and combining the  above three items, we arrive at
\begin{align*}
     \wh\tau^\nam_{bc}-B_n- \sum_{k=1}^K \nu_k^*(P)\,\wh{\tau}_k^{\,\mathtt{R},*}(P) 
    &=\P_n \left[\left(\wh{w}(X_{S_\cap})-w^*(X_{S_\cap};P)\right) \phi(X_{S_\cap};\nu^*(P),P)\right]+\o(n^{-1/2}),
\end{align*}
where the remainder term is uniformly asymptotically negligible over $P\in\sP$.
\item \textit{Taylor expansion of transfer weights.} Using \cref{lem:taylor-weights}, we can continue from the last display to write
\begin{align*}
     &\wh\tau^\nam_{bc}-B_n- \sum_{k=1}^K \nu_k^*(P)\wh{\tau}_k^{\,\mathtt{R},*}(P) \\
    &=\P_n\left[ w^*(X_{S_\cap};P)\left(\wh{g}(X_{S_\cap})^\top\,\wh{\lambda} -g(X_{S_\cap};P)^\top\lambda^*(P)\right)\,\phi(X_{S_\cap};\nu^*(P),P)\right]+R_n^{w}(P),
    \end{align*}
    where $\sup_{P\in\sP}|R_n^w(P)|=\o(n^{-1/2})$.
\item \textit{Bias-correction approximation.} It follows from \cref{lem:lambdawh} that the remainder term in \eqref{Delta-lambda} is uniformly asymptotically negligible. Then, using \cref{lem:bias-corr-approx}, we deduce from the last display that the remainder term in \eqref{pre-ALR} is uniformly $\o(n^{-1/2})$, which gives us the desired asymptotic linear expansion as in \eqref{eq:ALE}.
\end{enumerate}

\paragraph{Uniform Berry--Esseen bound.}
As in the statement of \cref{thm:matching.adj.sets-uniform}, we define
$\mathrm{V}_{\nam}(P):=\var_P(\psi^{\,\nam}(D;\,P))$. We apply the Berry--Esseen theorem in its $(2+\delta)$-moment form (see, e.g., \citet[Chapter V, Theorem 5]{petrov}): For i.i.d.\ mean-zero variables $\psi^{\,\nam}(D_i\,;P)$ with
finite $(2+\delta)$-th moment,
\begin{equation}\label{eq:BE}
  \sup_{t\in\R}
  \left|
    P\!\left(
      \frac{\sqrt{n}\,\P_n[\psi^{\,\nam}(D;P)]}{\sqrt{\mathrm{V}_{\nam}(P)}}
      \le t
    \right)
    - \Phi(t)
  \right|
  \;\le\;
  \frac{C_\delta\,\E_P[|\psi^{\,\nam}(D;P)|^{2+\delta}]}
       {\mathrm{V}_{\nam}(P)^{(2+\delta)/2}\,n^{\delta/2}}.
\end{equation}

Bounding the supremum over ${P\in\sP}$ of the right-hand side requires establishing a uniform upper bound on
$\E_P[|\psi^{\,\nam}|^{2+\delta}]$, since we have $\inf_{P\in\sP}\mathrm{V}_{\nam}(P)\ge v_0>0$ by assumption. Throughout the rest of the proof we denote
$\Lambda := \sup_{P\in\sP}\|\Lambda^*(P)\| < \infty$ (finite by \cref{propo:exist-n-unique-uniform}), and define
$$\delta := (q-2)\left(\frac{\rc}{\Lambda} - \frac{2q}{q-2}\right)\,\Big/\,\left(q+\frac{\rc}{\Lambda}\right) > 0,$$
where $q > 2$ is the moment exponent in condition~(1) of \cref{thm:matching.adj.sets-uniform}, and $\rc$ is the radius in condition~(3) of \cref{thm:matching.adj.sets-uniform}. We write $\psi^{\,\nam}(D;P) = w^*(X_{S_\cap};P)\,\phi(D;P)$,
where $\phi(D;P)$ collects the AIPW terms and the bias-correction term in~\eqref{eq:IF}, i.e.,
\begin{align*}
    \phi(D;P)&:= \sum_{k=1}^K \nu^*_k
    \bigl(\tauAIPW{k} - \tauR(P)\bigr) + (\lambda^*)^\top
    \bigl(\Delta\tau^{\mathtt{AIPW}}(X;P) - g(X_{S_\cap};P)\bigr)
    \phi(X_{S_\cap};\nu^*(P),P).
\end{align*}
We establish a uniform upper bound on the $q$-th moment of $\phi(D;P)$ (uniform over $P\in\sP$) as follows. We use the strict overlap condition $0<\eta\le P(A=1\mid X_{S_\cap})\le 1-\eta<1$ (which holds uniformly over $P\in\sP$) and the uniformly bounded moments $\sup_{P\in\sP} \E_P[|Y|^q]<\infty$ (which also bounds the $q$-th moments of $\E_P[Y\mid A=a,X_{S_k}]$ via conditional Jensen's inequality) to get an uniform bound on the first term in the above display. For the second term, we uniformly bound the moments of $\E_P[\mu_a(X_{S_k})\mid X_{S_\cap}]$ using the uniform bound on their moment generating function and finally combine the two using Jensen's inequality on $t\mapsto |t|^q$. We thus conclude that
\begin{equation}
    \label{eq:uniformly-bounded-phi}
    \sup_{P\in\sP}\E_P\left|\phi(D;P)\right|^q<\infty.
\end{equation}
Next, for the weights $w^*(X_{S_\cap};P)$, observe that $0<\delta<q-2$ by definition. Moreover,
\begin{equation}\label{eq:holder-exp-trick}
   \delta =  (q-2)\left(\frac{\rc}{\Lambda} - \frac{2q}{q-2}\right)\,\Big/\,\left(q+\frac{\rc}{\Lambda}\right) 
   \implies \frac{q(2+\delta)}{q-2-\delta} = \frac{\rc}{\Lambda}.
\end{equation}
Write $m:=2+\delta$, and apply H\"{o}lder's inequality with exponents $a=q/(q-m)$ and $b=q/m$ (note, $1/a+1/b=1$) to deduce the following.
\begin{align*}
     \E_P\left[|\psi^{\,\nam}(D;P)|^{2+\delta}\right]
  &\le \left(\E_P\left|w^*(X_{S_\cap};P)\right|^{ma}\right)^{1/a}
  \left(\E_P\left|\phi(D;P)\right|^{mb}\right)^{1/b}\\
  &=\left(\E_P\left|w^*(X_{S_\cap};P)\right|^{r/\Lambda}\right)^{1/a}
  \left(\E_P\left|\phi(D;P)\right|^{q}\right)^{1/b}. \tag{using \eqref{eq:holder-exp-trick} and $mb=q$}
\end{align*}
The first factor in  is uniformly bounded because
$$\sup_{P\in\sP}\E_P\left[\exp\left(\frac{\rc}{\Lambda}g(X_{S_\cap};P)^\top \lambda^*(P)\right)\right]\le \sup_{P\in\sP}\E_P\left[\exp\left(\rc\|g(X_{S_\cap};P)\|\right)\right] < \infty.$$
The second factor is also uniformly bounded as shown in \eqref{eq:uniformly-bounded-phi}.
We can therefore conclude that $$\sup_{P\in\sP}\E_P\left[|\psi^{\,\nam}(D;\,P)|^{2+\delta}\right] \le M < \infty,$$
for a constant $M$ independent of $P$.
Combining this with $\inf_{P\in\sP}\mathrm{V}_{\nam}(P)\ge v_0>0$, we deduce from~\eqref{eq:BE} that
\begin{equation}\label{eq:BE_unif}
  \sup_{P\in\sP}\,\sup_{t\in\R}
  \left|
    P\!\left(
      \frac{\sqrt{n}\,\P_n[\psi^{\,\nam}(D;P)]}{\sqrt{\mathrm{V}_{\nam}(P)}}
      \le t
    \right)
    - \Phi(t)
  \right|
  \;\le\;
  \frac{C_\delta M}{v_0^{(2+\delta)/2}\,n^{\delta/2}}
  \stackrel{\text{as }n\to\infty}{\longrightarrow} 0.
\end{equation}
To finish the proof, we apply \eqref{eq:ALE} and triangle inequality.
 For any $t\in\R$ and $P\in\sP$, 
\begin{align}
  &\left|
    P\!\left(
      \frac{\sqrt{n}\,(\wh\tau^{\,\nam}_{bc} - \tauR)}
           {\sqrt{\mathrm{V}_{\nam}(P)}}
      \le t
    \right)
    - \Phi(t)
  \right| \nonumber\\[2mm]
  &\le
  \left|
    P\!\left(
      \frac{\sqrt{n}\,\P_n[\psi^{\,\nam}]}{\sqrt{\mathrm{V}_{\nam}(P)}}
      \le t
    \right)
    - \Phi(t)
  \right|
     \,+\,
  \left|
    P\!\left(
      \frac{\sqrt{n}\,(\wh\tau^{\,\nam}_{bc} - \tauR)}
           {\sqrt{\mathrm{V}_{\nam}(P)}}
      \le t
    \right)
    -
    P\!\left(
      \frac{\sqrt{n}\,\P_n[\psi^{\,\nam}]}{\sqrt{\mathrm{V}_{\nam}(P)}}
      \le t
    \right)
  \right|
  \label{eq:triangle}\\[2mm]
  &=: \Delta_{n,1}(t,P) + \Delta_{n,2}(t,P).
  \nonumber
\end{align}
Note that \eqref{eq:BE_unif} implies
$\sup_{P\in \sP}\sup_t \Delta_{n,1}(t,P) \to 0$.
For the other term, we use \eqref{eq:ALE} and
$\inf_{P}\mathrm{V}_{\nam}(P) \ge v_0 > 0$ to deduce, for any fixed $\eps>0$, that
\[
  \sup_{P\in\sP} \Delta_{n,2}(t,P)
  \;\le\;
  \sup_{P\in\sP}
  P\!\left(
    \frac{\sqrt{n}\,|r_n(P)|}{\sqrt{v_0}} > \eps
  \right)
  + \sup_{t\in\R}\bigl[\Phi(t+\eps)-\Phi(t-\eps)\bigr]
  \;\to\; C'\eps,
\]
as $n\to\infty$, using $\sup_{P\in \sP}\sqrt{n}\,|r_n(P)|\Pto 0$
and the continuity of $\Phi$.
Since $\eps > 0$ is arbitrary, taking $\sup_{P\in\sP}\sup_{t\in\R}$
in~\eqref{eq:triangle} and sending $\eps\to 0$ gives
\[
  \lim_{n\to\infty}
  \sup_{P\in\sP}\,
  \sup_{t\in\R}
  \left|
    P\!\left(
      \frac{\sqrt{n}\,(\wh\tau^{\,\nam}_{bc}-\tauR(P))}
           {\sqrt{\mathrm{V}_{\nam}(P)}}
      \le t
    \right)
    - \Phi(t)
  \right|
  = 0,
\]
as desired to show.
\end{proof}

\section{Supporting results}\label{app:supporting-results}

\begin{lemma}[Rate of the conditional differences-in-contrasts]\label{lemma:rate-of-double-conditional-mean}
    Under the conditions of \cref{thm:matching.adj.sets}, it holds that $$\|\wh{g}(X_{S_\cap}) -g(X_{S_\cap})\|_{L^q(\P_n)}=\o(n^{-1/4}).$$
\end{lemma}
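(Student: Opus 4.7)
The plan is to decompose $\wh{g}_k(X_{S_\cap})-g_k(X_{S_\cap})$ for each $k=2,\dots,K$ into two sources of error: (i) the nonparametric regression error of the outer regressor $\wh{\E}[\cdot\mid X_{S_\cap}]$, and (ii) the plug-in error from replacing $\tau(X_{S_\ell})$ by $\wh\tau(X_{S_\ell})$ inside the regression target. Concretely, for each $\ell\in\{1,k\}$ write
\begin{align*}
\wh{\E}[\wh\tau(X_{S_\ell})\mid X_{S_\cap}]-\E_P[\tau(X_{S_\ell})\mid X_{S_\cap}]
&=\bigl(\wh{\E}[\wh\tau(X_{S_\ell})\mid X_{S_\cap}]-\E_P[\wh\tau(X_{S_\ell})\mid X_{S_\cap}]\bigr)\\
&\quad+\E_P[\wh\tau(X_{S_\ell})-\tau(X_{S_\ell})\mid X_{S_\cap}],
\end{align*}
so that $\wh{g}_k-g_k$ is a sum of four such terms (two for $\ell=1$ and two for $\ell=k$).

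The first summand is the outer-regressor error, which condition (1) of \cref{thm:matching.adj.sets} directly controls: it is $\o(n^{-1/4})$ in $L^{2+\eps}(\P_n)$, hence also in $L^2(\P_n)$. For the second, plug-in, summand I would use conditional Jensen's inequality, noting that the auxiliary estimators $\wh\mu_0,\wh\mu_1$ are fit on data independent of $(Y_i,A_i,X_i)_{i=1}^n$. Pointwise,
$$\bigl(\E_P[\wh\tau(X_{S_\ell})-\tau(X_{S_\ell})\mid X_{S_\cap}]\bigr)^2\le \E_P\bigl[(\wh\tau(X_{S_\ell})-\tau(X_{S_\ell}))^2\,\big|\,X_{S_\cap}\bigr],$$
and integrating this inequality against the empirical distribution of $X_{i,S_\cap}$ and taking conditional expectation given $(\wh\mu_0,\wh\mu_1)$ yields
$$\E\Bigl[\bigl\|\E_P[\wh\tau(X_{S_\ell})-\tau(X_{S_\ell})\mid X_{S_\cap}]\bigr\|_{L^2(\P_n)}^2\,\Big|\,\wh\mu_0,\wh\mu_1\Bigr]\le \|\wh\tau(X_{S_\ell})-\tau(X_{S_\ell})\|_{L^2(P)}^2.$$

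To conclude, I would invoke the standard DML-style argument (see, e.g., \citet{Chernozhukov2018DML}): because the nuisance estimators are independent of the data, the $L^{2+\eps}(\P_n)$ rate on $\wh\mu_a-\mu_a$ transfers to the same rate in $L^2(P)$, so $\|\wh\tau(X_{S_\ell})-\tau(X_{S_\ell})\|_{L^2(P)}\le\|\wh\mu_1-\mu_1\|_{L^2(P)}+\|\wh\mu_0-\mu_0\|_{L^2(P)}=\o(n^{-1/4})$. Markov's inequality then yields the desired $\o(n^{-1/4})$ bound on the plug-in piece in $L^2(\P_n)$. Assembling the four pieces via the triangle inequality, with $K$ fixed, delivers $\|\wh{g}(X_{S_\cap})-g(X_{S_\cap})\|_{L^2(\P_n)}=\o(n^{-1/4})$.

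The main obstacle I anticipate is the bookkeeping around the empirical versus population $L^2$ norms when applying Jensen's inequality: the conditional-mean contraction is most naturally stated under $P$, whereas the statement asks for the $\P_n$ norm. Conditioning on the independently-fit nuisance estimators and carrying the expectation through cleanly is the key step; once this is done, the rest is a direct application of condition (1) of \cref{thm:matching.adj.sets} and the triangle inequality.
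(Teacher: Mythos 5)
Your proposal is correct and follows essentially the same route as the paper: the paper introduces the intermediate quantity $\wt{g}(X_{S_\cap})=\E[\wh\delta_{2:K}(D)\mid X_{S_\cap}]$ and splits $\wh g-g$ into the outer-regressor error $\wh g-\wt g$ (controlled by condition (1)) and the plug-in error $\wt g-g$ (controlled by conditional Jensen plus the triangle inequality over $\wh\mu_a-\mu_a$), which is exactly your two-term decomposition. If anything you are slightly more careful than the paper, which invokes the conditional-expectation contraction directly in the $L^{2+\eps}(\P_n)$ norm via the terse annotation ``property of conditional expectation,'' whereas you explicitly handle the empirical-versus-population mismatch by conditioning on the independently fit nuisances and applying Markov's inequality.
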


\begin{proof}[Proof of \cref{lemma:rate-of-double-conditional-mean}]
    Define $\widetilde{g}(X_S):=\E_P[\Delta\wh\tau(X)\mid X_{S_\cap}]$, and write
    $$\wh{g}(X_{S_\cap}) -g(X_{S_\cap})=\wh{g}(X_{S_\cap}) -\wt{g}(X_{S_\cap})+\wt{g}(X_{S_\cap}) -g(X_{S_\cap}).$$
    It follows by assumption that $\|\wh{g}(X_{S_\cap})-\wt{g}(X_{S_\cap})\|_{L^{q}(\P_n)}=\o(n^{-1/4})$. On the other hand,
    \begin{align*}
        &\|\wt{g}_k(X_{S_\cap}) -g_k(X_{S_\cap})\|_{L^{q}(\P_n)}\\
        &\le \|\E_P[\wh{\tau}(X_{S_1})-\tau(X_{S_1}) \mid X_{S_\cap}]\|_{L^{q}(\P_n)}+\|\E_P[\wh{\tau}(X_{S_k})-\tau(X_{S_k}) \mid X_{S_\cap}]\|_{L^{q}(\P_n)} \tag{by triangle inequality}\\
        &\le \|\wh{\tau}(X_{S_1})-\tau(X_{S_1})\|_{L^{q}(\P_n)}+\|\wh{\tau}(X_{S_k})-\tau(X_{S_k})\|_{L^{q}(\P_n)}\tag{property of conditional expectation}\\
        &\le \sum_{l\in\{1,\,k\}}\sum_{a\in\{0,\,1\}} \|\wh\mu_a(X_{S_l})-\mu_a(X_{S_l})\|_{L^{q}(P_n)}=\o(n^{-1/4}).\tag{by assumption}
    \end{align*}
This completes the proof.
\end{proof}

\begin{lemma}[Rate of $\wh\lambda$]\label{lemma:rate-of-lambda}
    Under the conditions of \cref{thm:matching.adj.sets}, it holds that $\wh{\lambda}-\lambda^*=\o(n^{-1/4})$. 
\end{lemma}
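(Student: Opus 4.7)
Since $\wh L_n(\lambda)=\P_n\exp(\wh g(X_{S_\cap})^\top\lambda)$ is convex and both estimators are defined as minimizers, the plan is the usual two-step M-estimator argument: first establish consistency $\wh\lambda\Pto\lambda^*$, then turn the score equation $\nabla\wh L_n(\wh\lambda)=0$ into an asymptotic linearization that is tight enough to yield the $\o(n^{-1/4})$ rate.

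For consistency, I would recycle the anti-coercivity estimate from the proof of \cref{propo:exist-n-unique}, which shows $L(\lambda)\gtrsim \exp(\epsilon\|\lambda\|_1)$ and hence forces any approximate minimizer of $\wh L_n$ to remain in a bounded set. On every such bounded set, the pointwise gap $|\wh L_n(\lambda)-L(\lambda)|$ decomposes as $(\P_n-\E)[\exp(g^\top\lambda)]+\P_n[\exp(\wh g^\top\lambda)-\exp(g^\top\lambda)]$; the first piece is $\O(n^{-1/2})$ by the central limit theorem (finite MGF of $g$), and the second is $\o(1)$ by a Taylor expansion combined with \cref{lemma:rate-of-double-conditional-mean} ($\|\wh g-g\|_{L^{2+\eps}(\P_n)}=\o(n^{-1/4})$) and Hölder with the exponential moment assumptions. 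Uniformizing with a standard bracketing/compactness argument and applying the argmin-continuous mapping theorem then gives $\wh\lambda\Pto\lambda^*$.

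For the rate, a mean-value expansion of $\nabla\wh L_n$ between $\wh\lambda$ and $\lambda^*$ yields
$$\wh\lambda-\lambda^*=-\bigl[\nabla^2\wh L_n(\wt\lambda)\bigr]^{-1}\nabla\wh L_n(\lambda^*)$$
for some $\wt\lambda$ on that segment. The Hessian $\nabla^2\wh L_n(\wt\lambda)=\P_n[\exp(\wh g^\top\wt\lambda)\wh g\wh g^\top]$ converges in probability to $\E[\exp(g^\top\lambda^*)gg^\top]$ by the already-proven consistency $\wt\lambda\Pto\lambda^*$, \cref{lemma:rate-of-double-conditional-mean}, and the non-degeneracy assumption $\E[\wh g\wh g^\top]\succ 0$, so its inverse is $\O(1)$ with smallest eigenvalue bounded below. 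For the score at $\lambda^*$, I use the population optimality $\E[\exp(g^\top\lambda^*)g]=0$ to write
$$\nabla\wh L_n(\lambda^*)=(\P_n-\E)\bigl[\exp(g^\top\lambda^*)g\bigr]+\P_n\bigl[\exp(\wh g^\top\lambda^*)\wh g-\exp(g^\top\lambda^*)g\bigr].$$
The first term is $\O(n^{-1/2})$ by the CLT. For the second, a first-order Taylor expansion in $\wh g-g$ with a quadratic remainder that carries an $\exp(g^\top\lambda^*+\xi)$ prefactor, together with Hölder's inequality (using $1/q'+2/(2+\eps)=1$), finite MGFs of $g$ and $\wh g$, and $\|\wh g-g\|_{L^{2+\eps}(\P_n)}=\o(n^{-1/4})$, bounds this piece by $\o(n^{-1/4})$. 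Combining yields $\wh\lambda-\lambda^*=\O(1)\cdot(\O(n^{-1/2})+\o(n^{-1/4}))=\o(n^{-1/4})$.

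The main obstacle is the score linearization: the random exponential prefactor $\exp(\wh g(X_{S_\cap})^\top\lambda^*)$ depends on a nonparametric regressor converging only at an $L^{2+\eps}$ rate, so one cannot naively separate it from the $\wh g-g$ factors. The key is to exploit the assumed finite exponential moments of both $g$ and $\wh g$, which through a three-way Hölder split absorbs the prefactor into an $\O(1)$ constant and leaves two copies of $(\wh g-g)$ in the $L^{2+\eps}$ norm, each contributing an $\o(n^{-1/4})$ factor and thus giving the required $\o(n^{-1/2})$ remainder for the score.
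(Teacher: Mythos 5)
Your proposal is correct and follows the same overall M-estimation template as the paper: consistency of $\wh\lambda$ via uniform convergence of $\wh L_n$ to $L$ on compacta plus a well-separated-minimum/coercivity argument, followed by the mean-value identity $\wh\lambda-\lambda^*=-[\nabla^2\wh L_n(\wt\lambda)]^{-1}\nabla\wh L_n(\lambda^*)$ and a bound on the score at $\lambda^*$. The one place where you genuinely diverge is the treatment of the perturbation term $\P_n[\exp(\wh g^\top\lambda^*)\wh g-\exp(g^\top\lambda^*)g]$: the paper bounds it by
$\|\wh w-w^*\|_{L^2(\P_n)}\|g\|_{L^2(\P_n)}+\|w^*\|_{L^2(\P_n)}\|\wh g-g\|_{L^2(\P_n)}$
and invokes \cref{lemma:rate-of-weights} for the first factor, whereas you expand directly in $\wh g-g$ with an exponential prefactor controlled by Hölder and the finite-MGF assumptions, using only \cref{lemma:rate-of-double-conditional-mean}. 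Your route is preferable here: the paper's proof of \cref{lemma:rate-of-weights} itself cites \cref{lemma:rate-of-lambda}, so routing the score bound through the weight-rate lemma creates a circular dependency that your self-contained Taylor--Hölder argument sidesteps. (Your consistency step also recycles the anti-coercivity bound $L(\lambda)\gtrsim\exp(\epsilon\|\lambda\|_1)$ from the proof of \cref{propo:exist-n-unique}, which rests directly on the sign condition in \cref{assump:heterogeneity} and is more robust than the Jensen-based coercivity argument the paper uses inside this lemma, which would degenerate if $\E[v^\top g(X_{S_\cap})]=0$ for some direction $v$.) One small slip in your closing paragraph: the three-way Hölder split with two copies of $\wh g-g$ gives $\o(n^{-1/2})$ only for the quadratic remainder; the linear term in $\wh g-g$ is merely $\o(n^{-1/4})$, which is what actually dominates the score --- but that is all the lemma requires, so the conclusion stands.
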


\begin{proof} Recall that $\wh{\lambda}:=\argmin_\lambda \wh{L}_n(\lambda)$ and ${\lambda}^*:=\argmin_\lambda {L}(\lambda)$ where $$\wh{L}_n(\lambda):=\P_n \exp\left(\wh{g}(X_{S_\cap})^\top \lambda\right),\quad L(\lambda)=\E_P \exp\left(g(X_{S_\cap})^\top \lambda\right).$$
First, we prove that $\wh{\lambda}\Pto \lambda^*$. We claim that the following conditions hold true:
\begin{enumerate}
    \item\label{cond1} For any $0<\eps<1$, $\sup_{\|\lambda-\lambda^*\|\,\le\,\eps\|\lambda^*\|} |\wh{L}_n(\lambda)-L(\lambda)|=\o(1)$,
    \item\label{cond2} For any $0<\eps<1$, there exists $\eta>0$ such that $L(\lambda)\ge L(\lambda^*)+\eta$ for  $\eps\|\lambda^*\|\le \|\lambda-\lambda^*\|\le \|\lambda^*\|$.
\end{enumerate}
Using the above claims, we can show $\wh{\lambda}\Pto \lambda^*$ using \citet[Theorem 5.7]{vdV}, as follows. For any $\lambda$ such that $\|\lambda-\lambda^*\|=\eps\|\lambda^*\|$, we have
\begin{align*}
    \wh{L}_n(\lambda) &\ge L(\lambda) - \sup_{\|\lambda-\lambda^*\|\,\le\, \eps\|\lambda^*\|} |\wh{L}_n(\lambda)-L(\lambda)|\\
    &\ge L(\lambda^*) + \eta - \sup_{\|\lambda-\lambda^*\|\,\le\, \eps\|\lambda^*\|} |\wh{L}_n(\lambda)-L(\lambda)|\tag{using \cref{cond2}}\\
    &\ge \wh{L}_n(\lambda^*) + \eta - 2\sup_{\|\lambda-\lambda^*\|\,\le\, \eps\|\lambda^*\|}|\wh{L}_n(\lambda)-L(\lambda)|.
\end{align*}
This combined with \cref{cond1} implies that $$\P(\|\wh{\lambda}-\lambda^*\|\le \eps\|\lambda^*\|)\ge \P\left(\inf_{\lambda\,:\,\|\lambda-\lambda^*\|\,=\,\eps\|\lambda^*\|}\wh{L}_n(\lambda)>\wh{L}_n(\lambda^*)\right)\to 1.$$ 
Since $\eps>0$ is arbitrary, the above proves the consistency, conditional on \cref{cond1,cond2,cond2} which we show next.

\begin{enumerate}
    \item Proof of \cref{cond1}: First,  set $\wt{L}_n(\lambda)=\P_n [\exp(g(X_S)^\top \lambda)]$ and write $$\wh{L}_n(\lambda)-L(\lambda) =\wh{L}_n(\lambda)-\wt{L}_n(\lambda) +\wt{L}_n(\lambda)-L(\lambda).$$
It is elementary to show $\sup_{\|\lambda-\lambda^*\|\,\le\, \eps\|\lambda^*\|} |\wt{L}_n(\lambda)-L(\lambda)|=\o(1)$; see, for instance, \citet[Theorem II.1]{Anderson1982}. 
To show that $\sup_{\|\lambda-\lambda^*\|\,\le\, \eps\|\lambda^*\|} |\wh{L}_n(\lambda)-\wt{L}_n(\lambda)|=\o(1)$, we rely on the following Taylor expansion:
$$e^{\wh{g}(X_{S_\cap})^\top \lambda}-e^{g(X_{S_\cap})^\top \lambda}=\lambda^\top (\wh{g}(X_{S_\cap})-g(X_{S_\cap}))e^{\Xi(X_{S_\cap})^\top \lambda},$$
for some $\Xi(X_{S_\cap})$ that lies between $g(X_{S_\cap})$ and $\wh{g}(X_{S_\cap})$. Using this, we deduce that
\begin{align*}
  & \sup_{\|\lambda-\lambda^*\|\,\le\, \eps\|\lambda^*\|}|\wh{L}_n(\lambda)-\wt{L}_n(\lambda)|=\sup_{\|\lambda-\lambda^*\|\,\le\, \eps\|\lambda^*\|}\left| \P_n \left[e^{\wh{g}(X_{S_\cap})^\top \lambda}-e^{g(X_{S_\cap})^\top \lambda}\right] \right|\\
  &\le \sup_{\|\lambda-\lambda^*\|\,\le\, \eps\|\lambda^*\|}\left| \P_n \left[ |\lambda^\top (\wh{g}(X_{S_\cap})-g(X_{S_\cap})| \left(e^{\|\lambda\|_2\|\Xi(X_{S_\cap})\|_2}+e^{\|\lambda\|_2\|\Xi(X_{S_\cap})\|_2} \right)\right] \right|\\
   &\le \|\lambda^*\|(1+\eps)\left\|\|(\wh{g}-g)(X_{S_\cap})\|_{L^2(\P_n)}\right\|_2 \left\|e^{ \|\lambda^*\|(1+\eps)\|\Xi(X_{S_\cap})\|_2}+e^{-  \|\lambda^*\|(1+\eps)\|\Xi(X_{S_\cap})\|_2}\right\|_{L^2(\P_n)},
\end{align*}
which is $\o(n^{-1/4})$, since $\|\wh{g}(X_{S_\cap})-g(X_{S_\cap})\|_{L^2(\P_n)}=\o(n^{-1/4})$ from \cref{lemma:rate-of-double-conditional-mean}, and $\wh{g}(X_{S_\cap})$ and $g(X_{S_\cap})$
 have finite exponential moments in radius $\rc\ge 2\|\lambda^*\|$. This concludes the proof of \cref{cond1}.\qed

    \item Proof of \cref{cond2}: It follows from a Taylor expansion of $L(\lambda)$ around $\lambda^*$ that
    $$L(\lambda)=L(\lambda^*) + \underbrace{L'(\lambda)^\top}_{=\,0} (\lambda-\lambda^*) + \frac12 (\lambda-\lambda^*)^\top L''(\xi) (\lambda-\lambda^*),$$
    for some $\xi$ on the segment joining $\lambda^*$ and $\lambda$. For $\eps\|\lambda^*\|\le \|\lambda-\lambda^*\|\le \|\lambda^*\|$,  we deduce that
    $$L''(\xi)=\E_P[ e^{\xi^\top g(X_{S_\cap})}g(X_{S_\cap})\,g(X_{S_\cap})^\top]\succ \E_P[ e^{-\|\lambda^*\|(1+\eps)\|g(X_{S_\cap})\|}g(X_{S_\cap})\,g(X_{S_\cap})^\top]=:A.$$
    Note that the assumption $\E_P[g(X_{S_\cap})\,g(X_{S_\cap})^\top]\succ 0$ implies that $A\succ 0$ (cf.~\cref{app:proof-of-propo:exist-n-unique}). Consequently, $$L(\lambda)- L(\lambda^*) \ge \frac12 (\lambda-\lambda^*)^\top A (\lambda-\lambda^*)\ge \frac12\kappa \|\lambda-\lambda^*\|^2=\frac12\kappa \eps^2\|\lambda^*\|^2,$$
    where $\kappa$ is the smallest eigenvalue of the positive definite matrix $A$. This finishes the proof of \cref{cond2}, with $\eta = \frac12\kappa\eps^2\|\lambda^*\|^2$.\qed
\end{enumerate}

We next show that $\wh{\lambda}-\lambda^*=\o(n^{-1/4})$. Using a Taylor expansion of $\nabla \widehat{L}_n(\widehat{\lambda})$ around $\lambda^*$,
$$
0=\nabla \widehat{L}_n(\widehat{\lambda})=\nabla \widehat{L}_n\left(\lambda^*\right)+\nabla^2 \widehat{L}_n(\widetilde{\lambda})(\widehat{\lambda}-\lambda^*),
$$
where $\widetilde{\lambda}$ lies on the line segment between $\lambda^*$ and $\widehat{\lambda}$. Rearranging, we get
$$
\widehat{\lambda}-\lambda^*=-\left[\nabla^2 \widehat{L}_n(\widetilde{\lambda})\right]^{-1} \nabla \widehat{L}_n\left(\lambda^*\right).
$$
Note that
\begin{multline*}
    \nabla \widehat{L}_n\left(\lambda^*\right)=\mathbb{P}_n\left[\exp\left(g(X_{S_\cap})^{\top} \lambda^*\right) g(X_{S_\cap})\right]\\
+\mathbb{P}_n\left[\exp \left(\wh{g}(X_{S_\cap})^{\top} \lambda^*\right) \widehat{g}(X_{S_\cap})-\exp \left(g(X_{S_\cap})^{\top} \lambda^*\right) g(X_{S_\cap})\right],
\end{multline*}
where the first term is $\O(n^{-1/2})$ since $\E_P [\left({g}(X_{S_\cap})^\top \lambda^*\right){g}(X_{S_\cap})]=0$. We can bound the dot product of the second term with $\lambda^*$ using the mean value theorem and Hölder's inequality, with the upper bound being $\O(1)\cdot\|\wh{g}(X_{S_\cap})-g(X_{S_\cap})\|_{L^2(\P_n)}$, which is $\o(n^{-1/4})$ from \cref{lemma:rate-of-double-conditional-mean}.
Consequently, $\nabla \wh{L}_n(\lambda^*) = \o(n^{-1/4})$. It remains to show that $\nabla^2 \wh{L}_n(\widetilde{\lambda})=\O(1)$. This follows from the facts that
$$\sup_{|\lambda -\lambda^*|<\eps} \left\|\nabla^2 L(\lambda)-\nabla^2 L(\lambda^*)\right\|=\O(1),\quad \sup_{|\lambda -\lambda^*|<\eps} \left\|\nabla^2 \wh{L}_n(\lambda)-\nabla^2 L(\lambda)\right\|=\O(1),$$
and that $\widetilde{\lambda}-\lambda^*=\o(1)$ due to the consistency we showed above. This finishes the proof.
\end{proof}

\begin{lemma}[AIPW approximation]\label{lemma:AIPW-approx}
    Under the conditions of \cref{thm:matching.adj.sets}, it holds that
    $$\P_n \left[{w}^*(X_{S_\cap})\left(\wh\tau^\mathtt{AIPW}(X_{S_k})- \tau^\mathtt{AIPW}(X_{S_k})\right)\right]=\o\left(n^{-1/2}\right),$$
    and  $$\P_n \left[\left(\wh{w}(X_{S_\cap})-{w}^*(X_{S_\cap})\right)\left(\wh\tau^\mathtt{AIPW}(X_{S_k})- \tau^\mathtt{AIPW}(X_{S_k})\right)\right]=\o\left(n^{-1/2}\right).$$
\end{lemma}

\begin{proof}
    Define $\wh{\tau}^\mathtt{AIPW}(X_{S_k})$ as in \cref{algo:rewt-adj-sets} and $\tau^\mathtt{AIPW}(X_{S_k})$ as in \cref{thm:matching.adj.sets}. We can write
    $$\wh{\tau}^\mathtt{AIPW}(X_{S_k})-\tau^\mathtt{AIPW}(X_{S_k})=\sum_{l=1}^3 (\Delta_{l,1}(D) + \Delta_{l,0}(D)),$$ 
    where $D=(Y,X,\trt)$, and \begin{align*}
       \Delta_{1,1}(D)&:= \left(\frac{1}{\wh{e}(X_{S_k})}-\frac{1}{e(X_{S_k})}\right)\trt(Y-\mu_1(X_{S_k}))\\
       \Delta_{2,1}(D) &:=\left(1-\frac{\trt}{e(X_{S_k})}\right)(\wh{\mu}_1(X_{S_k})-\mu_1(X_{S_k}))\\
       \Delta_{3,1}(D) &:= \left(\frac{1}{\wh{e}(X_{S_k})}-\frac{1}{e(X_{S_k})}\right)(\wh{\mu}_1(X_{S_k})-\mu_1(X_{S_k})),
    \end{align*}
    and $\Delta_{l,0}(D)$ are defined in an analogous fashion. It follows from the tower property (and cross-fitting) that
    $$\E_P \left[w^*(X_{S_\cap}) \left(\frac{1}{\wh{e}(X_{S_k})}-\frac{1}{e(X_{S_k})}\right)\trt(Y-\mu_1(X_{S_k}))\;\Big|\; \wh{e}(\cdot),X_{S_\cap}\right]=0,$$
    since $\wh{e}(\cdot)$ is independent of the data and $\E_P\left[A\left(Y - \mu_1(X_{S_k})\right)\mid X_{S_\cap}\right]=0$.
    Similarly, note that
    \begin{align*}
        \var\left(A\left(Y - \mu_1(X_{S_k})\right)\mid X_{S_\cap}\right)&=\E_P\left[A\left(Y - \mu_1(X_{S_k})\right)^2\mid X_{S_\cap}\right]\\
        &=e(X_{S_\cap})\cdot\E\left[\left(Y - \mu_1(X_{S_k})\right)^2\mid A = 1,\, X_{S_\cap}\right]\\
        &=e(X_{S_\cap})\cdot \var(Y\mid A = 1,\, X_{S_\cap}).
    \end{align*}
    Using this and the fact that $\wh{e}(\cdot)$ is independent of the data, we deduce the following. \begin{align*}
       & \var\left(\P_n\left[w^*(X_{S_\cap}) \left(\frac{1}{\wh{e}(X_{S_k})}-\frac{1}{e(X_{S_k})}\right)\trt(Y-\mu_1(X_{S_k}))\;\Big|\; \wh{e}(\cdot),X_{S_\cap}\right]\right)\\
        &\le \frac{1}{n^2}\sum_{i=1}^n (w^*(X_{i,\,S_\cap}))^2\left(\frac{1}{\wh{e}(X_{i,S_k})}-\frac{1}{e(X_{i,S_k})}\right)^2 \var(Y_i\mid A_i=1, X_{i,\,S_\cap})\\
        &\stackrel{\text{Hölder}}{\lesssim} n^{-1} \|w^*(X_{S_\cap})\|_{L^{\overline{q}}(\P_n)}\|\wh{e}^{-1}(X_{S_k})-e^{-1}(X_{S_k})\|_{L^{q}(\P_n)}=\o(n^{-1}),
    \end{align*}
    where $\overline{q}>1$ is such that $1/\overline{q}+1/q=1$. Note that $\overline{q}<q'=\frac{q}{q-2}$, hence the assumption that $\E_P[\exp(L\|g(X_{S_\cap})\|)]<\infty$ for $L\ge q'\|\lambda^*\|$ implies that $\|w^*(X_{S_\cap})\|_{L^{\overline{q}}(\P_n)}=\O(1)$.
    The above calculations combined with Chebyshev's inequality tell us that $$\P_n \left[w^*(X_{S_\cap})\Delta_{1,1}(D)\right]=\o(n^{-1/2}).$$ We can follow a similar path to prove that 
    \begin{align*}
        \P_n \left[w^*(X_{S_\cap})\Delta_{1,0}(D)\right]=\o(n^{-1/2}),\\ \P_n \left[w^*(X_{S_\cap})\Delta_{2,0}(D)\right]=\o(n^{-1/2}),\\
        \P_n \left[w^*(X_{S_\cap})\Delta_{2,1}(D)\right]=\o(n^{-1/2}).
    \end{align*}
    Showing $\P_n \left[w^*(X_{S_\cap})\Delta_{3,1}(D)\right]=\o(n^{-1/2})$ is slightly different. Note that by Hölder's inequality, with $q'>1$ such that $1/q' + 2/q=1$,
    \begin{align*}
        \P_n \left[w^*(X_{S_\cap})\Delta_{3,1}(D)\right]
        &\le \|w^*(X_{S_\cap})\|_{L^{q'}(\P_n)}\left\|\frac{1}{\wh{e}(X_{S_k})}-\frac{1}{e(X_{S_k})}\right\|_{L^{q}(\P_n)}\left\|\wh{\mu}_1(X_{S_k})-\mu_1(X_{S_k})\right\|_{L^{q}(\P_n)}\\
    &=\o(n^{-1/2}).
    \end{align*}
    Note that since $q'=\frac{q}{q-2}$, the assumption that $\E_P[\exp(L\|g(X_{S_\cap})\|)]<\infty$ for $L\ge q'\|\lambda^*\|$ implies that $\|w^*(X_{S_\cap})\|_{L^{q'}(\P_n)}=\O(1)$.
    The proof of $\P_n \left[w^*(X_{S_\cap})\Delta_{3,0}(D)\right]=\o(n^{-1/2})$ is analogous to the above. This concludes the proof of the first conclusion, i.e., that $$\P_n \left[{w}^*(X_{S_\cap})\left(\wh\tau^\mathtt{AIPW}(X_{S_k})- \tau^\mathtt{AIPW}(X_{S_k})\right)\right]=\o\left(n^{-1/2}\right).$$
For the second conclusion, we use \cref{lemma:rate-of-weights} to say that $\|\wh{w}(X_{S_\cap})-{w}^*(X_{S_\cap})\|_{L^2(\P_n)}=\o(n^{-1/4})$. Hence we deduce using Cauchy-Schwarz inequality that 
\begin{align*}
    &\P_n \left[\left(\wh{w}(X_{S_\cap})-{w}^*(X_{S_\cap})\right)\left(\wh\tau^\mathtt{AIPW}(X_{S_k})- \tau^\mathtt{AIPW}(X_{S_k})\right)\right]\\
    &\le \|\wh{w}(X_{S_\cap})-{w}^*(X_{S_\cap})\|_{L^2(\P_n)}\left\|\wh\tau^\mathtt{AIPW}(X_{S_k})- \tau^\mathtt{AIPW}(X_{S_k})\right\|_{L^2(\P_n)}\\
    &= \o\left(n^{-1/4}\right)\o\left(n^{-1/4}\right)=\o\left(n^{-1/2}\right),
\end{align*}
which completes the proof. 
\end{proof}

\begin{lemma}[Rate of estimated transfer weights]\label{lemma:rate-of-weights}
     Under the conditions of \cref{thm:matching.adj.sets},
         $$\|\wh{w}(X_{S_\cap})-w^*(X_{S_\cap})\|_{L^2(\P_n)}:=\frac{1}{n}\sum_{i=1}^n\left(\wh{w}(X_{i,\,S_\cap})-w^*(X_{i,\,S_\cap})\right)^2 =\o(n^{-1/4}).$$
\end{lemma}
\begin{proof}
For every $x$, define $\Delta_n(x):=\wh{g}(x)^\top \wh{\lambda} - g(x)^\top \lambda^*$. By Taylor expansion, there exists a (point-dependent) $\xi(x)$ lying between $0$ and $\Delta_n(x)$ such that    
\begin{equation}\label{taylor}
    e^{\wh{g}(x)^\top \wh{\lambda} }-e^{ g(x)^\top \lambda^*}=e^{g(x)^\top \lambda^*} \Delta_n(x)+R_n(x), \quad\text{where}\quad
R_n(x):=\frac{1}{2} e^{g(x)^\top \lambda^*+\xi(x)} \Delta_n(x)^2.
\end{equation}
\paragraph{Controlling the remainder term.} Observe now that 
\begin{align*}
     &\P_n\left[ e^{g(X_{S_\cap})^\top\lambda^*}\Delta_n(X_{S_\cap})\right]\\&= (\wh{\lambda}-\lambda^*)^\top \P_n\left[ e^{g(X_{S_\cap})^\top\lambda^*} g(X_{S_\cap})\right]+(\lambda^*)^\top\P_n\left[ e^{g(X_{S_\cap})^\top\lambda^*} (\wh{g}(X_{S_\cap})-g(X_{S_\cap}))\right] \\
     &\qquad\qquad+(\wh{\lambda}-\lambda^*)^\top \P_n\left[ e^{g(X_{S_\cap})^\top\lambda^*} (\wh{g}(X_{S_\cap})-g(X_{S_\cap}))\right]\\
     &=\o(n^{-1/4}),
     \end{align*} 
where in the last equation we used Hölder's inequality and the facts that $\wh{\lambda}-\lambda^*=\o(1)$ (cf.~\cref{lemma:rate-of-lambda}), $\|\wh{g}(X_{S_\cap}) -g(X_{S_\cap})\|_{L^2(\P_n)}=\o(n^{-1/4})$ (cf.~\cref{lemma:rate-of-double-conditional-mean}), and that $g(X_{S_\cap})$ has finite moment generating function at $\lambda^*$.  To handle the remainder term, we apply Hölder's inequality with $1/q'+2/q=1$ (recall that $q'=q/(q-2)$) to deduce that
 \begin{align*}
     \P_n \left[R_n(X_{S_\cap})\right] &\stackrel{\text{Hölder}}{\le} \left\|e^{g(X_{S_\cap})^\top\lambda^*+\xi(X_{S_\cap})} \right\|_{L^{q'}(\P_n)} \|\Delta_n(X_{S_\cap})\|_{L^q(\P_n)}^2.\numberthis\label{R_nX_{S_cap}}
\end{align*}
We use Hölder's inequality and the assumption that $\wh{g}(X_{S_\cap})$ and $g(X_{S_\cap})$ have finite moment generating function at least up to radius $\rc= q'\|\lambda^*\|$ to deduce that
$$\|e^{g(X_{S_\cap})^\top\lambda^*+\xi(X_{S_\cap})} \|_{L^2(\P_n)}=\O(1).$$ On the other hand, $\wh{\lambda}-\lambda^*=\o(n^{-1/4})$ (cf.~\cref{lemma:rate-of-lambda}) and $\|\wh{g}(X_{S_\cap}) -g(X_{S_\cap})\|_{L^q(\P_n)}=\o(n^{-1/4})$ (cf.~\cref{lemma:rate-of-double-conditional-mean}) tell us that $$\|\Delta_n(X_{S_\cap})\|_{L^q(\P_n)}^2=\o(n^{-1/2}).$$ Therefore, we can conclude from \eqref{R_nX_{S_cap}} that $\P_n \left[R_n(X_{S_\cap})\right]=\o(n^{-1/2})$. 

\paragraph{Controlling the first-order term.} We can now continue from \eqref{taylor} to write that
\begin{align*}
  \P_n\left[ e^{\wh{g}(X_{S_\cap})^\top \wh{\lambda}}-e^{g(X_{S_\cap})^\top\lambda^*}\right] &= \P_n\left[ e^{g(X_{S_\cap})^\top\lambda^*}\Delta_n(X_{S_\cap})\right] + \P_n \left[R_n(X_{S_\cap})\right]=\o(n^{-1/4}).\label{lemma3.1}\numberthis
\end{align*}
To see why this helps towards the desired conclusion, note that the triangle inequality gives
\begin{multline}
    \|\wh{w}(X_{S_\cap})-w^*(X_{S_\cap})\|_{L^2(\P_n)}\\
    \le \left\|\frac{e^{\wh{g}(X_{S_\cap})^\top\wh{\lambda}}-e^{g(X_{S_\cap})^\top\lambda^*}}{\P_n e^{\wh{g}(X_{S_\cap})^\top \wh{\lambda}}}\right\|_{L^2(\P_n)}
+\left\|\frac{e^{g(X_{S_\cap})^\top\lambda^*}}{\P_n e^{\wh{g}(X_{S_\cap})^\top \wh{\lambda}}}-\frac{e^{g(X_{S_\cap})^\top\lambda^*}}{\E e^{g(X_{S_\cap})^\top\lambda^*}}\right\|_{L^2(\P_n)}.\label{lemma3.15}
\end{multline}
We can bound the second term above using \eqref{lemma3.1}, as follows.
\begin{align*}
    &\left\|\frac{e^{g(X_{S_\cap})^\top\lambda^*}}{\P_n e^{\wh{g}(X_{S_\cap})^\top \wh{\lambda}}}-\frac{e^{g(X_{S_\cap})^\top\lambda^*}}{\E e^{g(X_{S_\cap})^\top\lambda^*}}\right\|_{L^2(\P_n)}\\[1.5mm]
    &= \left\|\frac{\E e^{g(X_{S_\cap})^\top\lambda^*}-\P_n e^{\wh{g}(X_{S_\cap})^\top \wh{\lambda}}}{\P_n e^{\wh{g}(X_{S_\cap})^\top \wh{\lambda}}}\cdot\frac{e^{g(X_{S_\cap})^\top\lambda^*}}{\E e^{g(X_{S_\cap})^\top\lambda^*}}\right\|_{L^2(\P_n)}\\[1.5mm]
    &\le \frac{|\P_n e^{\wh{g}(X_{S_\cap})^\top \wh{\lambda}}-\E e^{g(X_{S_\cap})^\top\lambda^*}|}{\P_n e^{\wh{g}(X_{S_\cap})^\top \wh{\lambda}}}\|w^*(X_{S_\cap})\|_{L^2(\P_n)}\\[1.5mm]
    &\lesssim \left|\P_n \left[e^{\wh{g}(X_{S_\cap})^\top \wh{\lambda}}\right]-\E e^{g(X_{S_\cap})^\top\lambda^*}\right|\\[1.5mm]
    &\lesssim \left|\P_n\left[ e^{\wh{g}(X_{S_\cap})^\top \wh{\lambda}}-e^{g(X_{S_\cap})^\top\lambda^*}\right]\right| +\O(n^{-1/2})\\[1.5mm]
    &= \o(n^{-1/4}).\tag{using \eqref{lemma3.1}}
\end{align*}
To tackle the first term in \eqref{lemma3.15}, we apply a first order Taylor expansion:
\begin{equation*}
    e^{\wh{g}(x)^\top \wh{\lambda} }-e^{ g(x)^\top \lambda^*}=e^{g(x)^\top \lambda^*+\eta(x)} \Delta_n(x), 
\end{equation*} where $\eta(x)$ lies between $0$ and $\Delta_n(x)=\wh{g}(x_{S_\cap})^\top\wh{\lambda}-g(x_{S_\cap})^\top\lambda^*$. Using this, we obtain
\begin{align*}
    & \|e^{\wh{g}(X_{S_\cap})^\top \wh{\lambda}}-e^{g(X_{S_\cap})^\top\lambda^*}\|_{L^2(\P_n)}=\left\|e^{g(X_{S_\cap})^\top \lambda^*+\eta(x)} \Delta_n(X_{S_\cap})\right\|_{L^2(\P_n)}\\[1.5mm]
    &=\left\|e^{2g(X_{S_\cap})^\top \lambda^*+2\eta(x)} \Delta_n(X_{S_\cap})^2\right\|_{L^1(\P_n)}^{1/2}\\
    &\le \left\|e^{2g(X_{S_\cap})^\top \lambda^*+2\eta(x)} \right\|_{L^{q'}(\P_n)}^{1/2}\|\Delta_n(X_{S_\cap})\|_{L^q(\P_n)}\tag{by Hölder's inequality with $1/q'+2/q=1$}\\
    &=\O(1)\o(n^{-1/4}),
    \end{align*}
using (i)  $\wh{g}(X_{S_\cap})$ and $g(X_{S_\cap})$ has finite exponential moments at least up to radius $\rc= (2\vee 2q')\|\lambda^*\|$ around the origin, (ii) $\|\wh{g}(X_{S_\cap})-g(X_{S_\cap})\|_{L^{q}(\P_n)}=\o(n^{-1/4})$ (cf.~\cref{lemma:rate-of-double-conditional-mean}), and (iii) $\wh{\lambda}-\lambda^*=\o(n^{-1/4})$ (cf.~\cref{lemma:rate-of-lambda}).
We thus conclude that $\|e^{\wh{g}(X_{S_\cap})^\top \wh{\lambda}}-e^{g(X_{S_\cap})^\top\lambda^*}\|_{L^2(\P_n)}=\o(n^{-1/4})$, which
gives us the desired result, in view of  \eqref{lemma3.1} and \eqref{lemma3.15}.
\end{proof}

\begin{lemma}[Rate of the affine weights $\wh\nu$]\label{lemma:rate-of-nu}
    Under the  conditions of \cref{thm:matching.adj.sets}, it holds that $\wh{\nu}-\nu^*=\O(n^{-1/4})$, where $$\nu_{2:K}^* = \left(\E_P\left[ w^*(X_{S_\cap})\,g(X_{S_\cap})\,g(X_{S_\cap})^\top\right]\right)^{-1}\E_P\left[ w^*(X_{S_\cap})\,g(X_{S_\cap})\left(\tau(X_{S_1})-\tauR \right)\right],$$
    and $\nu^*_1 = 1-\sum_{k=2}^K \nu^*_{k}$.
\end{lemma}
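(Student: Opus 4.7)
The plan is to reduce the claim to a matrix perturbation problem. I would write $\wh\nu_{2:K}=\wh{M}^{-1}\wh{v}$ and $\nu^*_{2:K}=M^{-1}v$, where
$$\wh{M}:=\P_n[\wh{w}(X_{S_\cap})\wh{g}(X_{S_\cap})\wh{g}(X_{S_\cap})^\top],\quad M:=\E_P[w^*(X_{S_\cap})g(X_{S_\cap})g(X_{S_\cap})^\top],$$
$$\wh{v}:=\P_n[\wh{w}(X_{S_\cap})\wh{g}(X_{S_\cap})(\wh\tau(X_{S_1})-\wh\tau_1^\mathtt{R})],\quad v:=\E_P[w^*(X_{S_\cap})g(X_{S_\cap})(\tau(X_{S_1})-\tauR)].$$
The identity $\wh\nu_{2:K}-\nu^*_{2:K}=\wh{M}^{-1}(\wh{v}-v)+(\wh{M}^{-1}-M^{-1})v$ reduces the task to (a) bounds on $\|\wh{M}-M\|$ and $\|\wh{v}-v\|$, plus (b) a uniform bound on $\|\wh{M}^{-1}\|$. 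For (b), I would note that $\wh{M}\Pto M$ follows from (a), and $M\succ 0$ because $w^*>0$ and $\E_P[g(X_{S_\cap})g(X_{S_\cap})^\top]\succ 0$ (the latter guaranteed by condition~(3) of \cref{thm:matching.adj.sets}); hence $\|\wh{M}^{-1}\|=\O(1)$.

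For $\wh{M}-M$, I would telescope
$$\wh{M}-M=\P_n[(\wh{w}-w^*)\wh{g}\wh{g}^\top]+\P_n[w^*(\wh{g}\wh{g}^\top-gg^\top)]+(\P_n-\E_P)[w^*gg^\top].$$
The first term I would bound via Cauchy--Schwarz using $\|\wh{w}-w^*\|_{L^2(\P_n)}=\o(n^{-1/4})$ from \cref{lemma:rate-of-weights} together with bounded higher moments of $\wh{g}\wh{g}^\top$; the second via Hölder's inequality using $\|\wh{g}-g\|_{L^{2+\eps}(\P_n)}=\o(n^{-1/4})$ from \cref{lemma:rate-of-double-conditional-mean} and finite MGF of $w^*$ and $g$; and the third is $\O(n^{-1/2})$ by the classical CLT. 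Hence $\|\wh{M}-M\|=\o(n^{-1/4})$. The bound on $\|\wh{v}-v\|$ proceeds similarly, but first I would exploit the first-order condition $\P_n[\wh{w}\wh{g}]=0$ (which comes from $\wh{\lambda}$ minimizing $\P_n\exp(\wh{g}^\top\lambda)$) and its population analogue $\E_P[w^*g]=0$ (from the constraint defining $w^*$) to simplify
$$\wh{v}-v=\P_n[\wh{w}\wh{g}(\wh\tau(X_{S_1})-\tau(X_{S_1}))]+\bigl(\P_n[\wh{w}\wh{g}\tau(X_{S_1})]-\E_P[w^*g\tau(X_{S_1})]\bigr).$$
The first summand is $\o(n^{-1/4})$ by Hölder using $\|\wh\tau(X_{S_1})-\tau(X_{S_1})\|_{L^{2+\eps}(\P_n)}=\o(n^{-1/4})$ (from the rate hypothesis on $\wh{\mu}_a$) together with bounded higher moments of $\wh{w}\wh{g}$; the second summand telescopes exactly like $\wh{M}-M$ above.

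Combining (a) and (b) via the perturbation identity yields $\wh\nu_{2:K}-\nu^*_{2:K}=\o(n^{-1/4})$, and the first coordinate obeys $\wh\nu_1-\nu_1^*=-\sum_{k=2}^K(\wh\nu_k-\nu_k^*)=\o(n^{-1/4})$ by the defining relations, establishing the claimed $\O(n^{-1/4})$ rate. The main obstacle I anticipate is the careful Hölder bookkeeping in the triple product $\P_n[\wh{w}\wh{g}(\wh\tau(X_{S_1})-\tau(X_{S_1}))]$, where all three factors need higher-than-$L^2$ moments to combine into an $\o(n^{-1/4})$ bound; exploiting the first-order conditions $\P_n[\wh{w}\wh{g}]=0$ and $\E_P[w^*g]=0$ at the outset is essential because it eliminates the scalar shift $\wh\tau_1^\mathtt{R}-\tauR$ (together with its own slow rate) from having to be tracked separately.
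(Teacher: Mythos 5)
Your proposal is correct and follows essentially the same route as the paper: the same perturbation identity $\wh\nu_{2:K}-\nu^*_{2:K}=\wh{M}^{-1}(\wh v-v)+(\wh{M}^{-1}-M^{-1})v$, the same telescoping of the empirical triple products, and the same inputs ($\|\wh w-w^*\|_{L^2(\P_n)}=\o(n^{-1/4})$ from \cref{lemma:rate-of-weights}, $\|\wh g-g\|_{L^2(\P_n)}=\o(n^{-1/4})$ from \cref{lemma:rate-of-double-conditional-mean}, and H\"older with the finite-MGF conditions). The one small deviation --- using the first-order conditions $\P_n[\wh w\,\wh g]=0$ and $\E_P[w^*g]=0$ to drop the centering scalars $\wh\tau_1^{\mathtt R}$ and $\tauR$ --- is valid and slightly cleaner than the paper's direct bound $|\wh\tau_1^{\mathtt R}-\tauR|\cdot\|\P_n[w^*g]\|_2=\o(n^{-1/4})\O(n^{-1/2})$, but does not change the substance of the argument.
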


\begin{proof} 
Recall that we compute $\wh{\nu}$ as $\wh\nu_1 := 1-\sum_{k=2}^K \wh\nu_{k}$ and $$\wh\nu_{2:K} := \left(\P_n\left[ \wh{w}(X_{S_\cap})\wh{g}(X_{S_\cap})\wh{g}(X_{S_\cap})^\top\right]\right)^{-1}\P_n\left[ \wh{w}(X_{S_\cap})\wh{g}(X_{S_\cap})\left(\wh\tau(X_{S_1})-\wh\tau_1^\mathtt{R}  \right)\right].$$
Note that
$\|\wh\nu_{2:K}-\nu^*_{2:K}\|_2\le \|D_n^{-1} (C_n-C)\|_2 + \|(D_n^{-1}-D^{-1})C\|_2$, where 
\begin{align*}
    C_n&=\P_n\left[ \wh{w}(X_{S_\cap})\wh{g}(X_{S_\cap})\left(\wh\tau(X_{S_1})-\wh\tau_1^\mathtt{R}  \right)\right],\quad 
    C = \E_P\left[ w^*(X_{S_\cap})\,g(X_{S_\cap})\left(\tau(X_{S_1})-\tauR\right) \right],\\
    D_n &= \P_n\left[ \wh{w}(X_{S_\cap})\wh{g}(X_{S_\cap})\wh{g}(X_{S_\cap})^\top\right],\quad\ \text{and}\ \quad
    D = \E_P\left[ w^*(X_{S_\cap}){g}(X_{S_\cap}){g}(X_{S_\cap})^\top\right].
\end{align*}
Invoking the rate assumptions on $\wh{\mu}_a(X_{S_k})-\mu_a(X_{S_k})$ and the triangle inequality, we deduce that
 $\|\wh\tau(X_{S_k})-\tau(X_{S_k})\|_{L^2(\P_n)}=\o(n^{-1/4})$. On the other hand, it follows from \cref{lemma:rate-of-double-conditional-mean} that $\|\wh{g}(X_{S_\cap})-g^*(X_{S_\cap})\|_{L^2(\P_n)}=\o(n^{-1/4})$ and \cref{lemma:rate-of-weights} gives $\|\wh{w}(X_{S_\cap})-w^*(X_{S_\cap})\|_{L^2(\P_n)}=\o(n^{-1/4})$. Combining these rate results with the algebraic identity
     \begin{multline}
         \label{eq:abc}
         \hat{a}\hat{b}\hat{c}-abc = ab(\hat{c}-c)+bc(\hat{a}-a)+ca(\hat{b}-b)+ (\hat{a}-a)(\hat{b}-b)c\\
        +(\hat{b}-b)(\hat{c}-c)a+(\hat{b}-b)(\hat{c}-c)a+(\hat{a}-a)(\hat{b}-b)(\hat{c}-c),
     \end{multline}
 we can show that $\|C_n-C\|_2=\o(n^{-1/4})$ as well as $\|D_n^{-1}-D^{-1}\|_2 = \o(n^{-1/4})$. For instance, 
\begin{align*}
    &\left\|\P_n \left[w^*(X_{S_\cap})\,g(X_{S_\cap})(\wh\tau(X_{S_1})-\wh\tau_1^\mathtt{R} -\tau(X_{S_1})-\tauR)\right]\right\|_2\\[2mm]
    &\qquad\qquad\le  \| \| w^*(X_{S_\cap}) g(X_{S_\cap})\|_{L^2(\P_n)}\|_2\|\wh\tau(X_{S_1})-\tau(X_{S_1})\|_{L^2(\P_n)}\tag{by Cauchy-Schwarz}\\[2mm]
    &\qquad\qquad\qquad\qquad+|\wh\tau_1^\mathtt{R} -\tauR| \|\P_n [w^*(X_{S_\cap})\,g(X_{S_\cap})]\|_2 \\[2mm]
    &\qquad\qquad=\O(1)\o(n^{-1/4})+\o(n^{-1/4})\O(n^{-1/2})\tag{using \eqref{rate-emp-minus-oracle-tau-k} and finite moments}\\[2mm]
    &\qquad\qquad=\o(n^{-1/4}),
\end{align*}
\begin{align*}
     &\left\|\P_n \left[w^*(X_{S_\cap})(\wh{g}(X_{S_\cap})-g(X_{S_\cap}))(\tau(X_{S_1})-\tauR)\right]\right\|_2\\[2mm]
    &\qquad\le \| w^*(X_{S_\cap}) (\tau(X_{S_1})-\tauR)\|_{L^2(\P_n)}\|_2\|\wh{g}(X_{S_\cap})-g(X_{S_\cap})\|_{L^2(\P_n)}\|_2\tag{by Cauchy-Schwarz}\\[2mm]
    &\qquad=\O(1)\o(n^{-1/4})\tag{using \cref{lemma:rate-of-double-conditional-mean} and finite moments}\\[2mm]
    &\qquad=\o(n^{-1/4}),
\end{align*}
\begin{align*}
    &\left\|\P_n \left[(\wh{w}(X_{S_\cap})-w^*(X_{S_\cap}))g(X_{S_\cap})(\tau(X_{S_1})-\tauR)\right]\right\|_2\\[2mm]
    &\qquad\le \| \wh{w}(X_{S_\cap})-w^*(X_{S_\cap})\|_{L^2(\P_n)}\|_2\|{g}(X_{S_\cap})(\tau(X_{S_1})-\tauR)\|_{L^2(\P_n)}\|_2\tag{by Cauchy-Schwarz}\\[2mm]
    &\qquad=\o(n^{-1/4})\O(1)\tag{using \cref{lemma:rate-of-weights} and finite moments}\\[2mm]
    &\qquad=\o(n^{-1/4}),
\end{align*}
which imply that $\|C_n-C\|_2=\o(n^{-1/4})$. The proof of $\|D_n^{-1}-D^{-1}\|_2=\o(n^{-1/4})$ is similar.
\end{proof}
\begin{lemma}[Taylor expansion of transfer weights]\label{lemma:Taylor-expand-weights}
    Under the conditions of \cref{thm:matching.adj.sets},
    \begin{align*}
        &\P_n \left[\left(\wh{w}(X_{S_\cap})-w^*(X_{S_\cap})\right)\,\phi(X_{S_\cap};\nu^*) \right]\\
        &=\P_n\left[ w^*(X_{S_\cap})\left(\wh{g}(X_{S_\cap})^\top\,\wh{\lambda} -g(X_{S_\cap})^\top\lambda^*\right)\,\phi(X_{S_\cap};\nu^*)\right]+\o(n^{-1/2}),
    \end{align*}
    where $\phi(X_{S_\cap};\nu^*)=\sum_{k=1}^K\nu_k^*\,\E_P[\tau(X_{S_k})\mid X_{S_\cap}]-\tauR$.
\end{lemma}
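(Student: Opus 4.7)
The plan is to exploit the zero-mean identity $\E_P[w^*(X_{S_\cap})\,f(X_{S_\cap};\nu^*)] = 0$ (which follows from $\sum_{k=1}^K \nu_k^* = 1$ together with $\E_P[w^*(X_{S_\cap})\tau(X_{S_k})] = \tauR$ for every $k$, via the tower property), combined with a first-order Taylor expansion of the exponential in the numerator of $\wh{w}$. The key observation driving the proof is that, once the main linear term $w^*\Delta_n f$ has been isolated, every other piece acquires either an extra factor whose empirical average is $\O(n^{-1/2})$ thanks to this zero-mean identity, or an extra factor of $\Delta_n$; in either case one gains enough beyond the rate already supplied by \cref{lemma:rate-of-weights} to reach $\o(n^{-1/2})$.

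Concretely, set $Z_n := \P_n[\exp(\wh{g}(X_{S_\cap})^\top \wh\lambda)]$, $Z^* := \E_P[\exp(g(X_{S_\cap})^\top \lambda^*)]$, and $\Delta_n(x) := \wh{g}(x)^\top \wh\lambda - g(x)^\top \lambda^*$. The first step is the algebraic decomposition
\[
\wh{w}(X_{S_\cap}) - w^*(X_{S_\cap}) \;=\; \frac{e^{\wh{g}(X_{S_\cap})^\top \wh\lambda} - e^{g(X_{S_\cap})^\top \lambda^*}}{Z_n} \;-\; w^*(X_{S_\cap})\cdot\frac{Z_n - Z^*}{Z_n},
\]
coupled with a second-order Taylor expansion $e^{\wh{g}^\top \wh\lambda} - e^{g^\top \lambda^*} = e^{g^\top \lambda^*}\Delta_n + R_n$, where $R_n = \tfrac{1}{2}e^{g^\top \lambda^* + \xi(X_{S_\cap})}\Delta_n^2$ for some $\xi(X_{S_\cap})$ lying between $0$ and $\Delta_n(X_{S_\cap})$.

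The proof then reduces, after multiplying by $f(X_{S_\cap};\nu^*)$ and averaging, to three error estimates. (i) The quadratic remainder $\P_n[R_n f]$ is $\o(n^{-1/2})$ by Hölder's inequality, using $\|\Delta_n\|_{L^{2+\eps}(\P_n)} = \o(n^{-1/4})$ (inherited from $\wh\lambda - \lambda^* = \o(n^{-1/4})$ via \cref{lemma:rate-of-lambda} and $\|\wh g - g\|_{L^{2+\eps}(\P_n)} = \o(n^{-1/4})$ via \cref{lemma:rate-of-double-conditional-mean}) together with the finite exponential moments of $g$ and $\wh g$ and the finite moments of $f(\cdot;\nu^*)$ guaranteed by \cref{thm:matching.adj.sets}. (ii) The denominator correction $\frac{Z_n - Z^*}{Z_n}\,\P_n[w^* f]$ is $\o(n^{-3/4})$, because $\E_P[w^* f] = 0$ forces $\P_n[w^* f] = \O(n^{-1/2})$ by the classical CLT, while $Z_n - Z^* = \o(n^{-1/4})$ is established in equation \eqref{lemma3.1} of the proof of \cref{lemma:rate-of-weights}. (iii) To replace $e^{g^\top \lambda^*}/Z_n$ by $w^*$ in the leading term, write $e^{g^\top \lambda^*}/Z_n = w^* \cdot Z^*/Z_n$ with $Z^*/Z_n - 1 = \o(n^{-1/4})$; since Cauchy--Schwarz gives $|\P_n[w^*\Delta_n f]| \le \|w^* f\|_{L^2(\P_n)}\|\Delta_n\|_{L^2(\P_n)} = \o(n^{-1/4})$, this ``multiplier correction'' is also $\o(n^{-1/2})$. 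Combining (i)--(iii) yields the claim.

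The main obstacle is estimate (i): a clean application of Hölder has to juggle three factors (the exponential envelope $e^{g^\top \lambda^* + \xi}$, the square $\Delta_n^2$, and $f$) and allocate integrability budget among $\wh\lambda - \lambda^*$, $\wh g - g$, the $L^{2+\eps}$ rates, and the finite moment generating function of $g$. This is essentially the same bookkeeping executed for the $R_n$ term inside the proof of \cref{lemma:rate-of-weights}, and I would imitate it, replacing that proof's $h(X_{S_\cap})$ by $f(X_{S_\cap};\nu^*)$, whose integrability follows from the assumption that $\E_P[\tau(X_{S_k})\mid X_{S_\cap}]$ has a finite moment generating function.
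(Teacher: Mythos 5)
Your proposal is correct and follows essentially the same route as the paper: both proofs Taylor-expand the exponential tilt to first order with a quadratic remainder $\tfrac12 e^{g^\top\lambda^*+\xi}\Delta_n^2$ controlled by Hölder's inequality and the $\o(n^{-1/4})$ rates from \cref{lemma:rate-of-lambda,lemma:rate-of-double-conditional-mean}, and both kill the normalization-constant error by pairing $Z_n-Z^*=\o(n^{-1/4})$ (equation \eqref{lemma3.1}) with the centering $\E_P[w^*(X_{S_\cap})f(X_{S_\cap};\nu^*)]=0$. The only cosmetic difference is that the paper writes $\wh w-w^*=w^*(e^{\Delta_n}-1)-\wh w\,(Z_n/Z^*-1)$ so the normalization error multiplies $\P_n[\wh w f]=\o(n^{-1/4})$, whereas yours multiplies $\P_n[w^*f]=\O(n^{-1/2})$ — an equivalent bookkeeping choice.
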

\begin{proof}
    First we observe that $\P_n \left[w^*(X_{S_\cap})\,\phi(X_{S_\cap};\nu^*) \right]=\O(n^{-1/2})$, and therefore 
    \begin{multline*}
        \left|\P_n \left[\left(\wh{w}(X_{S_\cap})-w^*(X_{S_\cap})\right)\,\phi(X_{S_\cap};\nu^*) \right]\right|\\
        \le \sum_{k=1}^K \nu^*_K \|\wh{w}(X_{S_\cap})-w^*(X_{S_\cap})\|_{L^2(\P_n)}\|\E_P[\tau(X_{S_k})\mid X_{S_\cap}]-\tauR\|_{L^2(\P_n)}=\o(n^{-1/4}).
    \end{multline*}
Consequently, $\P_n \left[\wh{w}(X_{S_\cap})\,\phi(X_{S_\cap};\nu^*) \right]
=\o(n^{-1/4})$, and thus $$\left(\frac{\P_n \left[e^{\wh{g}(X_{S_\cap})^\top \wh\lambda}\right]}{\E_P\left[e^{g(X_{S_\cap})^\top \lambda^*}\right]}-1\right)\P_n \left[\wh{w}(X_{S_\cap})\,\phi(X_{S_\cap};\nu^*) \right]\stackrel{\eqref{lemma3.1}}{=}\O(n^{-1/4})\o(n^{-1/4})=\o(n^{-1/2}).$$
This implies that
\begin{align*}
        &\P_n \left[\left(\wh{w}(X_{S_\cap})-w^*(X_{S_\cap})\right)\,\phi(X_{S_\cap};\nu^*) \right]-\P_n\left[ w^*(X_{S_\cap})\left(\wh{g}(X_{S_\cap})^\top\,\wh{\lambda} -g(X_{S_\cap})^\top\lambda^*\right)\,\phi(X_{S_\cap};\nu^*)\right]\\
        &=\P_n \left[\left(e^{\Delta_n(X_{S_\cap})}-1-\Delta_n(X_{S_\cap})\right)w^*(X_{S_\cap})\,\phi(X_{S_\cap};\nu^*) \right]+\o(n^{-1/2}),
    \end{align*}
    where $\Delta_n(x):=\wh{g}(x)^\top \wh\lambda-{g}(x)^\top \lambda^*$. It follows from the Taylor expansion we used in the proof of \cref{lemma:rate-of-weights} that 
    \begin{align*}
        &\P_n \left[\left(e^{\Delta_n(X_{S_\cap})}-1-\Delta_n(X_{S_\cap})\right)w^*(X_{S_\cap})\,\phi(X_{S_\cap};\nu^*) \right]\\[2mm]
        &=\frac{1}{2}\P_n [e^{\xi(X_{S_\cap})}\Delta_n^2(X_{S_\cap})w^*(X_{S_\cap})\,\phi(X_{S_\cap};\nu^*)]\\[2mm]
        &\lesssim \|e^{\xi(X_{S_\cap})}w^*(X_{S_\cap})\,\phi(X_{S_\cap};\nu^*)\|_{L^{q'}(\P_n)}\|\Delta_n\|_{L^{q}(\P_n)}^2 \tag{by Hölder's inequality}\\[2mm]
        &=\o(n^{-1/2}),
    \end{align*} where $1/q'+2/q=1$.
    This completes the proof. 
    \end{proof}

\begin{lemma}[An approximation of the bias-correction term]\label{lemma:bias-corr-is-close-to-oracle}
    Under the conditions of \cref{thm:matching.adj.sets}, the bias-correction term $B_n$ in \cref{algo:rewt-adj-sets} is asymptotically equivalent to the following
$$\wh{B}_n^*=\wh{\lambda}^\top \,\P_n\left[w^*(X_{S_\cap})(\Delta\tau^\mathtt{AIPW}(X)-\wh{g}(X_{S_\cap}))\,\phi(X_{S_\cap};\nu^*)\right],$$ in the sense that $B_n=\wh{B}_n^*+\o(n^{-1/2})$, where  $\phi(X_{S_\cap};\nu^*)=\sum_{k=1}^K\nu_k^*\,\E_P[\tau(X_{S_k})\mid X_{S_\cap}]-\tauR$.
\end{lemma}
\begin{proof} Recall that we define the bias-correction term $B_n$ in \cref{algo:rewt-adj-sets} as
$$B_n:=\wh{\lambda}^\top\,\P_n\left[\wh{w}(X_{S_\cap})\left(
\Delta\wh\tau^\mathtt{AIPW}(X)-\wh{g}(X_{S_\cap})\right)\left(\sum_{k=1}^K \wh\nu_k\, \wh{\E}_P\left[\wh\tau(X_{S_k})\mid X_{S_\cap}\right]-\wh\tau_1^\mathtt{R} \right)\right],$$
where $\Delta\wh\tau_k^\mathtt{AIPW}(X):=\wh{\tau}^\mathtt{AIPW}(X_{S_1})-\wh{\tau}^\mathtt{AIPW}(X_{S_k})$, and other quantities are as defined in \cref{algo:rewt-adj-sets}.  Also define 
$${B}_n^*=\wh{\lambda}^\top \,\P_n\left[w^*(X_{S_\cap})(\Delta\tau^\mathtt{AIPW}(X)-{g}(X_{S_\cap}))\,\phi(X_{S_\cap};\nu^*)\right],$$ where $\Delta\tau_{k}^\mathtt{AIPW}(X)={\tau}^\mathtt{AIPW}(X_{S_1})-{\tau}^\mathtt{AIPW}(X_{S_k})$, and recall that $g(X_{S_\cap})=\E_P[\Delta\tau(X)\mid X_{S_\cap}]=\E_P[\Delta\tau^\mathtt{AIPW}(X)\mid X_{S_\cap}]$.
Invoking \cref{lemma:rate-of-weights} and applying \cref{lemma.one}, we can conclude that 
 $$T_1:=\P_n \bigg[(\wh{w}(X_{S_\cap})-w^*(X_{S_\cap}))(\Delta\tau^\mathtt{AIPW}(X)-g(X_{S_\cap}))\bigg(\sum_{k=1}^K \nu^*_k\, {\E}_P\left[\tau(X_{S_k})\mid X_{S_\cap}\right]-\tauR\bigg)\bigg]=\o(n^{-1/2}).$$
 On the other hand, it follows by an argument similar to our proof of \cref{lemma:AIPW-approx} (using Hölder's inequality) that
$$T_2:=\P_n \left[w^*(X_{S_\cap})(\Delta\wh\tau^\mathtt{AIPW}(X)-\Delta\tau^\mathtt{AIPW}(X))\left(\sum_{k=1}^K \nu^*_k\, {\E}_P\left[\tau(X_{S_k})\mid X_{S_\cap}\right]-\tauR\right)\right]=\o(n^{-1/2}),$$
 Next, define
$$T_3:=\P_n \left[w^*(X_{S_\cap})(\Delta\tau^\mathtt{AIPW}(X)-g(X_{S_\cap}))\Delta_f(X_{S_\cap})\right],$$
where $$\Delta_f (X_{S_\cap}):= \sum_{k=1}^K \wh\nu_k\, \wh{\E}_P\left[\wh\tau(X_{S_k})\mid X_{S_\cap}\right]- \sum_{k=1}^K \nu^*_k\, {\E}_P\left[\tau(X_{S_k})\mid X_{S_\cap}\right]- (\wh\tau_1^\mathtt{R} -\tauR).$$
It follows from the assumptions, \cref{lemma:rate-of-nu}, and \eqref{rate-emp-minus-oracle-tau-k} that $$\|\Delta_f(X_{S_\cap})\|_{L^2(\P_n)}=\o(n^{-1/4}).$$ This allows us to apply \cref{lemma.one} and conclude that $T_3=\o(n^{-1/2})$.
Finally, define 
$$T_2':=\P_n \left[w^*(X_{S_\cap})(\wh{g}(X_{S_\cap})-g(X_{S_\cap}))\left(\sum_{k=1}^K \nu^*_k {\E}\left[\tau(X_{S_k})\mid X_{S_\cap}\right]-\tauR\right)\right].$$
We can use the algebraic identity $\hat{a}\hat{b}\hat{c}-abc = ab(\hat{c}-c)+bc(\hat{a}-a)+ca(\hat{b}-b)+ (\hat{a}-a)(\hat{b}-b)c+(\hat{b}-b)(\hat{c}-c)a+(\hat{b}-b)(\hat{c}-c)a+(\hat{a}-a)(\hat{b}-b)(\hat{c}-c)$ to write
$$B_n-B_n^* = \wh{\lambda}^\top (T_1+T_2-T_2'+T_3)+\o(n^{-1/2}).$$
On the other hand,
$$B_n^* - \wh{B}_n^*=\wh{\lambda}^\top \,\P_n\left[w^*(X_{S_\cap})(\wh{g}(X_{S_\cap})-g(X_{S_\cap}))\,\phi(X_{S_\cap};\nu^*)\right]=\wh{\lambda}^\top T_2'.$$
Therefore, using \cref{lemma:rate-of-lambda} and the fact that $T_1$, $T_2$ and $T_3$ are all $\o(n^{-1/2})$ to conclude that
\begin{align*}
    B_n-\wh{B}_n^* &=B_n-B_n^*+B_n^*-\wh{B}_n^*\\ 
    &=(\lambda^*)^\top (T_1+T_2+T_3)+\o(T_1+T_2+T_3)\\
    &=\o(n^{-1/2}),
\end{align*}
as desired to show.
\end{proof}

 \begin{lemma}[Asymptotic linear expansions in the parametric case]\label{lemma:rate-weights-lm-based} Define $\wh{a}_k$, $\wh{b}_k$ as in \cref{algo:rewt-adj-sets-lm-based} and $a_k$, $b_k$ as in \cref{assump:linear-model-further}. It holds under the conditions of \cref{thm:matching.adj.sets-lm-based} that $$(\wh{a}_k-a_k)=\P_n[\psi_{a_k}(D)]+\o(n^{-1/2}),$$ and $$(\wh{b}_k-b_k)=\P_n[\psi_{b_k}(D)]+\o(n^{-1/2}),$$ for each $k=1,\dots,K$, for some square integrable  functions $\psi_{a_k}$ and $\psi_{b_k}$ with $\E_P[\psi_{a_k}(D)]=0$ and $\E_P[\psi_{b_k}(D)]=0$. Moreover, there is a square integrable function $\psi_{c}$ with $\E_P[\psi_{c}(D)]=0$ such that
 \begin{align*}
     \P_n[(\wh{w}(X_{S_\cap})-w^*(X_{S_\cap}))X_{S_\cap}]&=\P_n[\psi_{c}(D)]+\o(n^{-1/2}).
 \end{align*}
\end{lemma}

\begin{proof}[Proof of \cref{lemma:rate-weights-lm-based}] This proof is long, since this result  collects all the supporting results needed in the proof of \cref{thm:matching.adj.sets-lm-based}. We divide the proof into several parts.
\paragraph{Asymptotic linear expansions for $\wh{a}_k$ and $\wh{b}_k$.} 
Recall that we first run the linear regression of $Y$ on $A$ and $X_{S_k}$ with treatment-covariate interactions, i.e.,  we obtain  $$\wh{\tau}(X_{S_k})=\wh{\tau}_k+X_{S_k}^\top \wh{\gamma}_k,$$ where
$$(\wh\delta_k,\,\wh\beta_k,\,\wh\tau_k,\,\wh\gamma_k) := \argmin_{\delta,\,\beta,\,\tau,\,\gamma}\ \sum_{i=1}^n \left(Y_i-\delta -X_{i,S_k}^\top \beta-\trt_i(\tau+X_{i,S_k}^\top \gamma)\right)^2.$$ \citet{Buja2019} show that even under model misspecification, we can use this linear regression to estimate the following: $$\tau^*(X_{S_k}):=\tau_k^*+X_{S_k}^\top \gamma_k^*,$$ where
\begin{equation}
    \label{lm-case-param-star-defn}
(\delta_k^*,\,\beta_k^*,\,\tau_k^*,\,\gamma_k^*) := \argmin_{\delta,\,\beta,\,\tau,\,\gamma}\ \E_P\left(Y-\delta -X_{S_k}^\top \beta-\trt(\tau+X_{S_k}^\top \gamma)\right)^2.
\end{equation}
For notational convenience, assume without loss of generality that $X_{S_k}$ includes the intercept term and use $\theta_k^*:=(\tau_k^*,\gamma_k^*)$. In other words, we write $$\wh{\tau}(X_{S_k})=X_{S_k}^\top \wh{\theta}_k,\qquad \text{and} \qquad\tau^*(X_{S_k})=X_{S_k}^\top \theta^*_k.$$ We next regress $\wh{\tau}(X_{S_k})$ on the common covariates $X_{S_\cap}$ and obtain $$\wh{\beta}_k := (\wh{a}_k,\wh{b}_k)^\top = (\P_n X_{S_\cap}X_{S_\cap}^\top)^{-1}\P_n X_{S_\cap}X_{S_k}^\top \wh\theta_k,$$ which is an estimate of 
${\beta}_k^* := (\E X_{S_\cap}X_{S_\cap}^\top)^{-1}\E X_{S_\cap}X_{S_k}^\top \theta_k^*$.
Note that it follows from \cref{assump:heterogeneity-lm-based} and the tower property that
\begin{equation}\label{eqn:theta-to-beta}
    {\beta}_k^*=(\E X_{S_\cap}X_{S_\cap}^\top)^{-1}\E X_{S_\cap}X_{S_k}^\top \theta_k^*=  (\E X_{S_\cap}X_{S_\cap}^\top)^{-1}\E X_{S_\cap}X_{S_\cap}^\top \beta_k^* = ({a}_k,{b}_k)^\top.
\end{equation}
We can thus write
\begin{align*}
    \wh{\beta}_k-\beta_k^* &= (\P_n X_{S_\cap} X_{S_\cap}^\top)^{-1}\P_n X_{S_\cap} X_{S_k}^\top (\wh{\theta}_k-\theta_k^*) + (\P_n X_{S_\cap} X_{S_\cap}^\top)^{-1}\P_n X_{S_\cap} X_{S_k}^\top \theta_k^* - \beta_k^*\\
    &= \mathbf{I}_n + \mathbf{II}_n + \o(\mathbf{I}_n+\mathbf{II}_n),\numberthis\label{IplusII}
\end{align*}
where $\mathbf{I}_n := (\E X_{S_\cap} X_{S_\cap}^\top)^{-1}\P_n X_{S_\cap} X_{S_k}^\top (\wh{\theta}_k-\theta_k^*)$, and $\mathbf{II}_n := (\E X_{S_\cap} X_{S_\cap}^\top)^{-1}\P_n X_{S_\cap} X_{S_k}^\top \theta_k^* - \beta_k^*$. It follows from \citet[Proposition 7.1]{Buja2019} that $$\wh{\theta}_k - \theta_k^*=\P_n[\psi_{\theta_k}(D)]+\o(n^{-1/2}),$$ for some zero mean square-integrable influence function $\psi_{\theta_k}$. To be  precise,
\begin{equation}\label{influence-function-from-Buja-et-al}
    \psi_{\theta_k}(D) := V^\top (\E W_{S_k}W_{S_k}^\top )^{-1}W_{S_k}^\top (Y-\Delta\tau_k^* -X_{S_k}^\top \beta_k^*-\trt(\tau_k^*+X_{S_k}^\top \gamma_k^*)),
\end{equation} where $(\delta_k^*,\,\beta_k^*,\,\tau_k^*,\,\gamma_k^*)$ are as defined in \eqref{lm-case-param-star-defn}, $W_{S_k}$ is the full design matrix when we run the interacted linear regression of $Y$ on $A$ and $X_{S_k}$, and $V$ is a selector matrix that selects $\theta_k^*=(\tau_k^*,\,\gamma_k^*)$ from $(\delta_k^*,\,\beta_k^*,\,\tau_k^*,\,\gamma_k^*)$. Consequently, 
$$\mathbf{I}_n = \P_n \left[(\E X_{S_\cap} X_{S_\cap})^{-1} \E X_{S_\cap} X_{S_k}^\top \psi_{\theta_k}(D)\right]+\o(n^{-1/2}).$$
On the other hand, we already have the asymptotic linear expansion for $\mathbf{II}_n$, namely, $\mathbf{II}_n=\P_n (\E X_{S_\cap}X_{S_\cap}^\top)^{-1}X_{S_\cap} X_{S_k}^\top \theta_k^*-\beta_k^*$, thanks to \eqref{eqn:theta-to-beta}. We can thus continue from \eqref{IplusII} to write
$$(\wh{a}_k-a_k, \wh{b}_k-b_k)^\top = \P_n \left[(\E X_{S_\cap} X_{S_\cap})^{-1} ((\E X_{S_\cap} X_{S_k}^\top) \psi_{\theta_k}(D)+X_{S_\cap} X_{S_k}^\top \theta_k^*)-\beta_k^*\right] + \o(n^{-1/2}).$$ 
The above display completes the proof of the conclusion that for each $k=1,2,\dots,K$, the estimators $\wh{a}_k$ and $\wh{b}_k$ admit asymptotic linear expansions given by
\begin{equation}
\begin{split}\label{influence-function-for-ak-bk}
&\wh{a}_k-a_k=\P_n \left[\psi_{a_k}(D)\right] + \o(n^{-1/2}),\quad\text{ and }\quad
\wh{b}_k-b_k=\P_n \left[\psi_{b_k}(D)\right] + \o(n^{-1/2}),\\[2mm]
  &  (\psi_{a_k}(D),\psi_{b_k}(D))^\top :=(\E X_{S_\cap} X_{S_\cap})^{-1} (\E X_{S_\cap} X_{S_k}^\top \psi_{\theta_k}(D)+X_{S_\cap} X_{S_k}^\top \theta_k^*)-(a_k, b_k)^\top,
\end{split}
\end{equation}
where $\psi_{\theta_k}$ is as defined in \eqref{influence-function-from-Buja-et-al}.

\paragraph{Rate of the difference $(\wh{g}-g)$.} We deduce from \cref{assump:linear-model-further} that $$g_k(X_{S_\cap})=(a_1-a_{k})+X_{S_\cap}^\top (b_1-b_{k}).$$ Similarly, we have $\wh{g}_k(X_{S_\cap})=(\wh{a}_1-\wh{a}_{k})+X_{S_\cap}^\top (\wh{b}_1-\wh{b}_{k})$ and thus \begin{equation}\label{eqn:g-gap-parametric}
    \wh{g}_k(X_{S_\cap})-g_k(X_{S_\cap})=(\wh{a}_1-a_1-(\wh{a}_k-a_k))+X_{S_\cap}^\top (\wh{b}_1-b_1-(\wh{b}_k-b_k)).
\end{equation} It now follows from \eqref{influence-function-for-ak-bk} that
\begin{equation}\label{rate:g-parametric}
    \|\wh{g}_k(X_{S_\cap})-g_k(X_{S_\cap})\|_{L^{q}(\P_n)}=\O(n^{-1/2}), \quad\text{for each}\ k=1,2,\dots,K.
\end{equation}
\paragraph{Asymptotic normality of $(\wh{\lambda}-
\lambda^*)$.}
Recall that $\wh{\lambda}:=\argmin_\lambda \wh{L}_n(\lambda)$ and ${\lambda}^*:=\argmin_\lambda {L}(\lambda)$, where $$\wh{L}_n(\lambda):=\P_n \exp\left(\wh{g}(X_{S_\cap})^\top \lambda\right),\quad L(\lambda)=\E_P \exp\left(g(X_{S_\cap})^\top \lambda\right).$$
It follows along the lines of the proof of \cref{lemma:rate-of-lambda} that $\wh{\lambda}\Pto\lambda^*$ and that
\begin{equation}
    \label{eqn:lambda-gap-parametric}
    \wh{\lambda}-\lambda^*=-\left[\nabla^2 \widehat{L}_n(\widetilde{\lambda})\right]^{-1} \nabla \widehat{L}_n\left(\lambda^*\right),
\end{equation}
for some $\wt{\lambda}$ lying in the segment joining $\lambda^*$ and $\wh{\lambda}$. Using $\wh{\lambda}\Pto\lambda^*$ it follows that 
\begin{equation}\label{eqn:nabla-square-parametric}
\nabla^2\widehat{L}_n(\widetilde{\lambda})=\nabla^2\widehat{L}_n(\lambda^*)+\o(1).
\end{equation} On the other hand, 
\begin{align*}
    &e^{\wh{g}(X_{S_\cap})^\top \lambda^*}\wh{g}(X_{S_\cap})\\
    &=e^{g(X_{S_\cap})^\top \lambda^*}g(X_{S_\cap})+e^{g(X_{S_\cap})^\top \lambda^*} (\wh{g}(X_{S_\cap})-g(X_{S_\cap}))\\&\qquad\qquad\qquad+ (e^{\wh{g}(X_{S_\cap})^\top \lambda^*}-e^{g(X_{S_\cap})^\top \lambda^*})\wh{g}(X_{S_\cap})\tag{by algebra}\\
    &=e^{g(X_{S_\cap})^\top \lambda^*}g(X_{S_\cap})+e^{g(X_{S_\cap})^\top \lambda^*} (\wh{g}(X_{S_\cap})-g(X_{S_\cap}))\\
    &\qquad\qquad\qquad+e^{g(X_{S_\cap})^\top \lambda^*}(\lambda^*)^\top(\wh{g}(X_{S_\cap})-g(X_{S_\cap}))\wh{g}(X_{S_\cap})\\
    &\qquad\qquad\qquad\qquad+ e^{\Xi(X_{S_\cap})^\top \lambda^*}((\lambda^*)^\top(\wh{g}(X_{S_\cap})-g(X_{S_\cap})))^2\,\wh{g}(X_{S_\cap}),\tag{by Taylor expansion}
\end{align*}
for some $\Xi(x)$ lying between $g(x)$ and $\wh{g}(x)$ pointwise. We use \eqref{eqn:g-gap-parametric}, Hölder's inequality and the fact that $g$ has finite moments to deduce that the last term in the above display is asymptotically negligible, i.e., $$\P_n[e^{\Xi(X_{S_\cap})^\top \lambda^*}((\lambda^*)^\top(\wh{g}(X_{S_\cap})-g(X_{S_\cap})))^2\,\wh{g}(X_{S_\cap})]=\o(n^{-1/2}).$$
This implies that
\begin{align*}
    \nabla \wh{L}_n(\lambda^*)&=\P_n \left[e^{\wh{g}(X_{S_\cap})^\top \lambda^*}\wh{g}(X_{S_\cap})\right]\\[2mm]
    &=\P_n \left[e^{g(X_{S_\cap})^\top \lambda^*}g(X_{S_\cap})\right]+\P_n\left[e^{g(X_{S_\cap})^\top \lambda^*}(\wh{g}(X_{S_\cap})-g(X_{S_\cap}))\right] \\[2mm]
    &\qquad\qquad\qquad +(\lambda^*)^\top \P_n\left[e^{g(X_{S_\cap})^\top \lambda^*}(\wh{g}(X_{S_\cap})-g(X_{S_\cap}))\wh{g}(X_{S_\cap})\right]+ \o(n^{-1/2})\\[2mm]
    &=\P_n \left[e^{g(X_{S_\cap})^\top \lambda^*}\wh{g}(X_{S_\cap})\left(1+(\lambda^*)^\top(\wh{g}(X_{S_\cap})-g(X_{S_\cap}))\right)\right] + \o(n^{-1/2}).
\end{align*}
Combining this with \eqref{eqn:lambda-gap-parametric} and \eqref{eqn:nabla-square-parametric} we arrive at
\begin{align*}
    \wh{\lambda}-\lambda^*
    &=\left(\E_P \left[w^* g g ^\top\right]\right)^{-1}\P_n \left[w^* \wh{g} \left(1+(\lambda^*)^\top(\wh{g} -g )\right)\right] +\o(n^{-1/2})\\[2mm]
    &=\left(\E_P \left[w^* g g ^\top\right]\right)^{-1}\P_n \left[w^* {g} \left(1+(\lambda^*)^\top(\wh{g} -g )\right)\right] \tag{replaced $\wh{g}$ with $g$}\\[2mm]
    &\qquad\qquad+\left(\E_P \left[w^* g g ^\top\right]\right)^{-1}\P_n \left[w^* (\wh{g} -g )\right] +\o(n^{-1/2})\\[2mm]
    &=\left(\E_P \left[w^* g g ^\top\right]\right)^{-1}\P_n \left[w^* {g} \right] \tag{using \eqref{influence-function-for-ak-bk} and \eqref{eqn:g-gap-parametric}}\\[2mm]
    &\qquad\qquad+\left(\E_P \left[w^* g g ^\top\right]\right)^{-1}\P_n \left[w^* (\wh{g} -g )\right] +\o(n^{-1/2}).
\end{align*}
We can now use \eqref{eqn:g-gap-parametric} along with the asymptotic linear expansion for $\wh{a}_k-a_k$ and $\wh{b}_k-b_k$ to deduce that 
$\wh{\lambda}-\lambda^*=\P_n[\psi_\lambda(D)]+\o(n^{-1/2})$ for a zero mean influence function $\psi_\lambda$ defined as:
\begin{multline}
    \label{influence-function-for-lambda}
    \psi_\lambda(D):= (\E w^*\,g\,g^\top)^{-1}\bigg[w^*(X_{S_\cap})\,g(X_{S_\cap})+(\psi_{a_1}-\psi_{a_{2:K}})(D)\\+\E_P[w^*(X_{S_\cap})X_{S_\cap}](\psi_{b_1}-\psi_{b_{2:K}})(D)\bigg],
\end{multline}
where $\psi_{a_k}$ and $\psi_{b_k}$ are as defined in \eqref{influence-function-for-ak-bk}.
In particular, it follows that
\begin{equation}\label{rate:lambda-parametric}
    \wh\lambda-\lambda^*=\O(n^{-1/2}).
\end{equation}

\paragraph{Towards the main proof.}
Returning to the main proof of the second conclusion, we write
\begin{align*}
    &\P_n\left[(\wh{w}(X_{S_\cap})-w^*(X_{S_\cap}))X_{S_\cap}\right]\\[2mm]
    &=\P_n \left[\left(\frac{e^{\wh{g}(X_{S_\cap})^\top \wh{\lambda}}}{\P_n\left[e^{\wh{g}(X_{S_\cap})^\top \wh\lambda}\right]}-\frac{e^{g(X_{S_\cap})^\top \lambda^*}}{\P_n\left[e^{\wh{g}(X_{S_\cap})^\top \wh\lambda}\right]} \right)X_{S_\cap}\right]\\[2mm]
    &\qquad\qquad\qquad\qquad\qquad\qquad+\P_n \left[\left(\frac{e^{g(X_{S_\cap})^\top\lambda^*}}{\P_n[e^{\wh{g}(X_{S_\cap})^\top \wh\lambda}]}-\frac{e^{g(X_{S_\cap})^\top\lambda^*}}{\E_P[e^{g(X_{S_\cap})^\top\lambda^*}]}\right)X_{S_\cap}\right]\\[2mm]
    &= \frac{1}{\P_n \exp(\wh{g}(X_{S_\cap})^\top \wh\lambda)}\bigg[\P_n [(e^{\wh{g}(X_{S_\cap})^\top \wh{\lambda}}-e^{g(X_{S_\cap})^\top \lambda^*})X_{S_\cap}]\\[2mm]
&\qquad\qquad\qquad\qquad\qquad\qquad-\Big(\P_n [e^{\wh{g}(X_{S_\cap})^\top \wh{\lambda}}]-\E_P[e^{g(X_{S_\cap})^\top \lambda^*}]\Big)\P_n [w^*(X_{S_\cap}) X_{S_\cap}]\bigg].\numberthis\label{proof-of-LemA1-last}
\end{align*}
In view of the above display, it suffices to show that
$\P_n[(e^{\wh{g}(X_{S_\cap})^\top \wh\lambda}-e^{g(X_{S_\cap})^\top\lambda^*})h(X_{S_\cap})]$ admits an asymptotic linear expansion for $h\in \{1,id\}$ (where $id(x)\equiv x$), which is what we do next.

\paragraph{Influence function for the numerator in \eqref{proof-of-LemA1-last}.} We apply a Taylor expansion to write
\begin{align*}
    e^{\wh{g}(X_{S_\cap})^\top \wh\lambda}-e^{g(X_{S_\cap})^\top \lambda^*}&=e^{g(X_{S_\cap})^\top \lambda^*}\Delta_n(X_{S_\cap})+R_n(X_{S_\cap}),\\ R_n(x)&= \frac{1}{2}e^{g(x)^\top \lambda^*+\xi(x)}\Delta_n(x)^2,
\end{align*}
for some $\xi(x)$ lying between $0$ and $\Delta_n(x):=\wh{g}(x)^\top \wh\lambda-g(x)^\top \lambda^*$.
Further, write 
\begin{multline} \label{decomp:delta}
    \Delta_n(X_{S_\cap})=(\wh{\lambda}-\lambda^*)^\top g(X_{S_\cap})+(\lambda^*)^\top (\wh{g}(X_{S_\cap})-g(X_{S_\cap}))+(\wh{\lambda}-\lambda^*)^\top (\wh{g}(X_{S_\cap})-g(X_{S_\cap})).
\end{multline}
Using this decomposition with the above Taylor expansion, we get
\begin{align*}
    &\P_n\left[(e^{\wh{g}(X_{S_\cap})^\top \wh\lambda}-e^{g(X_{S_\cap})^\top\lambda^*})h(X_{S_\cap})\right]\\[2mm]
    &= \P_n\left[e^{g(X_{S_\cap})^\top\lambda^*}\Delta_n(X_{S_\cap})h(X_{S_\cap})\right]+\P_n\left[R_n(X_{S_\cap})h(X_{S_\cap})\right]\\[2mm]
    &=(\wh{\lambda}-\lambda^*)^\top\P_n\left[e^{g(X_{S_\cap})^\top\lambda^*} g(X_{S_\cap})h(X_{S_\cap})\right]\\
    &\qquad\qquad+(\lambda^*)^\top \P_n\left[e^{g(X_{S_\cap})^\top\lambda^*} (\wh{g}(X_{S_\cap})-g(X_{S_\cap}))h(X_{S_\cap})\right]\\[2mm]
    &\qquad\qquad\qquad+(\wh{\lambda}-\lambda^*)^\top \P_n\left[e^{g(X_{S_\cap})^\top\lambda^*}(\wh{g}(X_{S_\cap})-g(X_{S_\cap}))h(X_{S_\cap})\right]\\[2mm]
    &\qquad\qquad\qquad\qquad+\P_n\left[R_n(X_{S_\cap})h(X_{S_\cap})\right].\label{eq:Taylor-for-numerator}\numberthis
\end{align*}
Combining the above with \eqref{influence-function-for-ak-bk}, \eqref{eqn:g-gap-parametric}, \eqref{rate:g-parametric} and \eqref{influence-function-for-lambda}, we can conclude that the first two terms in the above display have an asymptotic linear expansion, while the third term is $\o(n^{-1/2})$. To finish the proof, we only need to show that $\P_n[R_n(X_{S_\cap})h(X_{S_\cap})]=\o(n^{-1/2})$. 

\paragraph{Controlling the remainder term in \eqref{eq:Taylor-for-numerator}.} Invoking  \eqref{rate:lambda-parametric} and \eqref{decomp:delta}, 
$$\P_n[R_n(X_{S_\cap})h(X_{S_\cap})]=\P_n[e^{g(X_{S_\cap})^\top \lambda^*+\xi(X_{S_\cap})} ((\lambda^*)^\top (\wh{g}(X_{S_\cap})-g(X_{S_\cap})))^2]+\o(n^{-1/2}).$$ Finally, using Hölder's inequality with $1/q'+2/q=1$, 
\begin{align*}
    &\left\|e^{g(X_{S_\cap})^\top \lambda^*+\xi(X_{S_\cap})} ((\lambda^*)^\top (\wh{g}(X_{S_\cap})-g(X_{S_\cap})))^2\right\|_{L^2(\P_n)}\\[2mm]
    &=\bigg(\P_n\bigg[e^{2g(X_{S_\cap})^\top \lambda^*+2\xi(X_{S_\cap})} ((\lambda^*)^\top (\wh{g}(X_{S_\cap})-g(X_{S_\cap})))^2\\
    &\qquad\qquad\qquad\qquad\cdot ((\lambda^*)^\top(\wh{g}(X_{S_\cap})-g(X_{S_\cap})))
\cdot ((\lambda^*)^\top(\wh{g}(X_{S_\cap})-g(X_{S_\cap})))\bigg]\bigg)^{1/2}\\[2mm]
    &\le \left\|e^{2g(X_{S_\cap})^\top \lambda^*+2\xi(X_{S_\cap})} ((\lambda^*)^\top (\wh{g}(X_{S_\cap})-g(X_{S_\cap})))^2\right\|_{L^{q'}(\P_n)}^{1/2}\left\|\left\|\wh{g}(X_{S_\cap})-g(X_{S_\cap})\right\|_{L^{q}(\P_n)}\right\|_2\\[2mm]
    &=\o(1)\O(n^{-1/2})=\o(n^{-1/2}),
\end{align*}
which completes the proof of the fact that $\P_n[R_n(X_{S_\cap})h(X_{S_\cap})]=\o(n^{-1/2})$ for $h\in\{1,id\}$. 

\paragraph{Influence function for the numerator in \eqref{proof-of-LemA1-last} (continued from earlier).} Having shown that the remainder term in \eqref{eq:Taylor-for-numerator} is asymptotically negligible, we can write that
\begin{align*}
     &\P_n[(e^{\wh{g}(X_{S_\cap})^\top \wh\lambda}-e^{g(X_{S_\cap})^\top\lambda^*})h(X_{S_\cap})] \\[2mm]
    &=(\wh{\lambda}-\lambda^*)^\top\P_n[e^{g(X_{S_\cap})^\top\lambda^*} g(X_{S_\cap})h(X_{S_\cap})]\\[2mm]
    &\qquad\qquad+(\lambda^*)^\top \P_n[e^{g(X_{S_\cap})^\top\lambda^*} (\wh{g}(X_{S_\cap})-g(X_{S_\cap}))h(X_{S_\cap})]+\o(n^{-1/2})\\[2mm]
    &=(\wh{\lambda}-\lambda^*)^\top\P_n[e^{g(X_{S_\cap})^\top\lambda^*} g(X_{S_\cap})h(X_{S_\cap})]\\[2mm]
    &\qquad\qquad+(\lambda^*)^\top \P_n[e^{g(X_{S_\cap})^\top\lambda^*} (\wh{a}_1-a_1-(\wh{a}_{2:K}-a_{2:K}))h(X_{S_\cap})]\\[2mm]
    &\qquad\qquad+(\lambda^*)^\top \P_n[e^{g(X_{S_\cap})^\top\lambda^*} X_{S_\cap}^\top(\wh{b}_1-b_1-(\wh{b}_{2:K}-b_{2:K}))h(X_{S_\cap})]+\o(n^{-1/2})\tag{using \eqref{eqn:g-gap-parametric}}\\[2mm]
    &= \P_n[\psi_{c,h}(D)] + \o(n^{-1/2}),
\end{align*}
where $\psi_{c,h}$ is defined as follows:
\begin{multline}\label{psi-cid}
    \psi_{c,h}(D)= \E_P[e^{g(X_{S_\cap})^\top\lambda^*}g(X_{S_\cap})h(X_{S_\cap})]^\top \psi_\lambda(D)\\[2mm]
    +\E_P[(\lambda^*)^\top e^{g(X_{S_\cap})^\top\lambda^*} h(X_{S_\cap})](\psi_{a_1}-\psi_{a_{2:K}})(D) \\[2mm]
   + \E_P[(\lambda^*)^\top e^{g(X_{S_\cap})^\top\lambda^*} X_{S_\cap} h(X_{S_\cap})]^\top(\psi_{b_1}-\psi_{b_{2:K}})(D),
\end{multline}
where $\psi_{a_k}$ and $\psi_{b_k}$ are as in \eqref{influence-function-for-ak-bk} and $\psi_\lambda$ is as in \eqref{influence-function-for-lambda}.

\paragraph{Completing the main proof.}
Finally, we continue from \eqref{proof-of-LemA1-last} to write the following
\begin{align*}
   & \P_n\left[(\wh{w}(X_{S_\cap})-w^*(X_{S_\cap}))X_{S_\cap}\right] \\[2mm]
     &= \frac{1}{\P_n \exp(\wh{g}(X_{S_\cap})^\top \wh\lambda)}\bigg[\P_n [(e^{\wh{g}(X_{S_\cap})^\top \wh{\lambda}}-e^{g(X_{S_\cap})^\top \lambda^*})X_{S_\cap}]\\[2mm]
&\qquad\qquad\qquad\qquad\qquad\qquad-\Big(\P_n [e^{\wh{g}(X_{S_\cap})^\top \wh{\lambda}}]-\E_P[e^{g(X_{S_\cap})^\top \lambda^*}]\Big)\P_n [w^*(X_{S_\cap}) X_{S_\cap}]\bigg]\\[2mm]
  &= \frac{1}{\E \exp({g}(X_{S_\cap})^\top \lambda^*)}\bigg[\P_n [\psi_{c,1}(D)]-\Big(\P_n [e^{\wh{g}(X_{S_\cap})^\top \wh{\lambda}}-e^{g(X_{S_\cap})^\top \lambda^*}]\Big)\P_n [w^*(X_{S_\cap}) X_{S_\cap}]\\[2mm]
&\qquad\qquad\qquad\qquad\qquad\qquad-\Big(\P_n [e^{g(X_{S_\cap})^\top \lambda^*}]-\E_P[e^{g(X_{S_\cap})^\top \lambda^*}]\Big)\P_n [w^*(X_{S_\cap}) X_{S_\cap}]\bigg]+\o(n^{-1/2})\\[2mm]
    &=\P_n \left[\frac{\psi_{c,1}(D) - \psi_{c,id}(D)\E_P[w^*(X_{S_\cap})X_{S_\cap}]}{\E_P[\exp({g(X_{S_\cap})^\top \lambda^*})]} + (w^*(X_{S_\cap})-1)\E_P[w^*(X_{S_\cap})X_{S_\cap}]\right]+\o(n^{-1/2})\\[2mm]
    &=\P_n[\psi_c(D)]+\o(n^{-1/2}),
\end{align*}
where the influence function $\psi_c$ is defined as:
\begin{equation}
 \label{influence-function-for-c}
 \psi_c(D) := \frac{\psi_{c,1}(D) - \psi_{c,id}(D)\E_P[w^*(X_{S_\cap})X_{S_\cap}]}{\E_P[\exp({g(X_{S_\cap})^\top \lambda^*})]} + (w^*(X_{S_\cap})-1)\E_P[w^*(X_{S_\cap})X_{S_\cap}],
\end{equation}
where $\psi_{c,h}$ ($h\in\{1,id\}$) are as defined in \eqref{psi-cid}. This completes the proof.
\end{proof}

\begin{lemma}[Uniform version of \cref{lemma:rate-of-double-conditional-mean}] \label{lem:gwh}
Under the conditions of \cref{thm:matching.adj.sets-uniform}, it holds that
\[
  \sup_{P\in\sP}
  \|\wh g(X_{S_\cap}) - g(X_{S_\cap};\,P)\|_{L^q(\P_n)}
  \;=\; \o(n^{-1/4}).
\]
\end{lemma}
\begin{proof}
Write $\wh g_k - g_k = (\wh g_k - \wt g_k) + (\wt g_k - g_k)$,
where $\wt g_k(x) := \E_P[\Delta\wh\tau_k(X)\mid X_{S_\cap}=x]$.
Using condition~(1) of \cref{thm:matching.adj.sets-uniform}, we have $\sup_{P\in \sP}\|\wh g_k - \wt g_k\|_{L^q(\P_n)}=\o(n^{-1/4})$.
For the second term, we apply the triangle inequality as in the proof of \cref{lemma:rate-of-double-conditional-mean} to obtain
\[
  \|\wt g_k - g_k\|_{L^q(\P_n)}
  \;\le\;
  \sum_{l\in\{1,k\}}\sum_{a=0}^{1}
  \|\wh\mu_a(X_{S_l}) - \mu_a(X_{S_l})\|_{L^q(\P_n)}
  \;=\; \o(n^{-1/4}),
\]
uniformly in $P$ over $\sP$, again by condition~(1) of \cref{thm:matching.adj.sets-uniform}.
\end{proof}

\begin{lemma}[Uniform version of \cref{lemma:rate-of-lambda}]
\label{lem:lambdawh}
Under the conditions of \cref{thm:matching.adj.sets-uniform}, it holds that
\[
  \sup_{P\in\sP}
  \|\wh\lambda - \lambda^*(P)\|
  \;=\; \o(n^{-1/4}).
\]
\end{lemma}
\begin{proof}
Consistency $\wh\lambda\Pto\lambda^*(P)$ for each $P$ follows from
\cref{cond1,cond2} in the proof of \cref{lemma:rate-of-lambda}.
Our proof of \cref{cond1} goes through uniformly because it only relies on \cref{lemma:rate-of-double-conditional-mean} (which generalizes uniformly as in \cref{lem:gwh}) and condition (1) of \cref{thm:matching.adj.sets} (which is assumed uniformly in \cref{thm:matching.adj.sets-uniform}).
On the other hand, the proof of \cref{cond2} goes through uniformly because $L''(\xi)\succeq \kappa_0 I$ with
$\kappa_0$ uniform in $P$ (condition (iii) in \cref{assump:heterogeneity-uniform}).
For the rate, the Taylor expansion in the proof of
\cref{lemma:rate-of-double-conditional-mean} gives
\begin{equation}
    \label{taylor-lambda-U}
    \wh\lambda - \lambda^*(P) = -[\nabla^2\wh L_n(\wt\lambda(P))]^{-1}\nabla\wh L_n(\lambda^*(P)),
\end{equation}
for some $\wt\lambda$ between $\lambda^*(P)$ and $\wh\lambda$. Note that $$\sup_{P\in\sP}\left[\nabla^2\wh L_n(\wt\lambda(P))\right]^{-1} = \O(1),$$ since \cref{assump:heterogeneity-uniform} gives $\nabla^2\wh L_n(\wt\lambda) \succeq (\kappa_0/2)I$ for all large $n$, uniformly in $P\in\sP$.
For the gradient, $\E_P[\exp(g^\top\lambda^*)\,g] = 0$ gives
$$\P_n[\exp(g^\top\lambda^*)\,g] = \O(n^{-1/2})\quad \text{uniformly,}$$
since $g$ has uniformly finite exponential moments at radius $2\Lambda=2\sup_{P\in\sP}\|\lambda^*(P)\|<\infty$,
and replacing $g$ with $\wh g$ contributes
$\O(1)\cdot\|\wh g - g\|_{L^2(\P_n)} = \o(n^{-1/4})$ uniformly
by \cref{lem:gwh}.
Hence we conclude that $$\sup_{P\in \sP}\|\nabla\wh L_n(\lambda^*)\| = \o(n^{-1/4}),$$
which combined with $\sup_{P\in\sP}[\nabla^2\wh L_n(\wt\lambda)]^{-1} = \O(1)$ completes the proof from \eqref{taylor-lambda-U}.
\end{proof}

\begin{lemma}[Uniform version of \cref{lemma:AIPW-approx}]
\label{lem:aipw}
Under the conditions of \cref{thm:matching.adj.sets-uniform}, it holds 
for each $k=1,\dots,K$ that
\[
  \sup_{P\in\sP}
  \left|
    \P_n\!\bigl[w^*(X_{S_\cap};P)
      \bigl(\tauAIPWobs{k} - \tauAIPW{k}\bigr)
    \bigr]
  \right|
  \;=\; \o(n^{-1/2}),
\]
and the same holds with $(\wh w - w^*)$ replacing $w^*$.
\end{lemma}
\begin{proof}
The key to the proof of \cref{lemma:AIPW-approx} is the Chebyshev bound, and the predictors being independent of the data (achieved via cross-fitting). To see how that proof generalizes uniformly, first note that with $q'=q/(q-2)$, $$\sup_{P\in\sP}\;\|w^*(X_{S_\cap};P)\|_{L^{q'}(\P_n)} = \O(1),$$ since condition (3) of \cref{thm:matching.adj.sets-uniform} implies that $\sup_{P\in\sP}\E_P[\exp(\rc\|g(X_{S_\cap};P)\|)] < \infty$
uniformly for $\rc \ge 2q'\sup_{P\in\sP}\|\lambda^*(P)\|$. On the other hand, we have
$$\sup_{P\in\sP}\;\|\wh e^{-1}(X_{S_k})-e^{-1}(X_{S_k})\|_{L^q(\P_n)}=\o(n^{-1/4}),\quad \text{and}\quad \sup_{P\in \sP}\var_P(Y\mid A,X_{S_\cap})\le C,$$ by assumption (conditions (1) and (2) of \cref{thm:matching.adj.sets-uniform}).
Therefore, the Chebyshev bounds in the proof of \cref{lemma:AIPW-approx} also hold uniformly over $P\in\sP$, which then gives us the desired conclusion using the same arguments as in the proof of \cref{lemma:AIPW-approx}.
The proof for $(\wh w - w^*)$ is similar.
\end{proof}

\begin{lemma}[Uniform version of \cref{lemma:rate-of-weights}]
\label{lem:wwh}
Under the conditions of \cref{thm:matching.adj.sets-uniform}, it holds that
\[
  \sup_{P\in\sP}
  \|\wh w(X_{S_\cap}) - w^*(X_{S_\cap};\,P)\|_{L^2(\P_n)}
  \;=\; \o(n^{-1/4}).
\]
\end{lemma}
\begin{proof}
The leading term in the Taylor expansion \eqref{taylor} in the proof of \cref{lemma:rate-of-weights} relies on the difference
$$\Delta_n(x; P) := \wh g(x_{S_\cap})^\top\wh\lambda - g(x_{S_\cap};P)^\top\lambda^*(P).$$
Decomposing
\begin{multline*}
    \Delta_n(X_{S_\cap};P) = (\wh\lambda-\lambda^*(P))^\top g(X_{S_\cap};P)\\
           + (\lambda^*(P))^\top(\wh g(X_{S_\cap};P) - g(X_{S_\cap};P))\\
           + (\wh\lambda-\lambda^*(P))^\top(\wh g(X_{S_\cap};P) - g(X_{S_\cap};P)),
\end{multline*}
each summand is $\o(n^{-1/4})$ uniformly
using \cref{lem:gwh,lem:lambdawh} and that $\Lambda=\sup_{P\in\sP}\|\lambda^*(P)\|<\infty$ (\cref{propo:exist-n-unique-uniform}). This tells us that 
$$\sup_{P\in\sP}\|\Delta_n(X_{S_\cap};P)\|_{L^q(\P_n)}=\o(n^{-1/4}),$$
which in turn implies, using similar arguments as in the proof of \cref{lemma:rate-of-weights}, that the remainder in \eqref{taylor} is uniformly asymptotically negligible: $\P_n[R_n(X_{S_\cap};P)]=\o(n^{-1/2})$ uniformly in $P\in\sP$.

The rest of the proof of \cref{lemma:rate-of-weights} also generalizes uniformly, because it primarily relies on the following: 
\begin{enumerate}
    \item  $\wh{g}(X_{S_\cap})$ and $g(X_{S_\cap};P)$ has finite exponential moments at least up to radius $\rc= (2\vee 2q')\|\lambda^*(P)\|$ around the origin (this holds  uniformly by assumption),
    \item $\|\wh{g}(X_{S_\cap})-g(X_{S_\cap};P)\|_{L^{q}(\P_n)}=\o(n^{-1/4})$ (which holds uniformly by \cref{lem:gwh}), 
    \item $\wh{\lambda}-\lambda^*(P)=\o(n^{-1/4})$ (which holds uniformly by \cref{lem:lambdawh}). 
\end{enumerate}
This completes the proof.
\end{proof}

\begin{lemma}[Uniform version of \cref{lemma:rate-of-nu}]
\label{lem:nuwh}
Under the conditions of \cref{thm:matching.adj.sets-uniform}, it holds that
\[
  \sup_{P\in\sP}
  \|\wh\nu - \nu^*(P)\|
  \;=\; \o(n^{-1/4}).
\]
\end{lemma}
\begin{proof}
We start by writing, from the proof of \cref{lemma:rate-of-nu}, that
$$\wh\nu_{2:K} - \nu^*_{2:K} = D_n^{-1}(C_n - C) + (D_n^{-1}-D^{-1})C,$$
where $D = \E_P[w^* g g^\top]$ and
$C = \E_P[w^* g\,(\tau(X_{S_1})-\tauR)]$ (here we suppress the additional dependence on $P$ to reduce notational clutter).
Now using \cref{lem:gwh,lem:wwh} and the fact that
$\|\wh\tau(X_{S_k})-\tau(X_{S_k})\|_{L^2(\P_n)}=\o(n^{-1/4})$ uniformly
(which follows from condition~(1) of \cref{thm:matching.adj.sets-uniform}) into the algebraic identity \eqref{eq:abc}  for $\wh a\wh b\wh c - abc$ used in the proof of
\cref{lemma:rate-of-nu}, we obtain $$\|C_n-C\|=\o(n^{-1/4}),\quad{ and
}\quad\|D_n^{-1}-D^{-1}\|=\o(n^{-1/4}),$$
uniformly over $P\in\sP$.
The matrices $C$ and $D^{-1}$ are uniformly bounded since the exponential
moments at radius $r$ (condition~(3) of \cref{thm:matching.adj.sets-uniform}) control $\|w^*(X_{S_\cap};P)\|_{L^{q'}(P)}$
and $\Lambda=\sup_{P\in\sP}\|\lambda^*(P)\|<\infty$ (\cref{proof:propo:exist-n-unique-uniform}). The rest is similar to the proof of \cref{lemma:rate-of-nu}.
\end{proof}

\begin{lemma}[Uniform version of \cref{lemma:Taylor-expand-weights}]\label{lem:taylor-weights}
    Under the conditions of \cref{thm:matching.adj.sets-uniform},
    \begin{align*}
        &\P_n \left[\left(\wh{w}(X_{S_\cap})-w^*(X_{S_\cap};P)\right)\,\phi(X_{S_\cap};\nu^*(P),P) \right]\\
        &=\P_n\left[ w^*(X_{S_\cap};P)\left(\wh{g}(X_{S_\cap})^\top\,\wh{\lambda} -g(X_{S_\cap};P)^\top\lambda^*(P)\right)\,\phi(X_{S_\cap};\nu^*(P),P)\right]+R_n^{w}(P),
    \end{align*}
    where $\phi(X_{S_\cap};\nu,P)=\sum_{k=1}^K\nu_k\,\E_P[\tau(X_{S_k};P)\mid X_{S_\cap}]-\tauR(P)$, and the remainder term is uniformly asymptotically negligible: $$\sup_{P\in\sP}|R_n^w(P)|=\o(n^{-1/2}).$$
\end{lemma}

\begin{proof}
    Putting together the algebraic decompositions used in the proof of \cref{lemma:Taylor-expand-weights}, the remainder terms are controlled using \cref{lemma:rate-of-weights}, \eqref{lemma3.1} from the proof of \cref{lemma:rate-of-weights}, moment bounds and Hölder's inequality. Since \cref{lemma:rate-of-weights} generalizes uniformly as in \cref{lem:wwh} and the moment bounds also hold uniformly under the conditions of \cref{thm:matching.adj.sets-uniform}, the rest of the proof of \cref{lemma:Taylor-expand-weights} carries through uniformly over $P\in\sP$.
\end{proof}

\begin{lemma}[Uniform version of \cref{lemma:bias-corr-is-close-to-oracle}]\label{lem:bias-corr-approx}
    Under the conditions of \cref{thm:matching.adj.sets-uniform}, the bias-correction term $B_n$ in \cref{algo:rewt-adj-sets} is uniformly asymptotically equivalent to the following
$$\wh{B}_n^*(P)=\wh{\lambda}^\top \,\P_n\left[w^*(X_{S_\cap};P)(\Delta\tau^\mathtt{AIPW}(X)-\wh{g}(X_{S_\cap}))\,\phi(X_{S_\cap};\nu^*(P),P)\right],$$ where  $\phi(X_{S_\cap};\nu,P):=\sum_{k=1}^K\nu_k\,\E_P[\tau(X_{S_k};P)\mid X_{S_\cap}]-\tauR(P)$, in the sense that $$\sup_{P\in\sP}\left|B_n-\wh{B}_n^*(P)\right|=\o(n^{-1/2}).$$
\end{lemma}

\begin{proof}
  The proof of \cref{lemma:bias-corr-is-close-to-oracle} decomposes $B_n - \wh B_n^*$ into terms
$T_1, T_2, T_2', T_3$, each handled by \cref{lemma.one} together with the supporting results \cref{lemma:AIPW-approx,lemma:rate-of-double-conditional-mean,lemma:rate-of-nu,lemma:rate-of-lambda,lemma:rate-of-weights,lemma:Taylor-expand-weights,lemma:bias-corr-is-close-to-oracle}. Since we already have shown uniform versions of these results (namely, \cref{lem:lambdawh,lem:gwh,lem:aipw,lem:wwh,lem:nuwh}),
all of the bounds in the proof of \eqref{lemma:bias-corr-is-close-to-oracle} carry through uniformly over $P\in\sP$, thereby completing the proof.
\end{proof}

\section{Auxiliary results}

\begin{lemma}[I-projection with linear constraints]\label{lemma-DV}
Let $(Z, W) \sim P_0$ be a $\R^{k}\times \R^{p}$-valued random vector with finite moment generating function on $\mathbb{R}^{k+p}$. Then, 
\begin{multline*}
    \sup_P\left\{ \exp(-\dkl(P\,\|\,P_0))\,:\, P(\,\cdot\,\mid W)=P_0(\,\cdot\,\mid W) \text{ a.e.~and }\E_P[Z]=0\right\}\\=\inf_{\lambda\in\R^k} \mathbb{E}_{P_0}\left[\exp\left(\lambda^\top  \E_{P_0}[Z\mid W]\right)\right].
\end{multline*}
Moreover, if the infimum on the right-hand side is attained at $\lambda^* \in \mathbb{R}^k$ then the supremum on the left-hand side is attained at some probability distribution $Q$ given by
$$
dQ(z,w)=\frac{\exp(\E_{P_0}[Z\mid W=w]^\top \lambda^*)}{\mathbb{E}_{P_0}\left[\exp(\E_{P_0}[Z\mid W]^\top \lambda^*)\right]} d P_0(z,w),
$$
for all $z \in \R^k$, $w\in\R^p$.
\end{lemma}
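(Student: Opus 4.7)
\medskip

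\textbf{Proof plan.} The plan is to reduce the problem to an optimization over scalar weight functions, then apply Lagrangian duality. Because the constraint $P(\cdot\mid W)=P_0(\cdot\mid W)$ forces any feasible $P$ to differ from $P_0$ only through the marginal of $W$, I would write $dP(z,w)=w(w)\,dP_0(z,w)$ where $w(\cdot)\ge 0$ depends only on $w$ and satisfies $\E_{P_0}[w(W)]=1$. Under this parameterization, the chain rule for KL-divergence (and the shared conditional law) gives $\dkl(P\|P_0)=\E_{P_0}[w(W)\log w(W)]$, and the moment constraint becomes $\E_{P_0}[w(W)\,\E_{P_0}[Z\mid W]]=0$ by the tower property. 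So the LHS problem becomes: minimize $\E_{P_0}[w(W)\log w(W)]$ over nonnegative $w=w(W)$ satisfying these two linear constraints.

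Next, I would form the Lagrangian with multipliers $\mu\in\R$ (normalization) and $\lambda\in\R^k$ (moment): $\mathcal{L}(w,\mu,\lambda) = \E_{P_0}[w(W)\log w(W)] - \mu(\E_{P_0}[w(W)]-1) - \lambda^\top \E_{P_0}[w(W)\,\E_{P_0}[Z\mid W]]$. Pointwise minimization in $w$ (using strict convexity of $t\mapsto t\log t$) yields the first-order condition $1+\log w(W) = \mu + \lambda^\top \E_{P_0}[Z\mid W]$, which after enforcing $\E_{P_0}[w(W)]=1$ gives the familiar exponential tilt
\begin{equation*}
w^*(W) \;=\; \frac{\exp(\lambda^\top \E_{P_0}[Z\mid W])}{\E_{P_0}[\exp(\lambda^\top \E_{P_0}[Z\mid W])]}.
\end{equation*}
The dual objective, obtained by substituting $w^*$ back, collapses to $-\log \E_{P_0}[\exp(\lambda^\top \E_{P_0}[Z\mid W])]$. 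Maximizing this over $\lambda$ is equivalent to minimizing $\lambda\mapsto \E_{P_0}[\exp(\lambda^\top \E_{P_0}[Z\mid W])]$, which is exactly the RHS in the lemma. The stationarity condition for $\lambda^*$, namely $\E_{P_0}[\exp((\lambda^*)^\top \E_{P_0}[Z\mid W])\,\E_{P_0}[Z\mid W]]=0$, is precisely the moment constraint for the tilted $w^*$, confirming primal feasibility and identifying $dQ=w^*(W)\,dP_0$ as the claimed optimizer.

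To close the loop, I would show the two sides coincide. Weak duality (Jensen's inequality, or directly the convex conjugate identity $\E_{P_0}[e^f]=\sup_Q\{\E_Q[f]-\dkl(Q\|P_0)\}$ applied to $f(w)=\lambda^\top \E_{P_0}[Z\mid w]$) gives the inequality $\sup \le \inf$ for every $\lambda$. When the infimum is attained at $\lambda^*$, substituting $w^*$ and using the moment constraint yields $\dkl(Q\|P_0) = -\log \E_{P_0}[\exp((\lambda^*)^\top \E_{P_0}[Z\mid W])]$, so the bound is tight and $Q$ achieves the supremum. The finite-MGF hypothesis ensures the exponential integrals are well-defined and the dual objective is finite and smooth on $\R^k$.

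The main obstacle I expect is the rigorous handling of strong duality: constraints of the form $\E_P[Z]=0$ define an affine subspace of measures on which standard finite-dimensional Lagrangian arguments need justification. If $\lambda^*$ is assumed to exist (as in the statement), then the construction of $w^*$ is explicit and the verification above is self-contained; one only needs to check that pointwise first-order optimality in $w$ indeed produces a global minimum (which follows from convexity of $w\log w$) and that the complementary slackness holds. A minor technical point is the need to restrict the supremum to $P$ absolutely continuous with respect to $P_0$, which is automatic because otherwise $\dkl(P\|P_0)=\infty$ and such $P$ contribute nothing to the supremum.
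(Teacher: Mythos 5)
Your proposal is correct in its essentials but takes a genuinely different route from the paper: the paper does not prove this lemma at all, it simply cites \citet[Theorem 5.2]{DonskerVaradhan1976} (via \citet[Theorem 2]{Gupta2023}), whereas you give a self-contained derivation. Your reduction is the right one: the conditional constraint $P(\cdot\mid W)=P_0(\cdot\mid W)$ forces $dP=w(W)\,dP_0$, the KL chain rule collapses the divergence to $\E_{P_0}[w(W)\log w(W)]$, the tower property turns $\E_P[Z]=0$ into $\E_{P_0}[w(W)\,\E_{P_0}[Z\mid W]]=0$, and the Lagrangian/exponential-tilt computation correctly produces both the dual objective $\lambda\mapsto\E_{P_0}[\exp(\lambda^\top\E_{P_0}[Z\mid W])]$ and the form of $Q$; the identification of the stationarity condition for $\lambda^*$ with primal feasibility of $w^*$ is exactly the right way to verify tightness. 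What your approach buys is transparency --- it makes visible why the optimal weights in \cref{propo:exist-n-unique} have the exponential form $e^{g^\top\lambda^*}/\E[e^{g^\top\lambda^*}]$ --- at the cost of one residual gap that you yourself flag: your argument establishes weak duality ($\sup\le\inf$) in general and equality only when the infimum is attained, whereas the first display of the lemma asserts equality unconditionally; closing that case requires either an approximation argument or the full variational principle from the cited references. Since the paper only ever invokes the lemma under \cref{assump:heterogeneity}, which guarantees attainment of $\lambda^*$, this gap is immaterial to its downstream use, but it should be acknowledged if the lemma is to be stated and proved in the generality given.
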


\begin{proof}[Proof of \cref{lemma-DV}]
    One way to derive this result is by adapting the proof of \citet[Theorem 5.2]{DonskerVaradhan1976} as in the proof of \citet[Theorem 2]{Gupta2023}; see also \citet{Csiszar1975,CoverThomas2005MaximumEntropy}.
\end{proof}

\begin{lemma}\label{lemma.one} Assume that $\wh{a}(\cdot)$ is an estimator of $a(\cdot)$, independent of the data $\{U_i,V_i\}_{i=1}^n$ such that $$\frac{1}{n}\sum_{i=1}^n (\wh{a}(V_i)-a(V_i))^2 = \o(n^{-1/2}).$$ Assume further that $\{U_i,V_i\}_{i=1}^n$ are independent, and $\var(U_i\mid V_i)\le C$. Then,
    $$\frac{1}{n}\sum_{i=1}^n (\wh{a}(V_i)-a(V_i))(U_i-\E_P[U_i\mid V_i])=\o(n^{-1/2}).$$
\end{lemma}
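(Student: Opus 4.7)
The lemma is a standard cross-fit bilinear negligibility argument, exploiting that $\wh{a}$ is external to the data and that $U_i - \E[U_i\mid V_i]$ is conditionally mean-zero given $V_i$. The plan is to condition on the pair $(\wh{a}, V_{1:n})$, show the sum is conditionally centered with small conditional variance, and then pass from a conditional to an unconditional $o_p$ statement by dominated convergence.

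First, write $S_n := \frac{1}{n}\sum_{i=1}^n \bigl(\wh{a}(V_i) - a(V_i)\bigr)\bigl(U_i - \E[U_i\mid V_i]\bigr)$. Since $\wh{a}$ is independent of $\{(U_i,V_i)\}_{i=1}^n$, the differences $\wh{a}(V_i) - a(V_i)$ are $\sigma(\wh{a}, V_{1:n})$-measurable, and the residuals $U_i - \E[U_i\mid V_i]$ have conditional mean zero given $V_i$. By the tower property,
\[
\E[S_n \mid \wh{a}, V_{1:n}] = \frac{1}{n}\sum_{i=1}^n \bigl(\wh{a}(V_i) - a(V_i)\bigr)\,\E\bigl[U_i - \E[U_i\mid V_i]\,\big|\, V_i\bigr] = 0.
\]
Using the conditional independence of $U_i$'s given $V_{1:n}$ (which follows from the independence of the pairs $(U_i,V_i)$) and the uniform bound $\var(U_i\mid V_i)\le C$,
\[
\var(S_n \mid \wh{a}, V_{1:n}) = \frac{1}{n^2}\sum_{i=1}^n \bigl(\wh{a}(V_i) - a(V_i)\bigr)^2 \var(U_i\mid V_i) \le \frac{C}{n}\cdot\frac{1}{n}\sum_{i=1}^n \bigl(\wh{a}(V_i) - a(V_i)\bigr)^2.
\]
By hypothesis the last factor is $\o(n^{-1/2})$, so $n\,\E[S_n^2 \mid \wh{a}, V_{1:n}] = o_p(n^{-1/2}) = o_p(1)$.

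To finish, fix any $\varepsilon > 0$ and apply conditional Markov's inequality:
\[
\P\bigl(n^{1/2}|S_n|>\varepsilon \,\big|\, \wh{a}, V_{1:n}\bigr) \le \varepsilon^{-2}\, n\,\E[S_n^2 \mid \wh{a}, V_{1:n}] \wedge 1.
\]
The right-hand side is bounded by $1$ and tends to $0$ in probability, so dominated convergence gives $\P(n^{1/2}|S_n|>\varepsilon)\to 0$, which is exactly $S_n = o_p(n^{-1/2})$. The main (very mild) subtlety is the passage from the conditional to the unconditional statement, which is handled cleanly by the truncation $\wedge\,1$ before taking expectations; no uniformity or concentration beyond Chebyshev is needed.
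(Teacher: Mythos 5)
Your proof is correct and follows essentially the same route as the paper's: condition on $(\wh{a},V_{1:n})$, observe the sum is conditionally centered, bound the conditional variance by $\frac{C}{n^2}\sum_i(\wh{a}(V_i)-a(V_i))^2=\o(n^{-3/2})$, and conclude via Chebyshev. Your extra care in passing from the conditional to the unconditional statement (the $\wedge\,1$ truncation plus dominated convergence) is a detail the paper leaves implicit, but it is the same argument.
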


\begin{proof}[Proof of \cref{lemma.one}]
    Denote the LHS by $S_n$. Since  $\E_P[S_n\mid \wh{a},V_1,\dots,V_n]=0$, we have
    \begin{align*}
        \E_P[S_n^2\mid \wh{a},V_1,\dots,V_n]&=\var[S_n\mid \wh{a},V_1,\dots,V_n]\\
        &\lesssim \frac{1}{n^2}\sum_{i=1}^n (\wh{a}(V_i)-a(V_i))^2 =\o(n^{-1}).
    \end{align*}
        This proves, via Chebyshev's inequality, that the $S_n=\o(n^{-1/2})$.
\end{proof}

\begin{lemma}\label{lemma.two} Assume that $\wh{a}(\cdot)$ and $\wh{b}(\cdot)$ are estimators of $a(\cdot)$ and $b(\cdot)$ respectively, independent of the data $\{U_i\}_{i=1}^n$, satisfying $$\frac{1}{n}\sum_{i=1}^n (\wh{a}(U_i)-a(U_i))^2 = \o(n^{-1/2}),\quad \text{ and } \quad\frac{1}{n}\sum_{i=1}^n (\wh{b}(U_i)-b(U_i))^2 = \o(n^{-1/2}).$$ Then,
    $$\frac{1}{n}\sum_{i=1}^n (\wh{a}(U_i)-a(U_i))(\wh{b}(U_i)-b(U_i))=\o(n^{-1/2}).$$
\end{lemma}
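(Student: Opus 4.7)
The plan is to prove this by a direct application of the Cauchy--Schwarz inequality on the empirical measure $\P_n$. Specifically, viewing $\wh{a}(\cdot)-a(\cdot)$ and $\wh{b}(\cdot)-b(\cdot)$ as elements of $L^2(\P_n)$, the inner product on the left-hand side is bounded by the product of the two $L^2(\P_n)$ norms.

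Concretely, I would write
\[
\left|\frac{1}{n}\sum_{i=1}^n (\wh{a}(U_i)-a(U_i))(\wh{b}(U_i)-b(U_i))\right|
\le \left(\frac{1}{n}\sum_{i=1}^n (\wh{a}(U_i)-a(U_i))^2\right)^{1/2}\left(\frac{1}{n}\sum_{i=1}^n (\wh{b}(U_i)-b(U_i))^2\right)^{1/2}.
\]
Each of the two empirical $L^2$ quantities on the right is $\o(n^{-1/2})$ by hypothesis, so each square root is $\o(n^{-1/4})$, and their product is $\o(n^{-1/2})$, which is exactly what we need. No independence assumption between $\wh a$, $\wh b$ and $\{U_i\}$ is actually used here, since Cauchy--Schwarz is deterministic; the independence assumption is presumably stated for consistency with the usage of the lemma elsewhere (where cross-fitting ensures the rate conditions on $\P_n$ hold in the first place).

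There is essentially no obstacle. The only thing to double-check is the $\o_p$ arithmetic: since $\o_p(n^{-1/2})$ quantities become $\o_p(n^{-1/4})$ after taking square roots (as $\sqrt{\cdot}$ is continuous at $0$), and the product of two $\o_p(n^{-1/4})$ quantities is $\o_p(n^{-1/2})$, the conclusion follows immediately. The proof should be two lines.
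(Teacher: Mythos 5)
Your proof is correct and is essentially identical to the paper's own argument: a direct application of the Cauchy--Schwarz inequality in $L^2(\P_n)$, with each factor being $\o(n^{-1/4})$ by the rate hypotheses, so the product is $\o(n^{-1/2})$. Your observation that the independence assumption is not actually needed for this deterministic inequality is also accurate.
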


\begin{proof}[Proof of \cref{lemma.two}]
    This follows simply from Cauchy-Schwarz inequality:
    \begin{align*}
        &\left|\frac{1}{n}\sum_{i=1}^n (\wh{a}(U_i)-a(U_i))(\wh{b}(U_i)-b(U_i))\right|\\[2mm]
        &\le \big\|\wh{a}(U)-a(U)\big\|_{L^2(\P_n)}\big\|\wh{b}(U)-b(U)\big\|_{L^2(\P_n)}\\[2mm]
        &=\o(n^{-1/4})\o(n^{-1/4})=\o(n^{-1/2}),
    \end{align*}
    as desired.
\end{proof}

The following result states an explicit heterogeneity assumption which is sufficient to guarantee strict convexity and coercivity of the moment generating function of $g(X_{S_\cap})$ as in \cref{assump:heterogeneity}.

\begin{lemma}\label{lemma:heterogeneity-old}
    Define the conditional differences-in-contrasts $g(X_{S_\cap})$ as in \eqref{first-def-of-g}. Assume that $g$ has finite moment generating function in an open neighborhood of the origin: $\E_P[\exp(t\|g(X_{S_\cap})\|)]<\infty$ for some $t\in(0,\infty)$, and that $g$ attains all possible sign configurations with positive probability, i.e., \begin{equation}\label{eqn:heterog-assump}
      P(\sgn g(X_{S_\cap})=s)>0 \quad\text{for every}\quad s\in\{-1,+1\}^{K-1}.
    \end{equation} Then, $\psi_P(\lambda):=\E_P[\exp(\lambda^\top g(X_{S_\cap})]$ is strictly convex and coercive (i.e., $\psi_P(\lambda)\to\infty$ as $\|\lambda\|_2\to \infty$).  Coercivity ensures finiteness of the solution to \eqref{eq:def-lambda-star} while strict convexity ensures the uniqueness.
\end{lemma}


\begin{proof}[Proof of \cref{lemma:heterogeneity-old}]
  First we pick $\epsilon>0$ small enough such that for any $s\in\{-1,+1\}^{K-1}$, $$P\left(\sgn g(X_{S_\cap}) = s, \ |g(X_{S_\cap})|\ge \epsilon\ones\right)>0.$$  Denote the above event by $A_s$. Note that for any fixed $\lambda\in \R^{K-1}$, we have $g(X_{S_\cap})^\top \lambda\ge \epsilon \|\lambda\|_1$ on the event $A_s$, where $s=\sgn \lambda$ (vector of coordinate-wise signs). Consequently,
  \begin{align*}
      \E\exp(g(X_{S_\cap})^\top \lambda)
      &\ge \sum_{s\in\{-1,+1\}^{K-1}} \E\left(\exp(g(X_{S_\cap})^\top \lambda)\inds{A_s}\right)\ind{\sgn \lambda=s}\\
      &\ge \sum_{s\in\{-1,+1\}^{K-1}} \exp(\epsilon\|\lambda\|_1)\P(A_s)\ind{\sgn \lambda=s}\\
      &\ge \exp(\epsilon\|\lambda\|_1) \min_s \P(A_s),
  \end{align*}
  which diverges to $\infty$ as $\|\lambda\|_2\to\infty$. (This in conjunction with the convexity of the exponential function tells us that a finite solution $\lambda^*$ to the optimization problem \eqref{adj-sets-popln-opt} must exist.)

We now show the strict convexity of $\lambda\mapsto\E_P[\exp(\lambda^\top g(X_{S_\cap}))]$. We already know that this function is convex. For $\lambda\neq\lambda'$, the heterogeneity assumption implies that $g(X_{S_\cap})^\top \lambda\neq g(X_{S_\cap})^\top \lambda'$ with positive probability. Therefore, invoking the strict convexity of the function $t\mapsto\exp(t)$, 
  $$\alpha \exp(g(X_{S_\cap})^\top \lambda)+(1-\alpha)\exp(g(X_{S_\cap})^\top \lambda') \ge \exp(g(X_{S_\cap})^\top (\alpha \lambda + (1-\alpha)\lambda')),$$
  and the inequality is strict with positive probability. Taking expectation on both sides, we conclude that the objective $\lambda\mapsto \E\exp(g(X_{S_\cap})^\top \lambda)$ is strictly convex (and hence $\lambda^*$ is unique).
\end{proof}

\end{document}